\documentclass[11pt]{article}

\usepackage{amsmath,amssymb,amsthm,hyperref,color}
\usepackage[margin = 1in]{geometry}
\usepackage{bbm}
\usepackage{bm}
\usepackage{verbatim}
\usepackage{graphicx}
\usepackage[T1]{fontenc}
\usepackage{libertine} 

\hypersetup{colorlinks=true,citecolor=blue, linkcolor=blue, urlcolor=blue}

\newtheorem{lemma}{Lemma}[section]
\newtheorem{theorem}[lemma]{Theorem}
\newtheorem{thm}[lemma]{Theorem}

\newtheorem{fact}[lemma]{Fact}

\newtheorem{cor}[lemma]{Corollary}
\theoremstyle{remark}
\newtheorem{remark}{Remark}
\theoremstyle{definition}

\newtheorem{dfn}[lemma]{Definition}

\DeclareMathOperator*{\Var}{\mathrm{Var}}

\DeclareMathOperator*{\Exp}{\mathrm{E}}
\DeclareMathOperator*{\Ord}{\mathcal{O}}
\DeclareMathOperator*{\JC}{\Omega_\mathrm{J}}
\DeclareMathOperator*{\1}{\mathbbm{1}}

\DeclareMathOperator*{\dist}{\mathrm{dist}}
\DeclareMathOperator*{\e}{\mathrm{e}}
\DeclareMathOperator*{\boundary}{{\partial\!}}
\DeclareMathOperator*{\GD}{\mathcal{M}}
\DeclareMathOperator*{\PC}{\Omega_\mathrm{P}}

\newcommand{\PIT}{{\boldsymbol{I}_\mathcal{D}}}
\newcommand{\PI}{{\boldsymbol{I}_{\normalfont\textsc{sw}}}}
\newcommand{\PSW}{\boldsymbol{SW}}
\newcommand{\PITC}[2]{\boldsymbol{I}_{#1}^{#2}}
\newcommand{\BT}{P}

\newcommand{\df}[2]{\mathcal{E}_{#1}(#2,#2)}
\newcommand{\var}[2]{\mathrm{Var}_{#1}^{#2}}
\newcommand{\E}[2]{\mathrm{E}_{#1}^{#2}}
\newcommand{\inner}[3]{\langle #1 , #2 \rangle_{#3}}
\newcommand{\Z}{\mathbb{Z}}
\newcommand{\R}{\mathbb{R}}

\def\taumix{\tau_{\rm mix}}
\def\Tcoup{T_{\rm coup}}

\title{Spatial Mixing and Non-local Markov chains}

\author{
	Antonio Blanca\thanks{School of Computer Science, Georgia Tech, Atlanta, GA 30332.
		Email: {\tt ablanca@cc.gatech.edu}. 
		Research supported in part by NSF grants 1420934, 1563838 and 1617306.}
	\and
	Pietro Caputo\thanks{Department of Mathematics, University of Roma Tre, Largo San Murialdo 1, 00146 Roma, Italy. Email: {\tt caputo@mat.uniroma3.it}}
	\and
	Alistair Sinclair\thanks{Computer Science Division, U.C. Berkeley, Berkeley, CA 94720. Email: {\tt sinclair@cs.berkeley.edu}. Research supported in part by NSF grant 1420934.}
	\and
	Eric Vigoda\thanks{School of Computer Science, Georgia Tech, Atlanta, GA 30332.
		Email: {\tt vigoda@gatech.edu}. Research supported in part by NSF grants 1563838 and 1617306.
	\newline		
	$~~~~~~~^{\star\star}$ Part of this work was done at the Simons Institute for the Theory of Computing.}		
}

\begin{document}
	
	\maketitle	

\begin{abstract}
	
	We consider spin systems with
	nearest-neighbor interactions on 
	an $n$-vertex $d$-dimensional cube of the integer lattice graph $\Z^d$.
	We study the effects that exponential decay with distance of spin correlations,
	specifically
	the strong spatial mixing condition (SSM), 
	has on the rate of convergence to equilibrium distribution 
	of \textit{non-local} Markov chains.
	We prove that SSM implies $O(\log n)$ mixing of
	a \textit{block dynamics} whose steps can be implemented efficiently.
	We then develop 
	a methodology, 
	consisting of several new comparison inequalities concerning various block dynamics,
	that allow us to extend this result
    to other non-local dynamics.  
    As a first application of our method we prove that, if SSM holds, then the relaxation time (i.e., the inverse spectral gap) of general block dynamics is $O(r)$, where $r$ is the number of blocks. A second application of 
	our technology 
	concerns the Swendsen-Wang dynamics for the ferromagnetic Ising and Potts models.
	We show that SSM implies
	an $O(1)$ bound for the relaxation time.
	As a by-product of this implication we observe that the relaxation time of the Swendsen-Wang dynamics in square boxes of $\Z^2$ is $O(1)$
	throughout the subcritical regime of the $q$-state Potts model, for all $q \ge 2$.
	We also prove that for \textit{monotone} spin systems SSM implies that the mixing time of systematic scan dynamics is $O(\log n (\log \log n)^2)$.
	Systematic scan dynamics are widely employed in practice but have proved hard to analyze.
	Our proofs use a variety of techniques for the analysis of Markov chains including coupling, functional analysis and linear algebra.

\end{abstract}

\thispagestyle{empty}

\newpage

\setcounter{page}{1}

\section{Introduction}

Spin systems are a general framework for modeling interacting systems of simple 
elements, and arise in a wide variety of settings including statistical physics, computer
vision and machine learning (where they are often referred to as ``graphical models''
or ``Markov random fields'').  A {\it spin system\/} consists of a finite graph $G=(V,E)$
and a set~$S$ of {\it spins}; a {\it configuration\/} $\sigma\in S^V$ assigns a spin value
to each vertex $v\in V$.  For definiteness in this version of the paper, we focus on the
classical case where $G$ is a cube in the $d$-dimensional lattice $\,\Z^d$.
The probability of finding the
system in a given configuration~$\sigma$ is given by the {\it Gibbs\/} (or {\it Boltzmann\/})
distribution
\begin{equation}\label{eqn:gibbs}
\mu(\sigma) = \exp(-H(\sigma))/Z,
\end{equation}
where $Z$ is the normalizing factor (or ``partition function'') and the Hamiltonian~$H$
contains terms that depend on the spin values at each vertex (a ``vertex potential'') and
at each pair of adjacent vertices (an ``edge potential''). 
See Section~\ref{section:background} for a precise definition.

One of the most fundamental properties of spin systems is {\it (strong) 
	spatial mixing (SSM)}, which captures the fact that the correlation between spins
at different vertices decays with the distance between them (uniformly over the size of
the underlying graph~$G$)---again, see Section~\ref{section:background} for a precise definition.
SSM is closely related to the classical physical concept of a {\it phase transition}, 
which refers to the sudden disappearance of long-range correlations 
as some parameter of the system 
(typically, the edge or vertex potential) is continuously varied.\footnote{Actually
	phase transitions are usually related to a weaker notion called
	``weak spatial mixing'' (WSM); 
	in two dimensional spin systems
	WSM and SSM are known to be equivalent~\cite{MOS}.}
SSM has proved to have a number of powerful
algorithmic applications, both in the analysis of spin system dynamics (discussed in detail
below) and in the design of efficient approximation algorithms
for the partition function (a weighted generalization of approximate counting) using 
the associated self-avoiding walk trees (see, e.g., \cite{Weitz,SST,LLY,GK,SSSY,Sly,SlySun}). 

While SSM is a {\it static\/} property of a spin system, there is equal interest in 
{\it dynamic\/} properties.  By this we mean the behavior of ergodic Markov chains whose
states are the configurations of the spin system and 
whose equilibrium measure is the Gibbs distribution~(\ref{eqn:gibbs}). Such dynamics are of interest
in their own right: they provide algorithms for sampling from the Gibbs distribution
and (in many cases) are a plausible model for the evolution of the underlying system of
spins.  Of particular interest are {\it Glauber dynamics}, which at each step
pick a vertex $v\in V$ uniformly at random and update its spin in a
reversible fashion depending on the neighboring spins.   

It has been well known since pioneering work in mathematical physics from the 
late 1980s (see, e.g., \cite{Holley,AH,Zeg,SZ,MOI,MOII,Cesi}) that SSM
implies that the {\it mixing time\/} (i.e., rate of convergence) of the Glauber dynamics
is $O(|V|\log|V|)$, and hence optimal \cite{HS}; indeed, the reverse implication is also true,
so the phase transition is manifested in the mixing time of the dynamics (see, e.g., \cite{SZ,MOI,DSVW}).
The above implication was established using sophisticated functional analytic techniques,
though more recently a 
simple 
combinatorial proof was given in~\cite{DSVW} for the 
special case of {\it monotone\/} systems (where the edge potential favors pairs of equal
spins---see Section \ref{section:ss} for a precise definition). 

The intuition for these mixing time bounds comes from the fact that in the absence of long-range correlations (i.e., SSM),
the system
mimics 
the behavior of one with no interactions where 
the Gibbs distribution (\ref{eqn:gibbs}) is simply a product measure.
Consequently, \textit{local} Markov chains like the Glauber dynamics
require $\Theta(|V| \log |V|)$ steps to mix.
On the other hand, 
\textit{non-local} dynamics, where a large fraction of the configuration may be updated in a single step, could potentially converge to the Gibbs distribution much faster.
These dynamics have to contend with the possibly high computational cost of implementing a single step. However, in some cases,
non-local steps can be efficiently implemented by taking advantage 
of 
specific features of the models.

The current paper concerns 
the effects of SSM
on the rate of convergence to equilibrium of \textit{non-local} dynamics.
Our first contribution consists of
tight bounds for the mixing time and the spectral gap
of a \textit{block dynamics}.  The spectral gap
is the inverse of the \textit{relaxation time}, which measures the speed of convergence 
to the stationary distribution
when the initial configuration is reasonably close to this distribution (a ``warm start''), whereas the mixing time assumes a worst possible starting configuration. The relaxation time is another well studied notion of rate of convergence (see, e.g., \cite{JSV,KLS}). 

Let $\{A_1,\dots,A_r\}$ be a collection of sets (or blocks) such that $V = \cup_i A_i$.
A \textit{(heat-bath) block dynamics} with blocks $\{A_1,\dots,A_r\}$ is a Markov chain that in each step picks a block $A_i$ uniformly at random 
and updates the configuration in $A_i$ with a new configuration distributed according to the conditional measure in $A_i$ given the configuration in $V \setminus A_i$.
We first consider the following choice of blocks. 
Start with a regular pattern of non-overlapping $d$-dimensional lattice cubes of side $L \ll |V|^{1/d}$, with a fixed minimal distance between cubes, and let $A$ denote the union of all cubes in this pattern. By considering all possible lattice translations of the set $A\cap V$ we obtain the blocks $\{A_1,\dots,A_r\}$ where $r=O(L^d)$; see Figure~\ref{fig:tiles} on page~\pageref{fig:tiles}. Each such block $A_i$ is called a \textit{tiling} of $V$ and the associated block dynamics is called the \textit{tiled block dynamics}. We refer to Section 3 for a precise definition. 

\begin{thm}
	\label{thm:intro:parallel-block}
	When $L$ is a sufficiently large constant (independent of~$\,|V|$), SSM implies that the mixing time of the tiled block dynamics is $O(\log n)$ and that its relaxation time is $O(1)$.
\end{thm}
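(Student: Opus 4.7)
The plan is to derive both statements from a single structural fact: under SSM, the Gibbs measure $\mu$ admits block factorizations of variance and of entropy—with constants independent of $n$—relative to the tiled blocks $\{A_1,\dots,A_r\}$.

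First, I would exploit SSM and the separation between cubes inside each block. Every $A_i$ is a disjoint union of cubes of side $L$ at minimum distance $s$ from one another, and SSM implies that the conditional measure $\mu_i(\cdot):=\mu(\cdot\mid \sigma_{V\setminus A_i})$ is exponentially close in total variation to the product of its single-cube conditional marginals. Taking $L$ (and hence $s$) large enough turns this quantitative product structure into a within-block tensorization of variance and entropy, combined with the trivial $O(1)$ factorization bounds available for a single cube of constant size $L^d$.

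Second, I would upgrade this to a global factorization
$$\Var_\mu(f) \;\le\; K \sum_{i=1}^r \mu[\Var_i(f)], \qquad \Ent_\mu(f) \;\le\; K \sum_{i=1}^r \mu[\Ent_i(f)],$$
with $K=K(L,d)$ independent of $n$, by a Cesi/Martinelli--Olivieri-type recursion over the tiles: each step peels off one tile and uses SSM at the newly exposed interface to absorb the incurred error, with the translation symmetry of the tiling ensuring that all vertices are covered uniformly. These are precisely the Poincar\'e and modified log-Sobolev inequalities for the tiled block dynamics, with constants $1/(Kr)=\Omega(1)$ since $r=O(L^d)=O(1)$. Standard consequences yield relaxation time $O(1)$ and, via $\log(1/\mu_{\min})=O(n)$, mixing time $O(\log n)$.

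The hard part is the entropy factorization in the second step. Variance is cleaner because it obeys a parallelogram-type identity across overlapping regions, which makes the telescoping error in the recursion easy to control via SSM. Entropy admits no such identity, so one must invoke a Cesi-style refined splitting lemma whose interface error term is genuinely non-trivial and has to be absorbed quantitatively by the exponential decay guaranteed by SSM, and then iterated over the $r$ tiles.
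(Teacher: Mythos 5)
The crux of your argument --- the uniform block factorization of entropy under SSM, with a constant independent of $n$ --- is asserted rather than proved, and this is exactly where the proposal fails as a proof: your $O(\log n)$ mixing claim rests entirely on it (a Poincar\'e inequality alone only gives $O(n)$ mixing, as you note), yet establishing it is a result at least as deep as the theorem itself, essentially the content of the Cesi/Martinelli--Olivieri log-Sobolev machinery and of later block-factorization-of-entropy results, none of which you reprove. Moreover, the recursion you sketch --- peeling off one tile at a time and absorbing the interface error via SSM --- does not match the geometry those arguments actually use: each tile is a scattered union of small cubes, so removing one tile exposes an interface consisting of the boundaries of all its cubes simultaneously, whereas the known recursions require decompositions into overlapping slab-like blocks with thick overlaps and a multiscale iteration; nothing in your sketch explains how the exposed-interface error is absorbed with a constant that does not deteriorate over the peeling steps or with $n$. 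Two smaller points: your first step is both misstated and unnecessary, since the cubes within a tile are at distance $4$ and interactions are nearest-neighbor, so the conditional measure on a tile given the exterior is \emph{exactly} a product over its cubes (no SSM needed); and in any case total-variation closeness to a product measure does not by itself yield tensorization of variance or entropy with controlled constants, so even the approximate version of that step would need a different argument.

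For contrast, the paper proves the theorem by an elementary path coupling argument (generalizing the DSVW approach), with no functional inequalities at all: for two copies differing at a single vertex $v$, the randomly chosen tile contains $v$ with probability at least $1/2$ (erasing the disagreement), and $v$ lies on a tile's boundary with probability $O(d/L)$, in which case SSM yields a coupling of the update on the affected cube in which the expected number of new disagreements is $O(L/d)$; for $L$ a large enough constant the expected Hamming distance contracts by a factor $7/8$ per step, which simultaneously gives mixing time $O(\log n)$ and $\lambda(\mathcal{B}_\mathcal{D}) \ge 1/8$, i.e., $O(1)$ relaxation time. Your route, if completed (say by invoking the SSM-implies-log-Sobolev theorem as a black box and comparing Dirichlet forms), would prove something stronger than the theorem, but as written it reduces the statement to an unproved and substantially harder one.
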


\noindent
In practice, the steps of the tiled block dynamics can be implemented efficiently in parallel.
However,
the main significance of this result is that,
in conjunction with a comparison methodology we develop,
it allows us to establish several new results
for standard non-local dynamics.
The first consequence of this technology is a tight bound for the relaxation time
of general block dynamics.

\begin{theorem}
	\label{thm:intro:general:block}
	SSM implies that the spectral gap 
	of any heat-bath block dynamics with $r$ blocks is $\Omega(\frac{1}{r})$,
	and hence its 
	relaxation time is $O(r)$.
\end{theorem}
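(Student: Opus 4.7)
The plan is to prove the equivalent Poincar\'e inequality
\[
\Var_\mu(f) \;\le\; C \sum_{i=1}^r \mu\bigl[\Var(f \mid \mathcal{F}_{B_i^c})\bigr]
\]
with an absolute constant $C$ depending only on the SSM constants; this is equivalent to the asserted spectral gap bound because the Dirichlet form of the block dynamics is $\tfrac1r\sum_i\mu[\Var(f\mid\mathcal{F}_{B_i^c})]$. First, by Theorem~\ref{thm:intro:parallel-block}, a suitable constant side length $L$ and the associated family of $R=O(L^d)$ tilings $T_1,\ldots,T_R$ yield
\[
\Var_\mu(f) \;\le\; C_0\sum_{j=1}^R\mu\bigl[\Var(f\mid\mathcal{F}_{T_j^c})\bigr],
\]
so it is enough to establish the Dirichlet-form comparison
\[
\sum_{j=1}^R\mu\bigl[\Var(f\mid\mathcal{F}_{T_j^c})\bigr] \;\le\; C_1\sum_{i=1}^r\mu\bigl[\Var(f\mid\mathcal{F}_{B_i^c})\bigr]
\]
for an absolute constant $C_1$.

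I would prove this comparison tiling by tiling. Fix $T_j$. Because the constituent cubes of $T_j$ are separated by the wall $T_j^c$, conditionally on $\mathcal{F}_{T_j^c}$ the spins on distinct cubes are independent, so $\mu(\cdot\mid\sigma_{T_j^c})$ factorizes as a product over the cubes $C\subset T_j$. The key technical claim is that, on this product-over-cubes measure, the continuous-time block dynamics with the induced cover $\{B_i\cap T_j : B_i\cap T_j\neq\emptyset\}$ has a spectral gap bounded below by some $g_0>0$ depending only on $L$, $|S|$, and the SSM constants, uniformly in the boundary condition and the cover. Combined with the nested conditional variance inequality $(B_i\cap T_j)^c\supset B_i^c$, this gives
\[
\mu\bigl[\Var(f\mid\mathcal{F}_{T_j^c})\bigr] \;\le\; g_0^{-1}\sum_{i\,:\,B_i\cap T_j\neq\emptyset}\mu\bigl[\Var(f\mid\mathcal{F}_{B_i^c})\bigr] \;\le\; g_0^{-1}\sum_{i=1}^r\mu\bigl[\Var(f\mid\mathcal{F}_{B_i^c})\bigr],
\]
and summing over the $R=O(1)$ tilings produces the comparison with $C_1=R/g_0=O(1)$.

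The main obstacle is establishing the uniform lower bound $g_0>0$ on the gap of the block dynamics on the product-over-cubes measure. The subtlety is that a block $B_i$ may span several cubes of $T_j$, so the induced chain is not a pure product of single-cube chains and cannot be analyzed by a direct Walsh-type decomposition. I would handle this by a chaos decomposition across cube-subsets combined with a case analysis: components supported on a collection of cubes that contains at least one cube fully inside some $B_i$ receive a full-variance contraction from that block, while all other components reduce to a single-cube block Poincar\'e inequality on a constant-size state space, where SSM supplies a uniform gap lower bound by preventing degeneracy of the conditional single-cube measures. Ensuring that $g_0$ remains uniformly positive across all block-cube intersection patterns---in particular that the contraction contributions on the ``partial'' components do not depend on the number of blocks intersecting each cube---is the central technical hurdle of the argument.
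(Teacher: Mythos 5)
Your overall architecture (reduce to the constant family of tilings via Theorem~\ref{thm:intro:parallel-block}, then compare tiling-by-tiling to the given blocks using monotonicity of conditional variance) is sound in its first and last steps, and those steps do match tools the paper uses (Fact~\ref{fact:var-monotonicity}). But there is a genuine gap at exactly the point you flag: the claimed uniform lower bound $g_0$ on the gap of the block dynamics induced by $\{B_i\cap T_j\}$ on the product-over-cubes measure $\mu(\cdot\mid\sigma_{T_j^c})$ is the entire difficulty, and nothing in your sketch establishes it. When a block $B_i$ meets many cubes of $T_j$, the heat-bath update of $B_i\cap T_j$ is a \emph{synchronized} simultaneous update of all its per-cube pieces, not a product chain, and the natural cube-by-cube reduction fails quantitatively: for a product measure the reverse tensorization $\sum_k \mathrm{E}\bigl[\mathrm{Var}_{D\cap C_k}(f)\bigr]\le C\,\mathrm{E}\bigl[\mathrm{Var}_{D}(f)\bigr]$ is false with $C=O(1)$ (take $f=\prod_k g_k$ with mean-zero, normalized factors; the left side is of order $N$ times the right side, where $N$ is the number of cubes met by $D$), so combining per-cube Poincar\'e inequalities with per-block monotonicity loses a factor of order $n/L^d$ instead of $O(1)$. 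Your proposed ``chaos decomposition across cube-subsets with a case analysis'' does not engage with this loss, and you yourself label it the unresolved central hurdle; as written, the key lemma is asserted, not proved. (For exactly two complementary blocks a uniform bound of this type can be proved, e.g.\ by maximal correlation or by the paper's scan trick, but your argument needs it for arbitrary covers, uniformly in the intersection pattern.)

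The paper's proof avoids ever needing such a statement for general covers. It first works at the level of the whole volume: by Fact~\ref{fact:var-monotonicity} one may assume the $A_i$ are disjoint and lower bound each $\mathcal{E}_{A_i}$ by $\tfrac12\bigl(\mathcal{E}_{A_i\cap V_{\rm e}}+\mathcal{E}_{A_i\cap V_{\rm o}}\bigr)$; since even (resp.\ odd) portions of disjoint blocks are at distance at least $2$, the corresponding heat-bath updates commute, and Fact~\ref{fact:var-tensorization} gives $\sum_i\mathcal{E}_{A_i\cap V_{\rm e}}\ge\mathcal{E}_{V_{\rm e}}$ and likewise for $V_{\rm o}$, whence $\mathcal{E}_{\mathcal{B}_\mathcal{A}}(f,f)\ge\tfrac1r\mathcal{E}_{\mathcal{B}_{\rm eo}}(f,f)$. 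This reduces the theorem to the single even/odd two-block dynamics, and only for that fixed cover does the paper descend to the tilings (Lemma~\ref{lemma:block:even-odd}): there the restricted chain is still not a product chain, but its three-step power is compared to the scan $P_{\rm e}P_{\rm o}P_{\rm e}$, which \emph{does} factorize over the cubes and is handled by Lemma~\ref{lemma:sw:prod} plus a crude coupling on constant-size cubes, together with Lemmas~\ref{lemma:parallel-block} and~\ref{lemma:heat-bath-parallel-block}. To salvage your route you would either have to prove your uniform $g_0$ claim for arbitrary covers of a product-over-cubes measure (a nontrivial uniform block Poincar\'e statement), or insert the even/odd splitting and commutation step, which is precisely what collapses an arbitrary cover to a case where the product structure can be exploited.
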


\noindent
We observe that there are no restrictions on the geometry of the blocks $A_i$ in this theorem, other than $V = \cup_i A_i$. 
This optimal bound for the spectral gap was known before only for certain specific collections of blocks (see, e.g., \cite{Mart,DSVW}), 
and previous analytic methods apparently do not apply to the general setting. 

A second application of our techniques concerns the so-called
{\it Swendsen-Wang (SW)\/} dynamics~\cite{SW}. The SW dynamics is a widely studied
reversible dynamics for the ferromagnetic Ising and Potts models, which are among
the most important and classical of all spin systems. In the {\it ferromagnetic $q$-state Potts model},
there are $q$ spin values and the edge potential favors equal spins on neighbors.
More precisely, $\mu(\sigma) \propto \exp(\beta a(\sigma))$ where $a(\sigma)$
is the number of edges connecting vertices
with the same spin values in $\sigma$, and $\beta > 0$ is a parameter of the model.
The {\it Ising model\/} is just the special case
$q=2$. 

The SW dynamics is non-local, and updates the entire configuration in a single
step, according to a scheme inspired by the related random-cluster model. (The exact definition of this dynamics is given in Section \ref{section:sw}.)
We prove that the relaxation time of the SW dynamics is $\Omega(1)$, provided SSM holds.  
More formally,
let $\PSW$ be the transition matrix of the Swendsen-Wang dynamics 
for the Potts model on an $n$-vertex cube in $\Z^d$, and let $\lambda(\PSW)$ denote its spectral gap.

\begin{thm}
	\label{thm:intro:sw}
	For all $q \ge 2$, SSM implies that $\lambda(\PSW) = \Omega(1)$; hence the  
	relaxation time of the SW dynamics is $O(1)$.
\end{thm}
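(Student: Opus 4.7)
The plan is to deduce this theorem from Theorem~\ref{thm:intro:parallel-block} via a Dirichlet-form comparison between the Swendsen--Wang dynamics and the tiled block dynamics. By Theorem~\ref{thm:intro:parallel-block}, under SSM (and for a sufficiently large constant block side-length $L$), the tiled block dynamics has spectral gap $\Omega(1)$. Hence it suffices to exhibit a universal constant $c>0$ such that
\[
\mathcal{E}_{\PSW}(f,f) \ \ge\ c\cdot \mathcal{E}_{\text{tiled}}(f,f) \qquad \text{for every } f\colon S^V \to \R,
\]
since this yields $\lambda(\PSW) \ge c\cdot\lambda(\text{tiled}) = \Omega(1)$.

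To make this comparison, I would first unpack the tiled Dirichlet form using the product structure of the tiles. The tiled block dynamics picks one of $r = O(L^d) = O(1)$ tilings $A_i$ uniformly and performs a heat-bath update, and each $A_i$ is a disjoint union of side-$L$ cubes $B_{i,j}$ whose lattice neighbors all lie in $V \setminus A_i$. Consequently, the conditional Gibbs measure on $A_i$ given the configuration on $V\setminus A_i$ factorizes as a product over $j$, and
\[
\mathcal{E}_{\text{tiled}}(f,f) \ =\ \frac{1}{r}\sum_{i,j}\mu\!\left[\mathrm{Var}_{B_{i,j}}\!\left(f \,\middle|\, V\setminus B_{i,j}\right)\right].
\]
It therefore suffices to lower-bound $\mathcal{E}_{\PSW}(f,f)$ by a constant multiple of the sum $\sum_{B}\mu[\mathrm{Var}_{B}(f\mid V\setminus B)]$, where $B$ ranges over all side-$L$ cubes in $V$.

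The core step is this global-to-local comparison, which I would approach through the joint spin--bond representation of the SW step. Because each $B$ has constant size, the SW dynamics restricted to $B$ with any boundary condition mixes in $O(1)$ steps. This suggests running $k=O(1)$ SW steps and decoupling their effects on each cube via a carefully chosen event on the bond configuration across the thin ``interstitial'' edges between cubes. Combining such a decoupling with the constant-time local mixing of SW should yield a Dirichlet-form bound $\mathcal{E}_{\PSW^k}(f,f) \ge c_1\, \mathcal{E}_{\text{tiled}}(f,f)$. One then invokes the telescoping identity $I - \PSW^k = (I-\PSW)\sum_{i=0}^{k-1}\PSW^i$, applied to a lazy version of $\PSW$ so that the inner sum is a PSD operator of norm at most $k$, giving the standard bound $\mathcal{E}_{\PSW^k}(f,f) \le k\,\mathcal{E}_{\PSW}(f,f)$. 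Together these deliver the desired comparison with $c = c_1/k$.

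The main obstacle is the decoupling step. Since SW clusters can span arbitrarily large regions, the updates induced by $k$ SW steps on different cubes are not genuinely independent, and one must identify a positive-probability bond event that severs long-range links across the tile boundaries, then match the resulting distribution on each cube with the local heat-bath conditional measure given the boundary spins. Handling this carefully---and justifying that the bond event has probability bounded below by a constant depending only on $L,\beta,d,q$---is the technical heart of the argument, and is the step where SSM may re-enter to control residual boundary correlations introduced by the decoupling.
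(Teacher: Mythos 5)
Your high-level plan (bound $\lambda(\PSW)$ below by the gap of a tiled block dynamics, exploiting that the tiles are unions of non-interacting constant-size cubes, and then invoke Theorem~\ref{thm:intro:parallel-block}) matches the paper's strategy, but the step you yourself identify as the ``technical heart''---lower-bounding $\mathcal{E}_{\PSW}(f,f)$ by a constant times the localized, per-cube Dirichlet form---is exactly the crux, and it is left unproven. Moreover, the route you sketch for it is problematic. Conditioning on a bond event that severs all clusters crossing tile boundaries cannot be done globally at constant cost: a tiling contains $\Theta(n/L^d)$ cubes, so the event that all interstitial monochromatic edges are closed has probability exponentially small in $n$; if instead you sever cube by cube, the severing events for different cubes (and across your $k$ SW steps) are correlated, and conditioning on them biases the joint bond--spin distribution. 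In addition, even on the severing event, one (or $k=O(1)$) SW step(s) inside a cube with frozen boundary spins is not a heat-bath update from $\mu(\cdot\mid\tau)$, so matching the resulting kernel against the conditional Gibbs variances $\var{B}{\tau}(f)$ that appear in the tiled heat-bath Dirichlet form requires an additional comparison lemma that your proposal does not supply. (The telescoping/laziness step $\mathcal{E}_{\PSW^k}\le k\,\mathcal{E}_{\PSW}$ is fine and standard; it is not where the difficulty lies.)

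The paper circumvents all of this with a different device: an \emph{isolated-vertices} variant $\PI$ of the SW dynamics (re-color only vertices isolated in the percolation step) and its tiled version $\PIT$. Writing $\PSW=TRT^*$, $\PI=TQT^*$, $\PIT=\frac1m\sum_k TQ_kT^*$ in the Edwards--Sokal joint space, the idempotent-operator identities $R=QRQ$ and $Q=Q_kQQ_k$ plus Cauchy--Schwarz give, with no decoupling event at all,
\[
\inner{f}{\PSW f}{\pi}\ \le\ \inner{f}{\PI f}{\pi}\ \le\ \inner{f}{\PIT f}{\pi},
\]
hence $\lambda(\PSW)\ge\lambda(\PI)\ge\lambda(\PIT)$ (Lemmas~\ref{lemma:sw:gap-ineq-1} and \ref{lemma:sw:gap-ineq-2}). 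Then the generalized comparison Lemma~\ref{lemma:parallel-block}---which is needed precisely because the localized SW-type update is \emph{not} heat-bath---reduces $\lambda(\PIT)$ to $\lambda(\mathcal{B}_\mathcal{D})\ge 1/8$ (Lemma~\ref{lemma:heat-bath-parallel-block}, the only place SSM enters) times the worst gap of the conditional chains $\PITC{k}{\tau}$, which factor as product chains over the constant-size cubes and are handled by a crude one-step coupling (Lemmas~\ref{lemma:sw:prod} and \ref{lemma:local-gap}). To repair your argument you would either need to carry out the cluster-severing decoupling rigorously (overcoming the correlation and distribution-matching issues above), or adopt a mechanism like the isolated-vertices chain that makes the global-to-local comparison purely functional-analytic; as written, the proposal has a genuine gap.
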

\noindent
This optimal bound for the spectral gap is a substantial improvement over the best previous result due to Ullrich~\cite{Ullrich1}, 
where, in $\Z^d$, SSM was shown to imply that $\lambda(\PSW) =  \Omega(n^{-1})$. 
For earlier related work
in $\Z^d$ see \cite{MOSc,CF}. 
Tight spectral gap bounds such as ours for the SW dynamics were known previously only  in the \textit{mean-field} setting, where the graph $G$ is the complete graph \cite{LNNP,GSV,BSmf}.
For other relevant work see \cite{GuoJ},
where Guo and Jerrum proved that when~$q = 2$ the SW dynamics
mixes in polynomial time on \textit{any} graph.
We note that
our spectral gap result does not immediately imply a $\mathrm{polylog}(n)$ bound on the
mixing time, as one might hope; this is because there is an inherent penalty of
$O(n)$ in relating spectral gap to mixing time, so the mixing time bound
implied by Theorem~\ref{thm:intro:sw} is $O(n)$.

In two dimensions
SSM is known to hold for all $q \ge 2$ and all $\beta < \beta_c(q)$,
where $\beta_c(q) = \log (1+\sqrt{q})$ 
is the uniqueness threshold; this is a consequence of the results in \cite{BDC,KA1,MOS}. 
Therefore, 
we have the following interesting corollary of Theorem \ref{thm:intro:sw}.

\begin{cor}
	\label{cor:sw:intro}
	In an $n$-vertex square box of $\,\Z^2$, for all $q \ge 2$ and all $\beta <\beta_c(q)$ we have
	$\lambda(\PSW) = \Omega(1)$; hence the relaxation time of the SW dynamics is $O(1)$.
\end{cor}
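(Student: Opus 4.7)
The corollary is essentially a direct combination of Theorem \ref{thm:intro:sw} with known results on the static properties of the two-dimensional Potts model, so my proof plan is to verify that the SSM hypothesis of Theorem \ref{thm:intro:sw} is satisfied throughout the claimed parameter range $\beta<\beta_c(q)$ on square boxes of $\Z^2$, and then simply invoke that theorem.

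First, I would recall the chain of results that establishes SSM in this regime. The work of Beffara and Duminil-Copin \cite{BDC} identifies $\beta_c(q)=\log(1+\sqrt{q})$ as the critical inverse temperature for the $q$-state Potts model on $\Z^2$ and shows, via the corresponding random-cluster representation, that in the subcritical regime $\beta<\beta_c(q)$ the two-point correlations decay exponentially in the distance between vertices, uniformly in the underlying finite box and the boundary condition. Combined with the analysis of \cite{KA1}, this gives exponential decay of truncated correlations and in particular weak spatial mixing (WSM) for the Potts Gibbs measure on square boxes of $\Z^2$ for every $\beta<\beta_c(q)$ and every $q\ge 2$.

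The second ingredient is the equivalence of WSM and SSM in two dimensions, established by Martinelli, Olivieri and Schonmann \cite{MOS}: any planar spin system with finite range interaction satisfying WSM in sufficiently thick rectangles also satisfies SSM. Applying this to the Potts model on $\Z^2$ upgrades the WSM provided by \cite{BDC,KA1} to SSM throughout the subcritical phase $\beta<\beta_c(q)$, uniformly in the side of the square box.

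Having established that SSM holds under the hypotheses of the corollary, I would conclude by applying Theorem \ref{thm:intro:sw} to obtain $\lambda(\PSW)=\Omega(1)$ and therefore a relaxation time of $O(1)$ for the Swendsen--Wang dynamics. The only non-trivial step is the importation of the planar SSM theory, and the main subtlety there is checking that the formulation of SSM used by \cite{MOS} is compatible with the one required by Theorem \ref{thm:intro:sw} (as made precise in Section~\ref{section:background}); everything else is a mechanical combination of already-cited results, and there is no additional analytic work beyond Theorem \ref{thm:intro:sw} itself.
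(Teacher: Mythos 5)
Your proposal is correct and matches the paper's own argument: the paper likewise deduces Corollary~\ref{cor:sw:intro} directly from Theorem~\ref{thm:intro:sw} together with the fact that SSM holds on $\Z^2$ for all $q\ge 2$ and $\beta<\beta_c(q)$, citing the same sources \cite{BDC,KA1,MOS} (exponential decay in the subcritical phase plus the planar WSM--SSM equivalence). Your additional remark about checking compatibility of the SSM formulations is a reasonable point of care but requires no new work beyond what Section~\ref{section:background} already provides.
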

\noindent
In $\,\Z^2$, Ullrich's result \cite{Ullrich1} implies that the relaxation time of the SW dynamics is
$O(n)$ for $\beta < \beta_c(q)$,
$O(n^2 \log n)$ for $\beta > \beta_c(q)$, and
at most polynomial in $n$ for $\beta = \beta_c(q)$ and $q=2$.
Recently, Gheissari and Lubetzky \cite{GL,GL17},
using the
results of Duminil-Copin et al.\ \cite{DCST,DGHMT} settling the continuity of phase transition,
analyzed the dynamics at the critical point $\beta_c(q)$ for all $q$.
They showed that the mixing time is at most polynomial in $n$ for $q=3$, at most quasi-polynomial for $q=4$, and $\exp(\Omega(n))$ for $q > 4$.
Previously, Borgs et al.\ \cite{BFKTVV,BCT}
proved an exponential lower bound for the mixing time on the $d$-dimensional torus when $\beta = \beta_c(q)$, but only for sufficiently large $q$.

Our last contribution concerns 
the {\it systematic scan\/} dynamics, which is a version
of Glauber dynamics in which the vertex~$v$ to be updated is chosen not uniformly at random but
according to a fixed ordering of the vertex set~$V$; one step of systematic scan consists 
of updating each vertex $v\in V$ once according to this ordering.  Systematic scan is
widely employed in practice, and there is a folklore belief that its mixing time should be
closely related to that of standard (random update) Glauber dynamics; however, it has
proved much harder to analyze, and indeed a number of works have been devoted to
this topic (see, e.g., \cite{DGJ-1d,DGJ-dob,DGJ-norm,Hayes}). 
The best general condition under
which systematic scan dynamics is known to be rapidly mixing is due to Dyer, Goldberg
and Jerrum~\cite{DGJ-norm}, and is closely related to the Dobrushin
condition for uniqueness of the Gibbs measure; this condition in turn is known to be
stronger (and in some cases significantly stronger) than SSM \cite{SZ,MOI}.

For the special case of \textit{monotone} spin systems we can show that the systematic scan dynamics mixes in $O(\log n (\log \log n)^2)$ steps for any ordering of the vertices, whenever
SSM holds. Additionally,  
for a wide class of orderings we can show that the mixing time is $O(\log n)$, provided again that SSM holds.
For a vertex ordering~$\Ord$, let $\mathcal{L}(\Ord)$ denote the length of
the longest subsequence of $\Ord$ that is a path in $G$.  

\begin{theorem}
	\label{thm:intro:ss-monotone}
	In a monotone spin system on $\,\Z^d$,
	SSM implies that 
	the mixing time for the systematic scan dynamics
	on an $n$-vertex cube in $\,\Z^d$
	is
	$O(\log n (\log \log n)^2)$ for any ordering~$\Ord$.
	Moreover, if $\mathcal{L}(\Ord)=O(1)$ then SSM implies that the mixing time is $O(\log n)$.
\end{theorem}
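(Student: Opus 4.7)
The strategy is to couple the systematic scan to the tiled block dynamics of Theorem~\ref{thm:intro:parallel-block} via the monotone grand coupling, using SSM to bound the coupling error stage by stage.

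The starting point is monotone path coupling. Run two copies $X^+_t, X^-_t$ of the systematic scan from the all-maximum and all-minimum configurations under the standard monotone grand coupling, so that $X^-_t \le X^+_t$ pointwise at all times. The mixing time is then controlled by the expected Hamming distance $\Exp[\dist(X^+_t, X^-_t)]$, and the goal is to show that this quantity falls below $1/\mathrm{poly}(n)$ after the claimed number of sweeps. Under SSM, the monotone update at a vertex $v$ produces a new disagreement at $v$ with probability bounded by $\sum_{u \ne v} e^{-\Omega(\dist(u,v))}\,\1[\text{disagreement at }u]$, which is the quantitative input on which all the subsequent comparisons rest.

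Next, partition the ordering $\Ord$ into consecutive stages $S_1,\dots,S_k$ within each of which the vertices have mutual lattice distance at least $L$. In the second regime of the theorem take $L$ to be a large constant; a greedy argument using $\mathcal{L}(\Ord) = O(1)$ then yields $k = O(L^d) = O(1)$. In the first regime $\Ord$ may cluster vertices arbitrarily, so instead take $L = \Theta(\log\log n)$ and perform a hierarchical refinement of the stages across several scales; the parameters of this refinement contribute the extra $(\log\log n)^2$ factor. In either regime each stage consists of single-vertex updates at pairwise far-apart vertices, and so, by SSM, can be coupled with independent heat-bath updates on the radius-$L/2$ balls centred at the vertices of $S_j$, incurring a coupling error $\exp(-\Omega(L))$ per stage; accumulating over all stages and sweeps, the total SSM error remains subpolynomial in both regimes.

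Up to this error, one sweep of the scan realises a full round of the tiled block dynamics of Theorem~\ref{thm:intro:parallel-block} at scale $L$, which contracts the expected disagreement by a constant factor, so iterating over $O(\log n)$ sweeps drives $\Exp[\dist(X^+_t, X^-_t)]$ below $1/n$ and yields the two claimed bounds. The main obstacle is the stage-to-block comparison: SSM as usually formulated controls a single heat-bath update against a fixed external configuration, whereas systematic scan sequentially conditions on the partially updated neighbourhoods that arise as one processes the vertices of a stage. Monotonicity is essential here, since it lets all the relevant conditioning events be treated as monotone events, ensures that the exponential SSM decay survives sequential conditioning, and prevents the per-stage errors from compounding across the $\Theta(\log n)$ sweeps.
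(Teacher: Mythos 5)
Your plan hinges on two claims that do not hold under SSM as defined, and they are exactly where the difficulty of the theorem lives. First, you assert that a monotone single-site update at $v$ creates a disagreement with probability at most $\sum_{u\neq v}e^{-\Omega(\dist(u,v))}\1[\text{disagreement at }u]$. For a nearest-neighbor system the conditional law at $v$ depends only on the spins adjacent to $v$, and SSM (Definition~\ref{def:prelim:ssm}) is a statement about how a boundary flip affects the \emph{Gibbs marginal} on a far-away set inside a cube; it does not yield a per-update influence bound of Dobrushin type. Indeed, such a bound is essentially the Dobrushin/Dobrushin--Shlosman matrix condition, which is strictly stronger than SSM (as the paper notes when discussing \cite{DGJ-norm,SZ,MOI}); if it were available, one-step contraction would give the result with no monotonicity and no block machinery at all. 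Second, your key comparison step --- that one sweep of the scan realises, up to error $\exp(-\Omega(L))$, one round of the heat-bath tiled block dynamics at scale $L$ --- is unjustified and, as stated, false: a single pass of single-site heat-bath updates through a block of side $L$ does not resample that block from its conditional Gibbs distribution, even approximately; equilibrating the block is itself a mixing statement that needs a proof (and under SSM that proof is necessarily recursive). Relatedly, the greedy decomposition of $\Ord$ into $O(L^d)$ \emph{consecutive} stages of pairwise $L$-separated vertices when $\mathcal{L}(\Ord)=O(1)$ fails: the even/odd order has $\mathcal{L}=2$, yet any contiguous segment of the even half contains vertices at distance $2$, so the number of such stages is $\Omega(n)$, not $O(1)$.

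The paper's argument avoids both problems by a different mechanism. For the general ordering it proves the coupling-time bound $c\log n(\log\log n)^2$ by induction on the box volume: for each $v$ it freezes the configuration outside a box $B_v$ of radius $\Theta(\log n)$, sandwiches $X_t,Y_t$ between four auxiliary scan chains on $B_v$ (two started from the extremal states, two started from the local stationary measures), bounds the extremal-versus-stationary terms by the inductive coupling-time bound inside $B_v$ boosted via (\ref{eq:prelim:coup-boost}) (this boosting over boxes of volume $(\log n)^d$ is the true source of the $(\log\log n)^2$ factor), and uses SSM only once, to compare the two local stationary measures at $v$ when the boundary conditions on $\boundary B_v$ differ. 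For $\mathcal{L}(\Ord)=O(1)$ the paper does not separate vertices into stages at all; it uses $\mathcal{L}(\Ord)$ as a finite speed of propagation of disagreements per sweep, derives the recurrence $\rho(2t)\le (4t\mathcal{L}(\Ord))^d\rho(t)^2$ for the maximal single-site disagreement probability, and iterates it from a constant-size base case to get $O(\log n)$. If you want to salvage your approach you would need to replace the ``one sweep $\approx$ one tiled block update'' step by an inductive statement of this kind; as written, the proposal has a genuine gap.
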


\noindent
Note that the condition $\mathcal{L}(\Ord) = O(1)$ is usually easy to check in practice.
Moreover, it is easy to choose orderings~$\Ord$ for which $\mathcal{L}(\Ord)$ is 
bounded; for example, in $\,\Z^d$, $G$ is always bipartite, so the ordering~$EO$ that 
updates first all the even vertices, then all the odd ones, has $\mathcal{L}(EO) = 2$.
This particular systematic scan dynamics,
called the \textit{alternating scan dynamics},
is used in practice to sample from the Gibbs distribution and thus has received some attention \cite{PW,GKZ}.
Using our comparison technology we prove that, for \textit{general} spin systems, the relaxation time of the alternating scan dynamics is $O(1)$, provided SSM holds.

\begin{thm}
	\label{thm:intro:ss:eo}
	SSM implies that 
	the relaxation time of the alternating scan dynamics
	on an $n$-vertex cube in $\,\Z^d$
	is $O(1)$.
\end{thm}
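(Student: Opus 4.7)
The plan is to identify the alternating scan dynamics on $\,\Z^d$ with a two-step composition of orthogonal projections and then reduce directly to Theorem~\ref{thm:intro:general:block} applied with $r=2$. Let $E$ and $O$ denote the even and odd sublattices of the underlying cube $V$, and set $P_E f = \Exp_\mu[f\,|\,\sigma_O]$ and $P_O f = \Exp_\mu[f\,|\,\sigma_E]$; these are the heat-bath block updates for the two-block cover $\{E,O\}$. Because the Hamiltonian has only nearest-neighbor interactions and $E,O$ are independent sets, each single-vertex Glauber update of an even vertex depends only on the (unchanged) odd boundary, so the sequential Glauber scan over all of $E$, in any order, produces $\sigma_E$ exactly distributed according to $\mu(\cdot\,|\,\sigma_O)$. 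Hence one step of the alternating scan has transition operator
\[
T \;=\; P_O P_E.
\]
Since $P_E$ and $P_O$ coincide with conditional expectations, they are orthogonal projections on $L^2(\mu)$.

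Next, apply Theorem~\ref{thm:intro:general:block} to the two-block heat-bath chain $\tfrac12(P_E+P_O)$. Rewriting its spectral gap in terms of the Dirichlet form, SSM supplies a constant $c>0$ (depending only on the SSM parameters, not on $|V|$) such that
\[
\|(I-P_E) f\|^2 \;+\; \|(I-P_O) f\|^2 \;\ge\; c\,\|f\|^2 \qquad \text{for every $f \in L^2(\mu)$ with $\Exp_\mu f = 0$,}
\]
where the norms are in $L^2(\mu)$. This is the full input from the earlier theorem.

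The final step upgrades this symmetric Poincar\'e inequality to a singular-value bound on the non-self-adjoint operator $T$. Given a mean-zero $u$, let $v = P_E u$; then $v$ is again mean-zero and lies in the range of the projection $P_E$, so $(I-P_E)v = 0$. Applied to $v$, the two-block Poincar\'e inequality collapses to $\|(I-P_O)v\|^2 \ge c\,\|v\|^2$, which by Pythagoras rearranges to $\|P_O v\|^2 \le (1-c)\|v\|^2$. Combining this with $\|v\| = \|P_E u\| \le \|u\|$ yields
\[
\|T u\|^2 \;=\; \|P_O v\|^2 \;\le\; (1-c)\,\|u\|^2,
\]
so $\|T\|_{L^2_0(\mu)} \le \sqrt{1-c}$ is bounded away from $1$ by an absolute constant. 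This is equivalent to the claimed $O(1)$ relaxation time for the alternating scan.

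The delicate point is that $T$ is not self-adjoint, so one cannot read off its relaxation time from a Dirichlet form; the feature that rescues us is that the bipartiteness of $\,\Z^d$, together with nearest-neighbor interactions, makes $P_E$ and $P_O$ genuine orthogonal projections. This is exactly what allows the Poincar\'e inequality for $(P_E + P_O)/2$ to propagate to a contraction bound on the product $P_O P_E$. Without this projection structure---for example in systematic scans whose natural block decomposition involves non-independent sets---one would need the substantially more intricate arguments that lie behind Theorem~\ref{thm:intro:ss-monotone}.
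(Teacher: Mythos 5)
Your proof is correct, and its overall architecture matches the paper's: both reduce the alternating scan to the even/odd two-block heat-bath chain $\mathcal{B}_{\rm eo}=\tfrac12(K_{V_{\rm e}}+K_{V_{\rm o}})$, whose $\Omega(1)$ gap under SSM is exactly what the paper establishes (Lemma~\ref{lemma:block:even-odd}; your invocation of Theorem~\ref{thm:intro:general:block} with $r=2$ is the same statement and is not circular), and both use the fact that, since $V_{\rm e}$ and $V_{\rm o}$ are independent sets, the sequential single-site scan over a sublattice coincides with the block heat-bath update. Where you genuinely diverge is in the transfer step. The paper proves a general comparison inequality (Lemma~\ref{lemma:general:eo}): $\lambda(STS)\ge\lambda(aS+(1-a)T)$ for any positive semidefinite reversible stochastic $T$ and idempotent $S$, via the variational principle with the test function $Sg$; applied to $S=K_{V_{\rm e}}$, $T=K_{V_{\rm o}}$ this bounds $\lambda(S(EOE))$, and the relaxation time of the non-reversible scan then follows from the multiplicative-reversiblization identity (\ref{eq:relax:non-rev}). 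You instead bound the operator norm of $P_OP_E$ on mean-zero functions directly, using that \emph{both} block updates are orthogonal projections (idempotence of $P_E$ to kill one Dirichlet term, Pythagoras for $P_O$ to convert the other into a contraction), which yields $\|P_OP_E\|_{L_2^0(\mu)}\le\sqrt{1-c}$ and hence the same conclusion, since $\lambda(S(EOE))=1-\|S(EO)\|^2_{L_2^0(\mu)}$. Your argument is more elementary and transparent for this specific chain, but strictly less general: it needs $P_O$ to be a projection, whereas the paper's lemma only needs the second operator to be positive semidefinite and reversible, so it would also cover scans in which the odd block is refreshed by an arbitrary reversible (non-heat-bath) rule. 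The only loose ends in your write-up are cosmetic: a factor of $2$ in passing from the spectral gap of $\tfrac12(P_E+P_O)$ to your Poincar\'e inequality, and the final sentence asserting equivalence with the relaxation time, which is correct precisely because the paper defines the relaxation time of the non-reversible scan through its multiplicative reversiblization.
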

\noindent
We emphasize that Theorem \ref{thm:intro:ss:eo} applies to general (not necessarily monotone) spin systems.
In spin systems with the SSM property, the best previously known bound 
for the relaxation time of the alternating scan dynamics was $O(n)$;
this bound follows from a recent result of Guo et al.~\cite{GKZ}. 
We observe that since the alternating scan dynamics is non-reversible, 
its relaxation time is defined in terms of the spectral gap 
of its multiplicative reversiblization; see, e.g., \cite{Fill,MT}.

The rest of the paper is organized as follows.
We conclude this introduction
with a brief discussion of our techniques. 
Section \ref{section:background} contains some basic terminology, definitions and facts used throughout the paper. 
In Section \ref{section:blocks}
we derive our results for the tiled block dynamics (Theorem \ref{thm:intro:parallel-block}) 
and introduce our comparison technology in Section \ref{subsection:comparing-tiled-block}.
In Sections \ref{section:sw} and \ref{section:partition}
we provide two applications of this technology:
bounds for the spectral gaps of the SW dynamics (Theorem \ref{thm:intro:sw}) and of the general block dynamics (Theorem \ref{thm:intro:general:block}), respectively. 
Finally, in Section \ref{section:ss} we provide our proofs for Theorems \ref{thm:intro:ss-monotone} and \ref{thm:intro:ss:eo} concerning systematic scan dynamics.

\subsection{Overview of Techniques}

We conclude this introduction by briefly indicating some of our techniques.
We use the path coupling method of Bubley and Dyer \cite{BD} to establish our results for the tiled (heat-bath) block dynamics in Theorem \ref{thm:intro:parallel-block}. 
Our proof of this theorem 
is a generalization of the methods 
in \cite{DSVW}.
We then develop a novel comparison methodology,
consisting of several new comparison inequalities concerning various block dynamics,
that together with this result allow us to establish Theorems \ref{thm:intro:general:block} and $\ref{thm:intro:sw}$. 
We provide next a high-level overview of this technology.

We consider a more general class of tiled block dynamics.
Suppose that for each $i=1,\dots,r$ and each configuration $\tau$ in $V\setminus A_i$,
we are given an ergodic Markov chain $S_i^\tau$
that acts only on the tiling $A_i$,
has $\tau$ as the fixed configuration in $V \setminus A_i$
and is reversible with respect to $\mu(\cdot|\tau)$.
Given this family of Markov chains, we consider the tiled block dynamics 
that chooses a tiling $A_i$ uniformly at random from $\{A_1,\dots,A_r\}$
and updates the configuration in $A_i$ with a step of $S_i^\tau$, provided $\tau$ is the configuration in $V \setminus A_i$.
We are able to show that the spectral gap 
of any such tiled block dynamics is determined by
the spectral gap of the tiled 
\textit{heat-bath} block dynamics (which is considered in Theorem \ref{thm:intro:parallel-block})
and the spectral gaps of the $S_i^\tau$'s.
To bound the spectral gaps of the $S_i^\tau$'s we crucially use the fact that, by design, the $A_i$'s consists of non-interacting  
$d$-dimensional cubes of constant volume.

We use this methodology 
in the proof of Theorem \ref{thm:intro:general:block}
to show that the \textit{heat-bath} block dynamics with exactly two blocks, one ``even'' block containing all the even vertices and an ``odd'' one with all the odd vertices,
has a constant spectral gap provided SSM holds.
For this, we consider
the tiled block dynamics
that picks a tiling $A_i$ uniformly at random
and with probability $1/2$ 
performs a heat-bath
update in all the even vertices in $A_i$,
and otherwise in all the odd ones.
The other part of the proof consists of
establishing a comparison inequality
between the spectral gaps of the even/odd heat-bath block dynamics (i.e., the block dynamics with exactly two blocks: the even and odd ones)
and general heat-bath block dynamics (i.e., where the collection of blocks $\{A_1,\dots,A_r\}$ is arbitrary). For this, we use two key properties of the variance functional: monotonicity and tensorization.

To derive our results for the SW dynamics in Theorem \ref{thm:intro:sw} 
we introduce an auxiliary variant of the SW dynamics
that only updates isolated vertices (instead of connected components of any size).
This isolated vertices variant 
can be compared to a tiled block dynamics that in a step updates all the isolated vertices
in a single block $A_i$ chosen uniformly at random from $\{A_1,\dots,A_r\}$. Our comparison methodology above is then used 
to show that the spectral gap of this tiled block dynamics is $\Omega(1)$.
To establish comparison inequalities between the spectral gaps of the SW dynamics, the isolated vertices variant of the SW dynamics and the tiled block dynamics
that updates isolated vertices in a tiling,
we use elementary functional analysis
and the comparison framework of Ullrich \cite{Ullrich1,Ullrich2,Ullrich4}.

The proof of our later theorem on systematic scan for monotone systems (Theorem \ref{thm:intro:ss-monotone}) is loosely based on ideas from \cite{DSVW}. Finally, to establish our result for the alternating scan dynamics (Theorem \ref{thm:intro:ss:eo}), we relate the spectral gap of this dynamics to that of
the even/odd heat-bath block dynamics, which we analyze in the proof of Theorem \ref{thm:intro:general:block}.

\section{Background}
\label{section:background}

\subsection{Spin systems}
\label{subsection:prelim:spin-systems}
Let $\mathbb{L} = (\Z^d,\mathbb{E})$ be the infinite $d$-dimensional lattice graph,
where for $u,v \in \Z^d$, $(u,v) \in \mathbb{E}$ iff ${||u-v||}_1 = 1$.
Let $V$ be a finite subset of $\,\Z^d$ and let $G = (V,E)$ be the induced subgraph.
We use $\boundary V$ to denote the boundary of $G$, i.e., the
set of vertices in $\Z^d \setminus V$ connected by an edge in $\mathbb{E}$ to $V$.

A $\textit{spin system}$ on $G$ consists of a set of {\it spins} $S =\{1,\dots,q\}$, a symmetric {\it edge potential} $U: S \times S \rightarrow \R$ and a {\it vertex potential} $W: S \rightarrow \R$.
A {\it configuration} $\sigma:V\to S$ of the system is an assignment of spins to the vertices of~$G$;
we denote by $\Omega$ the set of all configurations. A {\it boundary condition\/} $\psi$ for~$G$ is an
assignment of spins to some (or all) vertices in~$\boundary V$; i.e., $\psi: A^{\psi} \rightarrow S$ with $A^\psi \subset \boundary V$. 
The boundary condition where $A^\psi = \emptyset$ is called the \textit{free boundary condition}.

Given a boundary condition~$\psi$, each configuration $\sigma \in \Omega$ is assigned  probability
\[\mu^{\psi} (\sigma) = \frac{1}{Z} \cdot {\e}^{-H^{\psi}_G(\sigma)},\]
where $Z$ is the normalizing constant and
\[H^{\psi}_G (\sigma) = -\sum_{(u,v) \in E} U(\sigma(u),\sigma(v))\; - \sum_{(u,v) \in \mathbb{E}\,:\, u \in A^\psi, v \in V} U(\psi(u),\sigma(v))\;  - \;\sum_{u \in V} W(\sigma(u)).\]
In 
the statistical physics literature, $Z$ is
called the \textit{partition function}
and $H^{\psi}_G$ the \textit{Hamiltonian} of the system.

A particularly well known and widely studied spin system is the \textit{Ising/Potts model}, where $S\!=\{1,\!\dots\!,q\}$, $U(s_1,s_2) = \beta \cdot \1(s_1 = s_2)$ and $W(s) = \beta h_s$.
The parameter $\beta \in \R$ is related to the inverse temperature of the system and $(h_1,...,h_q) \in {\R}^q$ to an external magnetic field.
In Section \ref{section:sw} we analyze dynamics for the Ising/Potts model with ferromagnetic interactions ($\beta > 0$) and no external field ($h_i = 0$ for all $i$).

\begin{remark}
	There are important spin systems, such as the hard-core model and the antiferromagnetic Potts model at zero temperature (proper $q$-colorings), that require the edge potential $U$ to be infinite for certain configurations; namely, there are \textit{hard constraints} in the system that make certain configurations
	invalid. 
	Our results in Sections \ref{section:blocks}, \ref{section:partition} and \ref{section:ss}
	hold in this more general setting provided the system is \textit{permissive}. 
	A spin system is permissive if for any $V\subset \Z^d$
	and any configuration $\tau$ on $\Z^d \setminus V$, there is at least one configuration $\sigma$ on $V$ such that 
	$\mu(\sigma|\tau) > 0$.
	This ensures that the measure $\mu(\cdot|\tau)$ is well-defined.
	It is easy to verify that, in addition to systems without hard constraints,
	the hard-core model for all $\lambda > 0$ 
	and proper $q$-colorings when $q \ge 2d+1$ are all permissive systems.
\end{remark}

\subsection{Glauber dynamics}
\label{subsection:prelim:markov-chains}
Consider the spin system $(S=\{1,\dots,q\},U,W)$ on $G=(V,E)$ with a fixed boundary condition $\psi$.
Let $\GD$ be a Markov chain that, given a configuration $\sigma$ on $V$, performs the following update:
\begin{enumerate}
	\item Pick $v \in V$ uniformly at random (u.a.r.);
	\item Replace $\sigma(v)$ with a spin from $S = \{1,...,q\}$ sampled according to the distribution $\mu(\cdot | \sigma(V\setminus v))$.
\end{enumerate}
This Markov chain is called the \textit{(heat-bath) Glauber dynamics}.
$\GD$  is clearly reversible with respect to (w.r.t.) $\mu^{\psi}$ and, to avoid complications, we assume that it is irreducible. 
(This is always the case in systems without hard constraints, but  $\GD$ could be reducible for some permissive systems; e.g., proper $q$-colorings when $q = 2d+1$.)

\subsection{Strong spatial mixing (SSM)}
\label{subsection:ssm}
Several notions of decay of correlations in spin systems have been useful in the analysis of local algorithms. A particularly important one is
SSM, which says that the influence of a set on another decays exponentially with the distance between these sets.

For a fixed finite $V \subset \Z^d$ and $a,b > 0$, let $\mathcal{C}(V,a,b)$ be the condition that 
for all $B \subset V$, all $u \in \boundary V$, and any pair of boundary conditions $\psi$, $\psi_u$ on $\boundary V$ that differ only at $u$, we have
\begin{equation}
\label{prelim:eq:ssm}
{\|\mu_B^\psi \,-\, \mu_B^{\psi_u}\|}_{\textsc{tv}} \,\,\le\,\,  b\, \exp(-a \cdot \dist(u, B)),
\end{equation}	
where 
$\mu_B^\psi$ and $\mu_B^{\psi_u}$ are the probability measures induced in $B$ by $\mu^\psi$ and $\mu^{\psi_{u}}$, respectively,
$\|\cdot\|_{\textsc{tv}}$ denotes total variation distance
and $\dist(u, B) = \min_{v \in B} {{\|u-v\|}_1}$.		
\begin{dfn}	
	\label{def:prelim:ssm}
	A spin system on $\,\Z^d$ has SSM if
	there exist $a,b > 0$ such that
	$\mathcal{C}(\Lambda,a,b)$ holds for every
	$d$-dimensional cube $\Lambda \subset \Z^d$.		
\end{dfn}

\begin{remark}
	The definition of SSM varies in the literature.
	The main difference lies in the class of subsets $V \subset \Z^d$ for which $\mathcal{C}(V,a,b)$ is required to hold.
	The two boundary conditions may also differ on a larger subset of $\boundary V$.
	We work here with one of the weakest versions of SSM.
	In particular, this notion is known to hold
	for the Ising/Potts model on $\Z^2$ for all $q \ge 2$ and $\beta < \beta_c(q)$, where $\beta_c(q)$ is the uniqueness threshold.
\end{remark}

\subsection{Mixing and coupling times}
\label{subsection:coupling}
Let $M$ be an ergodic Markov chain over $\Omega$ with stationary distribution $\mu^\psi$.
Let $M^t(X_0,\cdot)$ denote the distribution of $M$ after $t$ steps starting from $X_0 \in \Omega$, and let
\[
\taumix(M,\varepsilon) = \max\limits_{X_0 \in \Omega}\min \left\{ t \ge 0 : {\|{M}^t(X_0,\cdot)-\mu^\psi\|}_{\textsc{tv}} \le \varepsilon \right\}.
\]
The \textit{mixing time} of $M$ is defined as $\taumix(M) = \taumix(M,1/4)$. 

A {\it (one step) coupling} of the Markov chain $M$ specifies, for every pair of states $(X_t, Y_t) \in \Omega\times\Omega$, a probability distribution over $(X_{t+1}, Y_{t+1})$ such that the processes $\{X_t\}$ and $\{Y_t\}$, viewed in isolation, are faithful copies of $M$, and if $X_t=Y_t$ then $X_{t+1}=Y_{t+1}$. 
Let $\Tcoup(\varepsilon)$
be the minimum $T$ such that $\Pr[X_T \neq Y_T] \le \varepsilon$, maximized over pairs of initial configurations $X_0$, $Y_0$. 
The following inequality is standard:
$$\taumix(M,\varepsilon) \le \Tcoup(\varepsilon);$$
(see, e.g., \cite{LPW}).
The {\it coupling time} is $\Tcoup = \Tcoup(1/4)$ and thus 
$\taumix(M) \le \Tcoup$.
Moreover, if $T = k \cdot \Tcoup$ for any positive integer $k$, then
\begin{equation}
\label{eq:prelim:coup-boost}
\Pr[X_T \neq Y_T] \le 1/4^k.
\end{equation}

\subsection{Analytic tools}
\label{subsection:prelim:analytic-tools}
Our proofs 
use elementary notions from functional analysis, which we briefly review here. For extensive background on the application of such ideas to the analysis of finite Markov chains, see~\cite{SC,MT}. 

Let $P$ be the transition matrix of a finite irreducible Markov chain with state space $\Omega$ and stationary distribution $\mu$. 
For any $f\in\R^{|\Omega|}$, we let $Pf(x)=\sum_{y\in\Omega}P(x,y)f(y)$.
If we endow $\R^{|\Omega|}$ with the inner product 
$\langle f,g \rangle_\mu = \sum_{x \in \Omega} f(x)g(x)\mu(x)$,
we obtain a Hilbert space denoted $L_2(\mu) = (\R^{|\Omega|},\langle \cdot,\cdot \rangle_\mu)$ and $P$ defines an operator from $L_2(\mu)$ to $L_2(\mu)$. The Cauchy-Schwarz inequality implies 
\begin{equation}
\label{eq:sw:contraction}
\inner{f}{Pf}{\mu} \le  \inner{f}{f}{\mu}.
\end{equation}
Consider two Hilbert spaces $S_1$ and $S_2$ with inner products $\langle\cdot,\cdot\rangle_{S_1}$ and $\langle\cdot,\cdot\rangle_{S_2}$ respectively, and let $K:S_2 \rightarrow S_1$ be a bounded linear operator. The \textit{adjoint} of $K$ is the unique operator $K^*:S_1 \rightarrow S_2$ satisfying $\langle f,Kg \rangle_{S_1} = \langle K^*f,g \rangle_{S_2}$ for all $f \in S_1$ and $g \in S_2$. If $S_1 = S_2$, $K$ is \textit{self-adjoint} when $K = K^*$.

In our setting, the adjoint of $P$ in $L_2(\mu)$ is given by the transition matrix $P^*(x,y)=\mu(y)P(y,x)/\mu(x)$, and therefore $P$ is self-adjoint iff $P$ is reversible w.r.t.\ $\mu$. In this case the spectrum of $P$ is real and we let $1 = \lambda_1 > \lambda_2 \ge ... \ge \lambda_{|\Omega|} \geq -1$ denote its eigenvalues ($1 > \lambda_2$ because $P$ is irreducible).  
The {\it absolute spectral gap} of $P$ is defined by $\lambda(P) = 1 - \lambda^*$, where $\lambda^* = \max\{|\lambda_2|,|\lambda_{|\Omega|}|\}$. If $P$ is ergodic (i.e., irreducible and aperiodic), then $\lambda(P)>0$, and 
it is a standard fact that for all $\varepsilon>0$ all reversible Markov chains satisfy 
\begin{equation}
\label{prelim:eq:gap-lower}
\taumix(P,\varepsilon) \ge \left(\lambda(P)^{-1}-1\right) \log \left(
\frac{1}{2\varepsilon}\right),
\end{equation}
(see Theorem 12.4 in \cite{LPW}). 
$\lambda^{-1}(P)$ is called 
the \textit{relaxation time}.

$P$ is \textit{positive semidefinite} if $P=P^*$ and $\langle  f,P f \rangle_{\mu} \ge 0$, $\forall f \in \R^{|\Omega|}$. In this case $P$ has only nonnegative eigenvalues. 
The \textit{Dirichlet form} of a reversible Markov chain is defined as
$$\mathcal{E}_P(f,f) = \inner{f}{(I-P)f}{\mu} = \frac{1}{2} \sum_{x,y \in \Omega} \mu(x) P(x,y) (f(x) - f(y))^2,$$		
for any $f \in \R^{|\Omega|}$. If $P$ is positive semidefinite, then
the absolute spectral gap of $P$ satisfies
\begin{equation}
\label{eq:sw:gap}
\lambda(P) = 1 - \lambda_2 = \min_{f \in \R^{|\Omega|},  \Var\nolimits_{\mu} (f) \neq 0} \frac{\mathcal{E}_{P} (f,f)}{\Var_{\mu} (f)},
\end{equation}	
where $\Var_{\mu}(f) = \sum_{x \in \Omega} (f(x)-\mu(f))^2 \mu(x)$ 
and $\mu(f) = \sum_{x \in \Omega} f(x)\mu(x)$.

\section{SSM and tiled block dynamics for general spin systems}
\label{section:blocks}

Let $V \subset \Z^d$ be a $d$-dimensional cube of volume\footnote{For $A \subset \Z^d$, the volume of $A$ is $|A|$.} $n$. Let $G=(V,E)$ be the induced subgraph
and let $\psi$ be a fixed boundary condition on $\boundary V$. For ease of notation we set $\mu = \mu^\psi$.

Let $\{A_1,\dots,A_r\}$ be a collection of sets (or blocks) such that $V = \cup_i A_i$.
A \textit{block dynamics} w.r.t.\ this collection of sets is a Markov chain that in each step picks a set $A_i$ uniformly at random from $\{A_1,\dots,A_r\}$ and updates the configuration in $A_i$.
The \textit{heat-bath block dynamics} corresponds to the case
where the configuration in $A_i$ is replaced by a new configuration distributed according to the conditional measure in $A_i$ given the configuration in $V \setminus A_i$.

In this section we consider two different versions of the block dynamics for a particular collection of sets, that with slight abuse of terminology we call \textit{tilings}.
The steps of this dynamics can be efficiently implemented in parallel, so
we believe it is interesting in its own right.
Moreover, the mixing time and spectral gap bounds we derive here
will be crucially used 
later in our proofs in Sections \ref{section:sw} and \ref{section:partition}, where we consider the SW dynamics and general block dynamics, respectively.

We define the collection of blocks first, which we denote $\mathcal{D}$.
Let $L \ll n^{1/d}$ be an odd integer.
For each $x_i \in \{0,\dots,L+2\}^d \subset \Z^d$,
let $C(x_i)$ be the union of all $d$-dimensional cubes of side length $L-1$ with centers at $x_i+\vec{h}(L+3)$ for some $\vec{h} \in \Z^d$.
The cubes in $C(x_i)$ 
have volume $L^d$ and are at distance $4$ from each other (see Figure \ref{fig:tiles}).
For each $x_i \in \{0,\dots,L+2\}^d$,
let $B_i = C(x_i) \cap V$ and let $\mathcal{D} = \{B_1,B_2,\dots,B_m \}$;  then $m=(L+3)^d$.  
We call each $B_i$ a \textit{tiling}  of $V$ since it corresponds to a tiling of $\Z^d$ with cubes of side lengh $L+3$. 
Any block dynamics w.r.t.\ $\mathcal{D}$ is called
a \textit{tiled block dynamics}.

\begin{remark}
	In our proofs we will choose $L$ to be a sufficiently large constant independent of $n$. 
	The choice of the distance $4$
	between the $d$-dimensional cubes is so that neighboring cubes do not interact.
	This distance is sufficient because we are considering spin systems 
	with only nearest-neighbor interactions. To extend our proofs to arbitrary finite range spin systems on $\Z^d$
	it suffices to choose a larger distance between these cubes.
\end{remark}	

	\begin{figure*}
	\begin{center}
		\includegraphics[page=1,clip,trim=70 590 0 72]{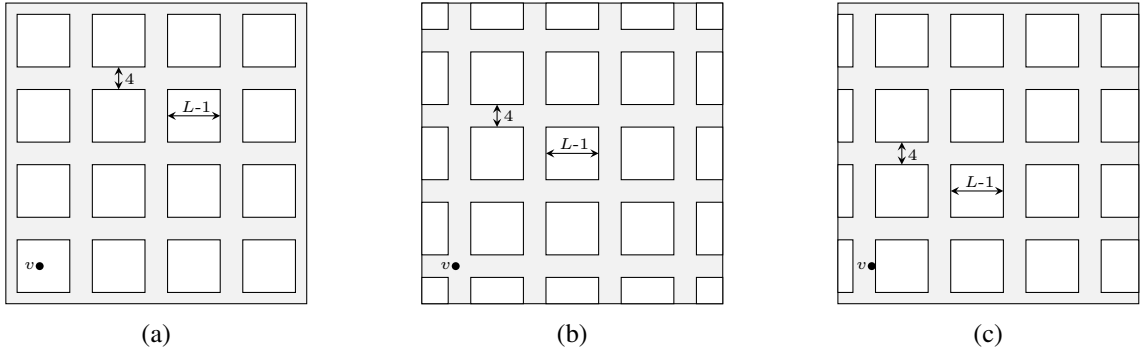}
		\caption{Three distinct tilings of $V$. In (a) vertex $v$ is in the interior of the tiling; in (b) vertex $v$ is in the exterior; and in (c) vertex $v$ is right on the boundary of the tiling. }     
		\label{fig:tiles}
	\end{center}
\end{figure*}

Let $\mathcal{B}_\mathcal{D}$ be the transition matrix of the \textit{heat-bath tiled block dynamics}. That is, given
a configuration $\sigma_t \in \Omega$ at time $t$, the chain proceeds as follows:
\begin{enumerate}
	\item Pick $k \in \{1,...,m\}$ u.a.r.;
	
	\item Update the configuration in $B_k$ with a sample from $\mu(\cdot \mid \sigma_t(V \setminus B_k))$.
\end{enumerate}	
This chain is clearly ergodic and reversible w.r.t.\ $\mu$. 
We prove the following lemma, which corresponds to Theorem \ref{thm:intro:parallel-block} from the introduction.

\begin{lemma}
	\label{lemma:heat-bath-parallel-block}
	When $L$ is a sufficiently large constant (independent of $n$), SSM implies that $\taumix(\mathcal{B}_\mathcal{D})=O(\log n)$ and $\lambda(\mathcal{B}_\mathcal{D})  \ge 1/8$.
\end{lemma}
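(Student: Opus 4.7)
Proof proposal.

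The plan is to prove both the mixing time and spectral gap bounds simultaneously via the Bubley--Dyer path coupling method, exploiting two features of the tiled block dynamics: the translation invariance built into the collection $\mathcal{D}$, and the fact that the distance-$4$ gap between cubes within a single tiling causes the heat-bath update on $B_k$ to factor as a product over its constituent cubes.

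Fix the Hamming metric on $\Omega$ and let $\sigma,\tau$ differ only at a single vertex $v$. For the coupling, use the same random tiling index $k$ in both copies. If $v \in B_k$, then $\sigma$ and $\tau$ agree on $V \setminus B_k$, so the conditional distributions $\mu(\cdot \mid \sigma(V \setminus B_k))$ and $\mu(\cdot \mid \tau(V \setminus B_k))$ coincide; the updates can be coupled as equal and the new distance is $0$. If $v \notin B_k$, then $v$ lies in the gap, and the distance-$4$ separation decouples the conditional measure on $B_k$ into an independent product over the cubes $Q \subset B_k$. The triangle inequality forces $v$ to satisfy $v \in \boundary Q$ for at most one such cube (otherwise two cubes would lie at pairwise $\ell_1$-distance at most $2$). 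For this unique cube $Q$, view $Q$ as a $d$-dimensional cube in $\Z^d$ and apply SSM, giving that a suitable coupling of the two heat-bath updates on $Q$ satisfies
\[ \Pr[X_1(u) \ne Y_1(u)] \;\le\; b\,\exp\!\bigl(-a\cdot \dist(u,v)\bigr) \qquad \text{for each } u \in Q. \]
Summing over $u \in Q$ yields $\sum_{u \in Q} b\,\exp(-a \dist(u,v)) \le C_0$ for a constant $C_0 = C_0(a,b,d)$ independent of $L$, and hence $\Exp[d(X_1,Y_1)\mid v\notin B_k] \le 1 + C_0$ (the $1$ accounting for the persisting disagreement at $v$).

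By construction of the collection $\mathcal{D}$, exactly $L^d$ out of the $m = (L+3)^d$ tilings contain $v$ inside some cube, so $\Pr[v \notin B_k] = 1 - L^d/(L+3)^d = O(d/L)$. Combining:
\[ \Exp[d(X_1, Y_1)] \;\le\; \Bigl(1 - \tfrac{L^d}{(L+3)^d}\Bigr)\bigl(1 + C_0\bigr), \]
which is at most $7/8$ once $L$ is chosen a sufficiently large constant depending only on $a,b,d$. This single-disagreement contraction, with rate $\alpha \ge 1/8$, combines with path coupling to yield Hamming contraction for arbitrary pairs. The bound $\taumix(\mathcal{B}_\mathcal{D}) = O(\log n/\alpha) = O(\log n)$ follows from the standard path coupling mixing-time theorem (using that the Hamming diameter is $n$), while $\lambda(\mathcal{B}_\mathcal{D}) \ge \alpha \ge 1/8$ follows from the spectral-gap version of path coupling, as in Chen's theorem (cf.\ Theorem~13.1 of Levin--Peres--Wilmer).

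The main obstacle is the clean application of SSM on each cube $Q\subset B_k$: one must verify that each such $Q$ is a genuine $d$-dimensional cube in $\Z^d$ (so that Definition~\ref{def:prelim:ssm} applies), that the distance-$4$ gap really makes the conditional distribution on $B_k$ a product over cubes, and that a joint coupling of the heat-bath update realizes the per-vertex SSM bound simultaneously at all $u\in Q$. Boundary effects near $\boundary V$, where some $B_i$ may contain a truncated cube, can be absorbed into the large-$L$ choice; the remaining $\sum_{u\in Q} e^{-a\dist(u,v)} = O(1)$ and $1 - L^d/(L+3)^d = O(d/L)$ estimates are routine.
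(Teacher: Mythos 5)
Your skeleton is the same as the paper's: path coupling on single-site discrepancies, the same tiling index in both copies, the observation that $v\in B_k$ (probability $L^d/(L+3)^d$) allows an identical update, and that otherwise the distance-$4$ separation confines the damage to at most one cube $Q$ of the chosen tiling. The gap is in the one step you flagged as "the main obstacle" and then treated as routine: the existence of a \emph{single} coupling of the two heat-bath updates on $Q$ satisfying $\Pr[X_1(u)\neq Y_1(u)]\le b\,e^{-a\,\dist(u,v)}$ \emph{simultaneously} for all $u\in Q$. SSM as defined (condition $\mathcal{C}(\Lambda,a,b)$) only bounds $\|\mu_B^{\psi}-\mu_B^{\psi_v}\|_{\textsc{tv}}$ for a subset $B$; taking $B=\{u\}$ gives, for each $u$ separately, an optimal coupling of the two single-site marginals, but these per-site couplings need not be consistent with any joint coupling of $\mu_Q^{\psi}$ and $\mu_Q^{\psi_v}$. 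In general the minimal expected Hamming distance between two measures on a product space strictly exceeds the sum of the single-coordinate total variation distances: e.g.\ $\mu$ uniform on $\{(0,0),(1,1)\}$ and $\nu$ uniform on $\{(0,1),(1,0)\}$ have identical site marginals, yet every coupling has expected Hamming distance at least $1$. Since this lemma is for general permissive spin systems (not monotone ones), there is also no monotone coupling available to realize per-site bounds, so your $\sum_{u\in Q} b\,e^{-a\dist(u,v)}\le C_0$ estimate does not bound the expected number of new disagreements under any coupling you have actually constructed. (A warning sign: your conclusion would give contraction factor $O(d/L)$, much stronger than what the paper obtains.)

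The paper circumvents exactly this issue by applying SSM only once, to a \emph{block}: it splits the affected cube $\Lambda$ into the near region $C=\{u:\dist(u,v)\le R\}$ with $R=\tfrac12(L/(8d))^{1/d}$ and the far region $F=\Lambda\setminus C$, uses SSM with $B=F$ to get one coupling in which the entire far region agrees except with probability $b\,e^{-aR}\le L^{-d}$, and simply pays $|C|\le (2R)^d$ deterministically for the near region. This gives expected disagreements in $\Lambda$ at most $L/(8d)+1$, which multiplied by the $\le 2d/L$ probability that $v\in\boundary B_k$ yields overall contraction $3/4+2d/L\le 7/8$, and hence $\taumix=O(\log n)$ and $\lambda(\mathcal{B}_\mathcal{D})\ge 1/8$ via the standard path-coupling consequences you cite. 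To repair your write-up, replace the per-vertex coupling claim with this near/far decomposition (or some other argument that converts a single set-level TV bound into a coupling); the rest of your outline then goes through essentially as in the paper.
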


\begin{proof}
	The proof is a generalization
	of the path coupling argument in \cite{DSVW}.	
	Let $X_t$ and $Y_t$ be two copies of the tiled heat-bath block dynamics $\mathcal{B}_\mathcal{D}$ that differ at a single vertex $v \in V$.
	We construct a coupling of the steps of $\mathcal{B}_\mathcal{D}$
	such that the expected number of disagreements between $X_{t+1}$ and $Y_{t+1}$ is strictly less than one.
	
	The region chosen in step 1 of the chain is the same in both copies.
	For every tiling $B_k$ there are three possibilities (see Figure \ref{fig:tiles}): 
	\begin{enumerate}
		\item[(a)] $v \in B_k$, in which case we use the same configuration for $B_k$ in both copies and so $X_{t+1}=Y_{t+1}$ with probability 1;
		\item[(b)] $v \in V \setminus (B_k \cup \boundary B_k)$, and again we use the same configuration to update $B_k$ in both copies. Then, $X_{t+1}$ and $Y_{t+1}$ differ only at $v$ with probability 1; or
		\item[(c)] $v \in \boundary B_k$. In this case disagreements could propagate from $v$ to the interior of $B_k$, but we describe next a coupling that limits the extent of such propagation.
	\end{enumerate}
	
	\noindent
	Case (a) occurs with probability
	${L^d}/{(L+3)^d} \ge 1/2,$
	for large enough $L$.
	Let us consider case (c); i.e., $v \in \boundary B_k$.
	This case occurs with probability at most 
	${2dL^{d-1}}/{(L+3)^d} \le {2d}/{L}$.
	Moreover, $v$ is in the boundary of exactly one of the smaller cubes (of side length at most $L-1$) in $B_k$, which we denote $\Lambda$. 
	The cube $\Lambda$ can be partitioned into the sets of vertices that are close and far from $v$.
	More precisely, let
	$R = \frac{1}{2}\left(\frac{L}{8d}\right)^{1/d}$, 
	$C = \{u \in \Lambda: \dist(u,v) \le R\}$
	and $F = \Lambda \setminus C$. SSM implies
	\[\|\mu_{F}^{\psi} \,\,-\,\, \mu_{F}^{\psi_v}   \|_{\textsc{tv}} \le b \exp\{-a\dist(v,F)\},\]
	where $\psi$ and $\psi_v$ are the two boundary conditions induced in $\Lambda$ by $X_t$ and $Y_t$, respectively, and thus differ only at $v$. 
	This implies that there is a coupling of the distributions $\mu_{F}^{\psi}$ and $\mu_{F}^{\psi_v}$ such that if $(Z_1,Z_2)$ is a sample from this coupling (so, $Z_1$ and $Z_2$ are configurations on $F$), then
	\[\Pr[Z_1 \neq Z_2]  \le b \exp\{-a\dist(v,F)\} \le b \exp\{-aR\} \le \frac{1}{L^d},\]
	where the last inequality holds for large enough $L$. 
	Hence, we can couple the update on $\Lambda$ 
	such that $X_{t+1}$ and $Y_{t+1}$ disagree on $F$ with probability at most $L^{-d}$.
	Then, the expected number of disagreements in $\Lambda$ is crudely bounded by 
	$$|C| + \frac{|F|}{L^{d}} \le (2R)^d + 1 \le \frac{L}{8d}+1.$$
	The same configuration is used to update both copies in $B_k \setminus \Lambda$ and so $X_{t+1}(B_k \setminus \Lambda) = Y_{t+1}(B_k \setminus \Lambda)$ with probability one. This is possible because the configuration in the boundary of $B_k \setminus \Lambda$ is the same in both $X_t$ and $Y_t$.
	
	Combining all these facts, we get there is a coupling such that the expected number of disagreements at time $t+1$ is at most:
	\[1 - \frac{1}{2} + \frac{2d}{L}\left( \frac{L}{8d} + 1\right) = \frac{3}{4} + \frac{2d}{L} \le \frac{7}{8},\]
	provided that $L$ is large enough. 
	The path coupling method \cite{BD} then implies that
	\[\max_{\sigma \in \Omega} \|\mathcal{B}_\mathcal{D}^t(\sigma,\cdot) - \mu(\cdot) \|_{\textsc{tv}} \le n \left(\frac{7}{8}\right)^t.\]
	This implies that 
	the mixing time of $\mathcal{B}_\mathcal{D}$ is $O(\log n)$ and that
	$\lambda^*(\mathcal{B}_\mathcal{D}) \le 7/8$ (see, e.g., Corollary 12.6 in \cite{LPW}); hence,  $\lambda(\mathcal{B}_\mathcal{D}) = 1 - \lambda^*(\mathcal{B}_\mathcal{D}) \ge 1/8$ as claimed.
\end{proof}

\subsection{Comparing tiled block dynamics}
\label{subsection:comparing-tiled-block}

In this subsection
we introduce a more general class of tiled block dynamics
and relate the spectral gaps
of the dynamics in this class 
to that of the heat-bath tiled block dynamics.
This will allow us to deduce bounds 
for the spectral gaps of various tiled block dynamics, a key step in our comparison methodology.

Each dynamics in this class
chooses a tiling $B_k$ uniformly at random from $\mathcal{D}$
and updates the configuration in $B_k$ in a reversible fashion.
Formally,
for each $1 \le k \le m$ and each valid configuration $\tau$ in $B_k^c = V \setminus B_k$, 
let $S_k^\tau$ be the transition matrix of an ergodic Markov chain 
whose state space
is the set of valid configurations in $B_k$ given that $\tau$ is the configuration in $B_k^c$.
That is,  $S_k^\tau$ is a Markov chain acting on the specific tiling $B_k$
with $\tau$ as the \textit{fixed} configuration in the exterior of $B_k$.
We assume that, for each $k$ and $\tau$, $S_k^\tau$ is 
reversible w.r.t.\ $\mu(\cdot | \tau)$ and positive semidefinite.
Using the $S_k^\tau$'s we define a tiled block dynamics as follows.
Given
a spin configuration $\sigma_t \in \Omega$, consider the chain that performs the following update to obtain $\sigma_{t+1} \in \Omega$:	
\begin{enumerate}
	\item Pick $k \in \{1,...,m\}$ u.a.r.;
	
	\item If $\tau = \sigma_t(B_k^c)$, let $\sigma_{t+1}(B_k^c) = \tau$ and perform a step of $S_k^\tau$ to obtain $\sigma_{t+1}(B_k)$. 
\end{enumerate}	
Let $S_\mathcal{D}$ denote the transition matrix of this chain. The ergodicity and reversibility of $S_\mathcal{D}$ w.r.t.\ $\mu$ follow from the ergodicity and reversibility of the $S_k^\tau$'s w.r.t.\ $\mu(\cdot | \tau)$.
We establish the following inequality between the spectral gaps of $\mathcal{B}_\mathcal{D}$ and $S_\mathcal{D}$. For $A \subset V$, let $\Omega(A)$ be the set of the valid configurations of $A$. Then,

\begin{lemma}
	\label{lemma:parallel-block}
	$$\lambda(S_\mathcal{D}) \ge \lambda(\mathcal{B}_\mathcal{D}) \min\limits_{k=1,\dots,m} \min\limits_{\tau \in \Omega(B_k^c)} \lambda(S_k^\tau).$$
\end{lemma}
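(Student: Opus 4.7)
The plan is to compare the Dirichlet forms of $S_\mathcal{D}$ and $\mathcal{B}_\mathcal{D}$ block by block, and then invoke the variational characterization in~(\ref{eq:sw:gap}). Both chains are positive semidefinite: each single-block heat-bath update is a projection (sampling twice from $\mu(\cdot\mid\tau)$ has the same effect as sampling once), so $\mathcal{B}_\mathcal{D}$ is an average of projections and hence PSD, while $S_\mathcal{D}$ is an average of the lifts of the $S_k^\tau$, each of which is self-adjoint and PSD in $L_2(\mu)$ by the hypotheses on $S_k^\tau$. Consequently both $\lambda(\mathcal{B}_\mathcal{D})$ and $\lambda(S_\mathcal{D})$ admit the Dirichlet-form representation~(\ref{eq:sw:gap}).

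The key computation is a conditional decomposition of each Dirichlet form. Writing any $\sigma \in \Omega$ as $(\sigma(B_k),\tau)$ with $\tau = \sigma(B_k^c)$ and factoring $\mu(\sigma) = \mu(\tau)\,\mu(\sigma(B_k)\mid\tau)$, a direct calculation from the definition of the heat-bath kernel $P_{B_k}$ on block $B_k$ gives
\begin{equation*}
\tfrac{1}{2}\sum_{\sigma,\sigma'}\mu(\sigma)P_{B_k}(\sigma,\sigma')(f(\sigma)-f(\sigma'))^2 \;=\; \sum_{\tau\in\Omega(B_k^c)}\mu(\tau)\,\Var_{\mu(\cdot\mid\tau)}(f_\tau),
\end{equation*}
where $f_\tau$ denotes $f$ restricted to the fiber of configurations with $B_k^c$-marginal $\tau$. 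Averaging over $k$ yields
\begin{equation*}
\mathcal{E}_{\mathcal{B}_\mathcal{D}}(f,f) \;=\; \frac{1}{m}\sum_{k=1}^{m}\sum_{\tau\in\Omega(B_k^c)}\mu(\tau)\,\Var_{\mu(\cdot\mid\tau)}(f_\tau),
\end{equation*}
and the identical decomposition applied to $S_\mathcal{D}$ produces
\begin{equation*}
\mathcal{E}_{S_\mathcal{D}}(f,f) \;=\; \frac{1}{m}\sum_{k=1}^{m}\sum_{\tau\in\Omega(B_k^c)}\mu(\tau)\,\mathcal{E}_{S_k^\tau}(f_\tau,f_\tau).
\end{equation*}

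Now I apply the Poincaré inequality fiberwise: since $S_k^\tau$ is reversible and PSD w.r.t.\ $\mu(\cdot\mid\tau)$, one has $\mathcal{E}_{S_k^\tau}(f_\tau,f_\tau) \ge \lambda(S_k^\tau)\,\Var_{\mu(\cdot\mid\tau)}(f_\tau)$. Setting $\gamma := \min_{k,\tau}\lambda(S_k^\tau)$ and comparing the last two displays yields $\mathcal{E}_{S_\mathcal{D}}(f,f) \ge \gamma\cdot\mathcal{E}_{\mathcal{B}_\mathcal{D}}(f,f)$ for every $f$, and a second application of the Poincaré inequality, this time for $\mathcal{B}_\mathcal{D}$, gives $\mathcal{E}_{S_\mathcal{D}}(f,f) \ge \gamma\,\lambda(\mathcal{B}_\mathcal{D})\,\Var_\mu(f)$. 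Dividing by $\Var_\mu(f)$ and taking the infimum over $f$ via~(\ref{eq:sw:gap}) applied to $S_\mathcal{D}$ delivers the claim. I do not anticipate a serious obstacle; the only subtlety is verifying positive semidefiniteness of both $\mathcal{B}_\mathcal{D}$ and $S_\mathcal{D}$ so that~(\ref{eq:sw:gap}) can be used in both directions, and that is immediate from the projection property of heat-bath steps together with the standing PSD hypothesis on the $S_k^\tau$.
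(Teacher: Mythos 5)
Your proposal is correct and follows essentially the same route as the paper: decompose both Dirichlet forms over blocks and exterior configurations (the paper's Fact \ref{fact:var-heat-bath} and equation (\ref{eq:df:H})), apply the variational characterization (\ref{eq:sw:gap}) fiberwise to each $S_k^\tau$, and verify positive semidefiniteness of $\mathcal{B}_\mathcal{D}$ and $S_\mathcal{D}$ so that (\ref{eq:sw:gap}) applies to the global chains as well. No gaps; this is the paper's argument.
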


\noindent
In words, this inequality states
that the spectral gap of a generic tiled block dynamics $S_\mathcal{D}$
is bounded from below by the spectral gap of the tiled \textit{heat-bath} block dynamics
times the smallest spectral gap of any of the $S_k^\tau$'s.
This is indeed a natural inequality since
roughly $\lambda^{-1}(S_k^\tau)$ steps of $S_k^\tau$ should be enough to simulate
one step of $\mathcal{B}_\mathcal{D}$ in $B_k$ when $\tau$ is the configuration in $B_k^c$.
Lemmas \ref{lemma:heat-bath-parallel-block} and \ref{lemma:parallel-block} put together
allow us to bound the spectral gap of a general class of tiled block dynamics,
provided that SSM holds and that
we know the spectral gaps of the $S_k^\tau$'s.
As we shall see in our later applications of these results, 
the geometry of the tilings in $\mathcal{D}$ 
was chosen in a way that facilitates 
the analysis of many natural choices of the $S_k^\tau$'s.


Before proving Lemma \ref{lemma:parallel-block} we state the two
standard properties of heat-bath updates which will be used in the proof. 
For $A \subset V$ let $K_A$ be the transition matrix that corresponds to a heat-bath update in the set $A$. That is,
for $\sigma,\sigma' \in \Omega$, 
$$
K_A(\sigma,\sigma') = \1(\sigma(A^c) = \sigma'(A^c)) \mu(\sigma'(A) \mid \sigma(A^c)).
$$
For ease of notation let $\mathcal{E}_{A}$ denote the Dirichlet form of $K_A$; i.e., $\mathcal{E}_{A} = \mathcal{E}_{K_A}$.  
\begin{fact} 
	\label{fact:var-heat-bath-positive}
	\label{fact:var-heat-bath}
	$K_A$ is positive semidefinite. Moreover, for any $f \in \R^{|\Omega|}$
	\[\df{A}{f} = \sum_{\tau \in \Omega(A^c)} \var{A}{\tau}(f) \mu(\tau),\]
	where 
	$\var{A}{\tau} (f) = \E{A}{\tau} [(f - \E{A}{\tau}[f])^2]$ and
	$\E{A}{\tau} [f] = \sum_{\sigma \in \Omega(A)} f(\sigma \cup \tau)\mu(\sigma \mid \tau)$. 
\end{fact}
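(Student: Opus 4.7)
The plan is to recognize $K_A$ as the conditional-expectation operator associated with the sub-$\sigma$-algebra generated by the spins in $A^c$. Concretely, for any $f\in\R^{|\Omega|}$ and any $\sigma \in \Omega$, a direct unrolling of the definition gives $(K_A f)(\sigma) = \sum_{\eta\in\Omega(A)} f(\eta\cup\sigma(A^c))\,\mu(\eta\mid\sigma(A^c)) = \E{A}{\sigma(A^c)}[f]$. From this identification the proof will follow quickly in two steps.

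For positive semidefiniteness, I would first check that $K_A$ is self-adjoint in $L_2(\mu)$: this is immediate from the detailed-balance identity $\mu(\sigma)K_A(\sigma,\sigma') = \mu(\sigma')K_A(\sigma',\sigma)$, which reduces to the trivial fact $\mu(\sigma(A^c))\mu(\sigma'(A)\mid\sigma(A^c)) = \mu(\sigma'(A^c))\mu(\sigma(A)\mid\sigma'(A^c))$ on the diagonal event $\sigma(A^c)=\sigma'(A^c)$. Next I would verify that $K_A$ is idempotent: since $(K_A f)(\sigma)$ depends on $\sigma$ only through $\sigma(A^c)$, applying $K_A$ again just averages a constant over $A$ with the conditional measure, giving back the same function. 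Hence $K_A = K_A^2 = K_A^* K_A$, so $\inner{f}{K_A f}{\mu} = \inner{K_A f}{K_A f}{\mu} \ge 0$, proving that $K_A$ is positive semidefinite.

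For the Dirichlet-form identity, I would write $\df{A}{f} = \inner{f}{(I-K_A)f}{\mu} = \inner{f}{f}{\mu} - \inner{f}{K_A f}{\mu}$ and evaluate each term by splitting $\mu(\sigma) = \mu(\tau)\mu(\sigma(A)\mid\tau)$ with $\tau=\sigma(A^c)$. The first term becomes $\sum_{\tau\in\Omega(A^c)}\mu(\tau)\,\E{A}{\tau}[f^2]$. For the second, using the pointwise identity $(K_A f)(\sigma) = \E{A}{\tau}[f]$, grouping $\sigma$ by its restriction $\tau$ and summing over $\sigma(A)$ against $\mu(\cdot\mid\tau)$ yields $\sum_{\tau}\mu(\tau)\bigl(\E{A}{\tau}[f]\bigr)^2$. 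Subtracting gives $\df{A}{f} = \sum_\tau \mu(\tau)\bigl(\E{A}{\tau}[f^2] - \E{A}{\tau}[f]^2\bigr) = \sum_\tau \mu(\tau)\var{A}{\tau}(f)$, as required.

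There is no genuinely hard step here; the only real care is bookkeeping with the factorization $\mu(\sigma) = \mu(\tau)\mu(\sigma(A)\mid\tau)$ and making sure the indicator $\1(\sigma(A^c)=\sigma'(A^c))$ in the definition of $K_A$ is correctly used to collapse sums over pairs $(\sigma,\sigma')$ into single sums over $\tau$ and conditional expectations on $A$. The structural content is just that heat-bath updates are orthogonal projections onto the subspace of $L_2(\mu)$-functions measurable with respect to the spins outside $A$, and that the Dirichlet form of any such projection equals the expected conditional variance.
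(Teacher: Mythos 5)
Your proof is correct and takes essentially the same route as the paper: positive semidefiniteness from $K_A=K_A^*=K_A^2$ (the projection/conditional-expectation viewpoint), and the Dirichlet-form identity by factoring $\mu(\sigma)=\mu(\tau)\,\mu(\sigma(A)\mid\tau)$ and recognizing the conditional variance; the only cosmetic difference is that the paper computes $\df{A}{f}$ from the pairwise form $\tfrac12\sum_{x,y}\mu(x)K_A(x,y)(f(x)-f(y))^2$, whereas you use $\inner{f}{f}{\mu}-\inner{f}{K_A f}{\mu}$ together with $\var{A}{\tau}(f)=\E{A}{\tau}[f^2]-\E{A}{\tau}[f]^2$. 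One small slip worth fixing: the displayed ``trivial fact'' you cite for detailed balance, $\mu(\sigma(A^c))\mu(\sigma'(A)\mid\sigma(A^c)) = \mu(\sigma'(A^c))\mu(\sigma(A)\mid\sigma'(A^c))$, on the diagonal event literally says $\mu(\sigma')=\mu(\sigma)$ and is false in general; the correct verification is that, with $\tau=\sigma(A^c)=\sigma'(A^c)$, both sides of $\mu(\sigma)K_A(\sigma,\sigma')=\mu(\sigma')K_A(\sigma',\sigma)$ equal $\mu(\tau)\,\mu(\sigma(A)\mid\tau)\,\mu(\sigma'(A)\mid\tau)$. This does not affect the rest of your argument, which goes through as written.
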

\noindent
We proceed with the proof of Lemma \ref{lemma:parallel-block}.

\begin{proof}[Proof of Lemma \ref{lemma:parallel-block}]
	
		Let 
		$f \in \R^{|\Omega|}$. Since $\mathcal{B}_\mathcal{D} = \frac{1}{m} \sum_{k=1}^m K_{B_k}$,
		\begin{equation}
		\label{eq:df:H}
		\df{\mathcal{B}_\mathcal{D}}{f} = \frac{1}{m} \sum_{k=1}^m \df{B_k}{f}  = \frac{1}{m} \sum_{k=1}^m \sum_{\tau \in \Omega(B_k^c)}  \var{B_k}{\tau} (f) \,\, \mu (\tau),
		\end{equation}
		by Fact \ref{fact:var-heat-bath}.
		
	   	For $\tau \in \Omega(B_k^c)$, 
	   	let $\Omega^\tau(B_k)$ be the set of valid configurations on $B_k$ given that $\tau$ is the configuration on $V \setminus B_k$.
	   	For $f \in \R^{|\Omega|}$, let $f_{\tau} \in \R^{|\Omega^\tau(B_k)|}$ be 
	   	such that $f_{\tau} (\sigma) = f(\sigma \cup \tau)$ for any $\sigma \in \Omega^\tau(B_k)$.  
	   	By assumption,  $S_k^\tau$ is positive semidefinite, ergodic and reversible w.r.t.\ $\mu(\cdot\mid\tau)$. Since also
	   	$\var{\mu(\cdot| \tau)}{} (f_{\tau}) = \var{B_k}{\tau} (f),$
	   	from (\ref{eq:sw:gap}), we get
	   	\begin{equation}
	   	\label{eq:local-gap:var}
	   	0 < \lambda(S_k^\tau) \le \frac{\df{S_k^\tau}{f_{\tau}}}{\var{\mu(\cdot| \tau)}{} (f_{\tau}) } = \frac{\df{S_k^\tau}{f_{\tau}}}{\var{B_k}{\tau} (f)}.
	   	\end{equation}
	   	Let
	   	$$
	   	\lambda_{\textrm{min}} = \min_{k=1,\dots,m} \min_{\tau \in \Omega(B_k^c)} \lambda(S_k^\tau).
	   	$$
	   	Then, from the definition of the Dirichlet form, (\ref{eq:df:H}) and (\ref{eq:local-gap:var}) we get
	   	\begin{align}
	   	\label{eq:sd-skt}
	   	\df{S_\mathcal{D}}{f} 
	   	&= \frac1m \sum_{k=1}^m \sum_{\tau \in \Omega(B_k^c)} \mu(\tau)\df{S_k^\tau}{f_\tau} \\
	   	&	\geq \frac1m \sum_{k=1}^m \sum_{\tau \in \Omega(B_k^c)} \mu(\tau)\lambda(S_k^\tau)\var{B_k}{\tau} (f) 
	    \geq \lambda_{\textrm{min}}\df{\mathcal{B}_\mathcal{D}}{f} \notag.
	   	\end{align}
	   	
	   	Finally, we claim that both $\mathcal{B}_\mathcal{D}$ and $S_\mathcal{D}$ are positive semidefinite.
	   	$\mathcal{B}_\mathcal{D}$ is an average over heat-bath updates each of which is positive semidefinite by Fact \ref{fact:var-heat-bath-positive}. 
	   	Hence, 
	   	$\mathcal{B}_\mathcal{D}$ is positive semidefinite. 
	   	Similarly, the positivity of $S_\mathcal{D}$ follows from the fact that by assumption
	   	the $S_k^{\tau}$'s are positive semidefinite.
	   	Indeed, from (\ref{eq:sd-skt}) and the definition of Dirichlet form, we get
	   	\begin{align*}
	   	\inner{f}{S_\mathcal{D} f}{\mu}
	   	= \frac1m \sum_{k=1}^m \sum_{\tau \in \Omega(B_k^c)} \mu(\tau)\inner{f_\tau} {S_k^{\tau} f_\tau}{\mu(\cdot|\tau)} \ge 0.
	   	\end{align*}
		Therefore, by (\ref{eq:sw:gap}), $\lambda(S_\mathcal{D})  \ge \lambda(\mathcal{B}_\mathcal{D}) \lambda_{\rm min}$, as claimed.
\end{proof}

We conclude this section with the proof of Fact \ref{fact:var-heat-bath}.

\begin{proof}[Proof of Fact \ref{fact:var-heat-bath}]
	Since $K_A = K_A^* = K_A^2$, $K_A$ positive semidefinite.
	For $\tau \in \Omega(A^c)$, 
	let $\Omega^\tau(A)$ be the set of valid configurations on $A$ when the configuration on $V \setminus A$ is $\tau$. Then,
	by the definition of the Dirichlet form,
	\begin{align}
	\df{A}{f} 
	&= \frac{1}{2} \sum_{\tau \in \Omega(A^c)} \sum_{\sigma,\sigma' \in \Omega^\tau(A)} \mu(\sigma \cup \tau) \mu(\sigma' \mid \tau) (f(\sigma\cup\tau) - f(\sigma'\cup\tau))^2\notag\\
	&= \frac{1}{2} \sum_{\tau \in \Omega(A^c)} \mu(\tau) \sum_{\sigma,\sigma' \in \Omega^\tau(A)} \mu(\sigma \mid \tau) \mu(\sigma' \mid \tau) (f(\sigma \cup \tau) - f(\sigma'\cup\tau))^2 \notag\\
	&= \sum_{\tau \in \Omega(A^c)} \var{A}{\tau}(f) \mu(\tau) \notag.\qedhere
	\end{align}
\end{proof}

\section{SSM and the Swendsen-Wang dynamics for the Potts model}
\label{section:sw}

In this section we 
show that SSM implies fast mixing of the {\it Swendsen-Wang (SW)\/} dynamics.
In particular, we prove that when $V \subset \Z^d$ is a finite $d$-dimensional cube,
the \textit{relaxation time} (i.e., the inverse spectral gap)
of the SW dynamics on the graph induced by $V$ is at most $O(1)$, provided the system has SSM.

The SW dynamics is a {\it non-local\/} Markov chain
for the ferromagnetic Potts model ($\beta > 0$) with no external field ($h_i = 0$ for all $i$); see Section \ref{subsection:prelim:spin-systems} for the definition of this model.
The state space of the SW dynamics is the set
of Potts configurations $\PC$, and it is straightforward to verify the reversibility of this chain w.r.t.\ the Potts measure, which, for distinctness, we will denote $\pi$ (see, e.g., \cite{ES}).
We focus here on the free boundary condition case for clarity, but our results hold without significant modifications for the SW dynamics with arbitrary boundary conditions.

Let $V \subset \Z^d$ be a  $d$-dimensional cube of volume $n$ and let $G=(V,E)$ be the induced subgraph.
Given a Potts configuration $\sigma_t$, a step of the SW dynamics
results in a new configuration $\sigma_{t+1}$ as follows:
\begin{enumerate}
\item Add each \textit{monochromatic} edge independently with probability $p= 1 - e^{-\beta}$
to obtain a \textit{joint} configuration~$(A_t,\sigma_t)$, where $A_t \subseteq E$ and an edge $(u,v)$ is monochromatic if $\sigma_t(u)=\sigma_t(v)$;

\item Assign to each connected component of $(V,A_t)$ independently a new spin from $\{1,\dots,q\}$ u.a.r.;

\item Remove all edges to obtain the new Potts configuration $\sigma_{t+1}$.
\end{enumerate}	

\noindent
Let $\PSW$ be the transition matrix of the SW dynamics on $G$. 
In this section we prove Theorem \ref{thm:intro:sw} from the introduction. 
Corollary \ref{cor:sw:intro}
follows directly from Theorem \ref{thm:intro:sw} and the fact that, in $\Z^2$, SSM holds for all $\beta < \beta_c(q)$ and $q \ge 2$ (see \cite{BDC,KA1,MOS}).
In the proof of Theorem \ref{thm:intro:sw} we use several auxiliary Markov chains that we define and briefly motivate in 
Section \ref{subsection:mc}. The proof of Theorem \ref{thm:intro:sw} is then provided in Section~\ref{subsection:proofs}.

\subsection{Auxiliary Markov chains}
\label{subsection:mc}

In Section \ref{section:blocks} we established that the spectral gap of the heat-bath tiled block dynamics is at least $1/8$, provided SSM holds (see Lemma \ref{lemma:heat-bath-parallel-block}). To prove Theorem \ref{thm:intro:sw}
we show that the spectral gap of the SW dynamics is at least the spectral gap of the heat-bath tiled blocked dynamics times a constant that depends only on $\beta$, $L$ and $d$.
Establishing such inequality directly
seems difficult because the SW dynamics could change the spins in a large component intersecting many of the $d$-dimensional cubes in a tiling. To work around this issue we introduce the following Markov chain.

\bigskip\noindent
\textbf{Isolated vertices (SW) dynamics} ${\PI}$\textbf{. }
Consider the Markov chain that, given a Potts configuration $\sigma_t$ at time $t$, performs the following update to obtain $\sigma_{t+1}$:	
\begin{enumerate}
	\item Add each monochromatic edge independently with probability $p$ to obtain $(A_t \subseteq E,\sigma_t)$;		
	\item Assign to each \textit{isolated} vertex of $(V,A_t)$ independently a new spin from $\{1,\dots,q\}$ u.a.r.; 
	\item Remove all edges to obtain $\sigma_{t+1}$.
\end{enumerate}	
We call this chain the \textit{isolated vertices dynamics} and with a slight abuse of notation we let $\PI$ also denote its transition matrix. Intuitively, the SW dynamics ought to be faster than the isolated vertices dynamics since it updates all the components
of any size simultaneously, instead of just the isolated vertices. We show that this is indeed the case.
\begin{lemma}
	\label{lemma:sw:gap-ineq-1}
	$\lambda(\PSW) \ge \lambda(\PI)$.
\end{lemma}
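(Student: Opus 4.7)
The plan is to realize both dynamics as projections of heat-bath updates on the joint Edwards--Sokal space, in the spirit of Ullrich's framework. Let $\mu_{\rm ES}$ denote the Edwards--Sokal measure on joint configurations $(A,\sigma)\in\JC$, with $A\subseteq E$ and $\sigma\in\PC$ subject to $\sigma$ being constant on each connected component of $(V,A)$; its Potts marginal is $\pi$. Let $K:L_2(\pi)\to L_2(\mu_{\rm ES})$ be the lift $(Kf)(A,\sigma)=f(\sigma)$, which is a linear isometry. I will use three heat-bath updates on $L_2(\mu_{\rm ES})$, each of which is a self-adjoint projection: $R_{\rm edge}$ resamples $A$ given $\sigma$ (adding each monochromatic edge with probability $p$); $R_{\rm spin}$ resamples all of $\sigma$ given $A$; and $R_{\rm iso}$ resamples only the isolated spins given $A$ and the non-isolated ones. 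The step rules of the two chains translate into $\PSW=K^*R_{\rm spin}R_{\rm edge}K=K^*R_{\rm spin}K$ and $\PI=K^*R_{\rm iso}R_{\rm edge}K=K^*R_{\rm iso}K$, the second equality in each using $R_{\rm edge}K=K$ (because $Kf$ depends on $(A,\sigma)$ only through $\sigma$).

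From these identities and the isometry of $K$, the Dirichlet forms take the form
\[
\df{\PSW}{f} \,=\, \langle Kf,(I-R_{\rm spin})Kf\rangle_{\mu_{\rm ES}}, \qquad \df{\PI}{f} \,=\, \langle Kf,(I-R_{\rm iso})Kf\rangle_{\mu_{\rm ES}},
\]
and the same manipulation shows $\langle f,\PSW f\rangle_\pi=\langle Kf,R_{\rm spin}Kf\rangle_{\mu_{\rm ES}}\ge 0$, with an analogous identity for $\PI$; so both chains are positive semidefinite. This lets me invoke the variational formula (\ref{eq:sw:gap}) for their absolute spectral gaps, reducing the desired inequality $\lambda(\PSW)\ge\lambda(\PI)$ to the pointwise bound $\df{\PSW}{f}\ge\df{\PI}{f}$ for every $f\in\R^{|\PC|}$, equivalently to $R_{\rm iso}-R_{\rm spin}\succeq 0$ on $L_2(\mu_{\rm ES})$.

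The final step is to verify this operator inequality. Let $R_{\rm ni}$ be the heat-bath update of the spins on the non-isolated components of $(V,A)$; conditional on $A$, the non-isolated component spins and the isolated ones are independent uniform variables under $\mu_{\rm ES}$, so $R_{\rm iso}$ and $R_{\rm ni}$ act on disjoint coordinates, commute, and satisfy $R_{\rm spin}=R_{\rm iso}R_{\rm ni}$. Hence
\[
R_{\rm iso}-R_{\rm spin} \,=\, R_{\rm iso}(I-R_{\rm ni}),
\]
a product of two commuting orthogonal projections, which is itself an orthogonal projection and therefore positive semidefinite. The main obstacle, in my view, is not combinatorial but bookkeeping: setting up the operators $K$, $R_{\rm edge}$, $R_{\rm spin}$, $R_{\rm iso}$, $R_{\rm ni}$ precisely enough to justify $R_{\rm edge}K=K$, $R_{\rm spin}=R_{\rm iso}R_{\rm ni}$ and the commutation $R_{\rm iso}R_{\rm ni}=R_{\rm ni}R_{\rm iso}$, each of which ultimately follows from the independence structure of $\mu_{\rm ES}$ conditional on $A$.
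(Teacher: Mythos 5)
Your proof is correct and follows essentially the same route as the paper: both lift to the Edwards--Sokal joint space (your $K,K^*$ are the paper's $T^*,T$, and your $R_{\rm spin},R_{\rm iso}$ are its $R,Q$), reduce the claim to the quadratic-form inequality $\langle f,\PSW f\rangle_\pi \le \langle f,\PI f\rangle_\pi$, and conclude via positive semidefiniteness and the variational characterization of the spectral gap. The only cosmetic difference is the last algebraic step: the paper writes $R=QRQ$ and applies the contraction inequality to the projection $R$, whereas you factor $R_{\rm spin}=R_{\rm iso}R_{\rm ni}$ and note that $R_{\rm iso}-R_{\rm spin}=R_{\rm iso}(I-R_{\rm ni})$ is a product of commuting orthogonal projections, hence positive semidefinite---an equivalent observation.
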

\noindent
The proof of this lemma is given in Section \ref{subsection:sw:comparison}.
The motivation for introducing ${\PI}$ is that now we can easily define a tiled variant of this chain as follows. 

\bigskip\noindent
\textbf{Isolated vertices tiled dynamics} $\PIT$\textbf{. }
Recall that $\mathcal{D} = \{B_1,\dots,B_m\}$ is the collection of tilings; see Section~\ref{section:blocks} for the precise definition. 
Given a Potts configuration $\sigma_t$, one step of the \textit{isolated vertices tiled dynamics} is given by:
\begin{enumerate}
	\item Add each monochromatic edge independently with probability $p$ to obtain $(A_t \subseteq E,\sigma_t)$;	
	\item Pick $k \in \{1,...,m\}$ u.a.r.;	
	\item Assign to each \textit{isolated} vertex in $B_k$ independently a new spin from $\{1,\dots,q\}$ u.a.r.; 
	\item Remove all edges to obtain $\sigma_{t+1}$.
\end{enumerate}
We use $\PIT$ to denote the transition matrix of this chain. 
Intuitively, $\PI$ should reach equilibrium faster than $\PIT$ 
since in each step it updates the spins of all isolated vertices, instead of just those in a single tiling. This intuition is made rigorous in the following lemma, which is proved in Section \ref{subsection:sw:comparison}.
\begin{lemma}
	\label{lemma:sw:gap-ineq-2}
	$\lambda(\PI) \ge \lambda(\PIT)$.
\end{lemma}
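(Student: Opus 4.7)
The plan is to establish $\df{\PI}{f} \ge \df{\PIT}{f}$ for every $f \in \R^{|\PC|}$, together with positive semidefiniteness of both chains; (\ref{eq:sw:gap}) then yields $\lambda(\PI) \ge \lambda(\PIT)$.

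To compare the two Dirichlet forms I would pass to the joint Edwards--Sokal space. Let $\nu(A\mid\sigma)$ denote $p$-percolation on the monochromatic edges of $\sigma$ and $\mu_J(\sigma,A) = \pi(\sigma)\nu(A\mid\sigma)$ the resulting joint measure. Both $\PI$ and $\PIT$ factor through ``sample $A$ from $\nu(\cdot\mid\sigma)$, then update isolated vertices'', which leads to
\[\df{\PI}{f} \,=\, \sum_A \mu_J(A)\,\df{Q^A}{f}, \qquad \df{\PIT}{f} \,=\, \frac{1}{m}\sum_{k=1}^m\sum_A \mu_J(A)\,\df{Q^{A,k}}{f},\]
where the inner Dirichlet forms are taken with respect to $\mu_J(\cdot\mid A)$, and $Q^A$ (resp.\ $Q^{A,k}$) is the transition matrix that resamples the isolated vertices $I_A$ (resp.\ $I_A\cap B_k$). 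The key observation is that under $\mu_J(\cdot\mid A)$ every isolated vertex of $(V,A)$ is its own connected component, so the conditional marginal on $I_A$ is a product uniform measure on $\{1,\dots,q\}^{I_A}$. Consequently $Q^A$ and $Q^{A,k}$ are \emph{heat-bath} updates on $I_A$ and $I_A\cap B_k$ with respect to $\mu_J(\cdot\mid A)$.

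At this point I would invoke the monotonicity of heat-bath Dirichlet forms in the update set: whenever $A'\subseteq A$, $\df{K_A}{f}\ge\df{K_{A'}}{f}$. This follows from Fact \ref{fact:var-heat-bath} together with the law of total variance, which yields $\var{A}{\tau}(f)\ge \E{A\setminus A'}{\tau}[\var{A'}{(\cdot,\tau)}(f)]$ pointwise in $\tau$. Applied to $I_A\cap B_k \subseteq I_A$ for each $k$, and averaged over $k$ and $A$, this gives $\df{\PI}{f}\ge\df{\PIT}{f}$.

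Positive semidefiniteness of both chains is implicit in the same decomposition: $\inner{f}{\PI f}{\pi} = \sum_A \mu_J(A)\inner{f}{Q^A f}{\mu_J(\cdot\mid A)}\ge 0$, because each $Q^A$ is a heat-bath update and therefore positive semidefinite by Fact \ref{fact:var-heat-bath-positive}; the identical argument handles $\PIT$. Combined with the Dirichlet form comparison, this forces $\lambda_2(\PI)\le \lambda_2(\PIT)$, and nonnegativity of all relevant eigenvalues converts it into the desired bound on the absolute spectral gaps: $\lambda(\PI)=1-\lambda_2(\PI)\ge 1-\lambda_2(\PIT)=\lambda(\PIT)$. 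The main technical point is that, although the stationary measure of both chains is the Potts measure $\pi$, the \emph{conditional} spin update given $A$ is a heat-bath step with respect to the Edwards--Sokal conditional $\mu_J(\cdot\mid A)$, and not with respect to the Potts conditional; once this identification is clean, monotonicity of heat-bath Dirichlet forms under shrinking the update set supplies the comparison for free.
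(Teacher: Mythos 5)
Your proof is correct, and it is organized differently from the paper's. The paper proves Lemma \ref{lemma:sw:gap-ineq-2} by an operator computation in the joint Edwards--Sokal space: writing $\PI = TQT^*$ and $\PIT = \frac1m\sum_k TQ_kT^*$, it uses the identity $Q = Q_kQQ_k$ (Fact \ref{fact:sw:joint-space}) together with the contraction bound (\ref{eq:sw:contraction}) to get $\inner{f}{\PI f}{\pi}\le\inner{f}{TQ_kT^*f}{\pi}$ for every $k$, then averages over $k$ and passes to Dirichlet forms. You instead condition on the edge configuration $A$, observe that under the conditional Edwards--Sokal measure $\nu(\cdot\mid A)$ the spins on isolated vertices are i.i.d.\ uniform so that both conditional updates are genuine heat-bath updates on $\mathcal{I}(A)$ and $\mathcal{I}(A)\cap B_k$ respectively, and then invoke monotonicity of heat-bath Dirichlet forms under shrinking the update set --- which is exactly Fact \ref{fact:var-monotonicity}, a tool the paper deploys only later, in Section \ref{section:partition}. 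The two arguments are the same at bottom (Fact \ref{fact:var-monotonicity} is itself proved via $K_B=K_AK_BK_A$ plus (\ref{eq:sw:contraction}), the exact mechanism behind $Q=Q_kQQ_k$), but your version buys a cleaner probabilistic reading: conditioned on $A$ the two chains are heat-bath updates on nested sets, so the comparison is immediate; the paper's version avoids introducing the conditional decomposition and also packages the identities needed for Lemma \ref{lemma:sw:reversibility}. Two small points to tighten: your decomposition of $\df{\PI}{f}$ over $A$ implicitly uses that the spin marginal of the joint measure is $\pi$ (so the variance terms match), which is worth one line; and the variational formula (\ref{eq:sw:gap}) requires reversibility of $\PI$ and $\PIT$ w.r.t.\ $\pi$ in addition to positive semidefiniteness --- this follows from the same block-diagonal decomposition (each $Q^A$ is self-adjoint for $\nu(\cdot\mid A)$), or from Lemma \ref{lemma:sw:reversibility}, but you should say so explicitly rather than fold it into ``positive semidefinite.''
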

Finally, it will be useful in our proofs to consider yet another variant of the isolated vertices dynamics that acts on a particular tiling with a fixed configuration in its exterior.
These chains correspond to the $S_k^\tau$'s from Section \ref{section:blocks}
for the tiled dynamics $\PIT$.
 
\bigskip\noindent
\textbf{Conditional isolated vertices tiled dynamics} $\PITC{k}{\tau}$\textbf{. }
For each $k=1,\dots,m$ and each fixed configuration $\tau$ in $B_k^c$, we consider the Markov chain with transition matrix $\PITC{k}{\tau}$ 
and state space $\PC(B_k)$, that
if $\sigma_t \in \PC(B_k)$, 
then $\sigma_{t+1} \in \PC(B_k)$ is obtained as follows:
\begin{enumerate}
	\item Add each monochromatic edge in $E$ (according to $\sigma_t \cup\tau$) independently with probability $p$;
	\item Assign to each \textit{isolated} vertex in $B_k$ independently a new spin from $\{1,\dots,q\}$ u.a.r.; 
	\item Remove all edges to obtain $\sigma_{t+1}$.
\end{enumerate}

\subsection{Proof of Theorem \ref{thm:intro:sw}}
\label{subsection:proofs}

Let 
\[\lambda_{\textrm{min}} = \min_{k=1,\dots,m} \min_{\tau \in \PC(B_k^c)} \lambda(\PITC{k}{\tau}).\] 
(Recall that $\PC(B_k^c)$ is the set of valid configurations of $B_k^c$ and $\PITC{k}{\tau}$
is the conditional isolated vertex tiled dynamics on $B_k$ with $\tau$ as the fixed configuration in the exterior of $B_k$.)
We prove the following two lemmas that, together 
with 
Lemmas \ref{lemma:sw:gap-ineq-1} and \ref{lemma:sw:gap-ineq-2}
and the results in Section \ref{section:blocks},
imply Theorem \ref{thm:intro:sw}.

\begin{lemma} \
	\label{lemma:sw:reversibility}
	\begin{enumerate}
		\item[(i)] $\PI$ and $\PIT$ are reversible w.r.t.\ $\pi$ and positive semidefinite.
		\item[(ii)] For all $k=1,\dots,m$ and $\tau \in \PC(B_k^c)$, $\PITC{k}{\tau}$ is reversible w.r.t.\ $\pi(\cdot | \tau)$ and positive semidefinite.
	\end{enumerate}
\end{lemma}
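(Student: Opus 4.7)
The plan is to realize all three chains as compressions of heat-bath updates on the Edwards--Sokal joint space, and then to read off self-adjointness and positive semidefiniteness from this factorization in one line. Let $\nu$ be the Edwards--Sokal measure on $\JC$ with density proportional to $p^{|A|}(1-p)^{|E\setminus A|}\prod_{(u,v)\in A}\1(\sigma(u)=\sigma(v))$. Its spin marginal is $\pi$, and its conditional $\nu(A\mid\sigma)$ is exactly the edge-activation step (Step~1) of each of $\PI$, $\PIT$, $\PITC{k}{\tau}$. Let $T\colon L^2(\pi)\to L^2(\nu)$ be the isometric embedding $(Tf)(A,\sigma)=f(\sigma)$, so that $T^*T=I$ and $(T^*g)(\sigma)=\sum_A\nu(A\mid\sigma)\,g(A,\sigma)$.

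The central observation is that $\PI=T^*KT$, $\PIT=T^*\widetilde K T$, and (with the analogous lift $T_\tau$ and conditional joint measure $\nu(\cdot\mid\tau)$) $\PITC{k}{\tau}=T_\tau^*\,K_k^{\tau}\,T_\tau$, where $K$, $\widetilde K$, $K_k^\tau$ are conditional expectations on the appropriate $L^2$ spaces of joint measures. Indeed, under $\nu$ the spins at the vertices isolated in $(V,A)$ are conditionally independent and uniform on $\{1,\dots,q\}$ given $A$ and the spins on the non-isolated vertices; hence the map ``resample all isolated-vertex spins uniformly'' is exactly the conditional expectation $K$ on the $\sigma$-algebra generated by $A$ together with the spins on the non-isolated vertices. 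Restricting the set of vertices to be resampled to those isolated vertices lying in a fixed block $B_k$ yields another conditional expectation $\widetilde K_k$, and one sets $\widetilde K=\tfrac1m\sum_{k=1}^m\widetilde K_k$. The same construction restricted to $\PC(B_k)$ with boundary~$\tau$ gives $K_k^\tau$ on $L^2(\nu(\cdot\mid\tau))$.

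Given the factorization, both claims follow immediately. Any conditional expectation is an orthogonal projection on its $L^2$ space, hence self-adjoint and positive semidefinite, and averages of such operators inherit both properties. Consequently, for every $f\in\R^{|\PC|}$,
\[
\inner{f}{T^*KT\,f}{\pi}=\inner{Tf}{K(Tf)}{\nu}\ge 0,\qquad (T^*KT)^*=T^*K^*T=T^*KT,
\]
and the same identities hold with $K$ replaced by $\widetilde K$ or $K_k^\tau$ (and $T$ by $T_\tau$). Since self-adjointness on $L^2(\pi)$ is equivalent to reversibility with respect to $\pi$ (and likewise for $L^2(\pi(\cdot\mid\tau))$), both (i) and (ii) follow.

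The only point requiring care is to verify that $T^*KT$ agrees entrywise with the transition matrix read off from the algorithmic description of $\PI$, i.e.\ one unfolds $(T^*KT\,f)(\sigma)=\sum_A\nu(A\mid\sigma)(K(Tf))(A,\sigma)$ and matches this against the three-step definition of $\PI$. This is a direct computation from the explicit conditional distributions of $\nu$ and introduces no new ideas; the corresponding checks for $\PIT$ and $\PITC{k}{\tau}$ are identical modulo restricting to $B_k$ and to boundary configuration~$\tau$ (the latter using the DLR-type property that $\nu(\cdot\mid\tau)$ is itself an Edwards--Sokal measure on $B_k$ with spin marginal $\pi(\cdot\mid\tau)$).
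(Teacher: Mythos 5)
Your proposal is correct and follows essentially the same route as the paper: you factor each chain through the Edwards--Sokal joint space as (in the paper's notation) $\PI = TQT^*$, $\PIT = \frac1m\sum_k TQ_kT^*$, and $\PITC{k}{\tau}(\sigma,\sigma')=TQ_kT^*(\sigma\cup\tau,\sigma'\cup\tau)$, and read off reversibility and positive semidefiniteness from the fact that the isolated-vertex resampling operators are self-adjoint idempotent contractions on $L_2(\JC,\nu)$. Your framing of those operators as conditional expectations (orthogonal projections), and your treatment of part (ii) directly via the conditional Edwards--Sokal measure $\nu(\cdot\mid\tau)$ rather than by conditioning the reversibility of $TQ_kT^*$ w.r.t.\ $\pi$, are only cosmetic variations on the paper's argument (its Fact on $Q,Q_k$ and the subsequent computation).
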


\begin{lemma}
	\label{lemma:local-gap}
	$\lambda_{{\rm min}} \ge \frac{1}{7} {\e}^{-2 \beta dL^d}$.
\end{lemma}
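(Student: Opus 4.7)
The plan has three ingredients: a tensor factorization reducing the problem to a single cube of $B_k$; a pointwise lower bound on the transitions of the single-cube chain via the ``all edges removed'' event; and a Dirichlet form computation turning this entry bound into a spectral gap.

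First, I will use the geometry of the tilings. Since $B_k$ is a disjoint union of $d$-dimensional cubes $\Lambda_1,\dots,\Lambda_t$ of volume $L^d$ at pairwise $\ell_1$-distance at least $4$, no edges of $E$ connect distinct $\Lambda_i$'s. Because $\tau$ is held fixed on $B_k^c$, both the monochromatic edge percolation in step~1 of $\PITC{k}{\tau}$ and the uniform-spin resampling in step~2 decouple across the cubes, so $\pi(\cdot\mid\tau)=\prod_i \pi(\cdot\mid \tau|_{\boundary\Lambda_i})$ and $\PITC{k}{\tau}=\bigotimes_{i=1}^{t} P_i$, where $P_i$ is the conditional isolated-vertices chain acting on $\Lambda_i$ with boundary $\tau|_{\boundary\Lambda_i}$. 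Lemma~\ref{lemma:sw:reversibility}(ii) guarantees each $P_i$ is positive semidefinite and reversible, so the standard tensorization identity for positive semidefinite reversible chains gives $\lambda(\PITC{k}{\tau})=\min_i \lambda(P_i)$. It therefore suffices to prove $\lambda(P_i)\ge\tfrac{1}{7}e^{-2\beta dL^d}$ for a single cube.

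Second, I will obtain a uniform lower bound on the entries of $P_i$. Let $M$ be the number of edges of $E$ incident to $\Lambda_i$; an elementary count gives $M\le dL^d+dL^{d-1}\le 2dL^d$. For any $\sigma\in\PC(\Lambda_i)$, the probability that no edge is retained during percolation equals $(1-p)^{m(\sigma)}\ge(1-p)^M=e^{-\beta M}$, where $m(\sigma)\le M$ counts the monochromatic edges of $E$ incident to $\Lambda_i$. On this event every vertex of $\Lambda_i$ is isolated and receives an independent uniform spin, so the new configuration is uniformly distributed on the $N:=q^{L^d}$ elements of $\PC(\Lambda_i)$. Hence $P_i(\sigma,\sigma')\ge e^{-\beta M}/N$ for every $\sigma,\sigma'\in\PC(\Lambda_i)$.

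Third, I plug this into the Dirichlet form. Writing $\pi$ for $\pi(\cdot\mid\tau|_{\boundary\Lambda_i})$,
\[
\mathcal{E}_{P_i}(f,f)\;\ge\;\frac{e^{-\beta M}}{2N}\sum_{\sigma,\sigma'}\pi(\sigma)\bigl(f(\sigma)-f(\sigma')\bigr)^2.
\]
A direct expansion around the uniform mean $\bar f=\tfrac1N\sum_\sigma f(\sigma)$ yields $\sum_{\sigma,\sigma'}\pi(\sigma)(f(\sigma)-f(\sigma'))^2\ge N\Var_\pi(f)$. Combining with~(\ref{eq:sw:gap}) gives
\[
\lambda(P_i)\;\ge\;\tfrac12 e^{-\beta M}\;\ge\;\tfrac12 e^{-2\beta dL^d}\;\ge\;\tfrac17 e^{-2\beta dL^d},
\]
completing the proof.

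The main technical point to justify is the tensor factorization in the first step: one must verify that, with $\tau$ frozen on $B_k^c$, both the edge percolation (which acts on all edges incident to $B_k$) and the isolated-vertex resampling decouple \emph{exactly} across the cubes $\Lambda_i$, and that the identity $\lambda\bigl(\bigotimes_i P_i\bigr)=\min_i\lambda(P_i)$ for parallel-update products of positive semidefinite reversible chains applies. Both rely on the defining property that distinct cubes of a tiling lie at $\ell_1$-distance at least $4$ in $\Z^d$ and so share no edges of~$E$. Once this is handled, the remaining entry bound and variance manipulation are routine.
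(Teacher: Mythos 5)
Your proof is correct, and its structural half is the same as the paper's: you factor $\PITC{k}{\tau}$ as a product chain over the constituent cubes of the tiling (using exactly the distance-$4$ separation, which is also all the paper invokes), and you reduce to the smallest single-cube gap via the identity $\lambda(\prod_i P_i)=\min_i\lambda(P_i)$, which is the paper's Lemma \ref{lemma:sw:prod}. Where you genuinely diverge is in bounding the single-cube gap. The paper runs a crude coupling: with probability at least $(1-p)^{2dL^d}$ every vertex of the cube becomes isolated simultaneously in two copies, so they coalesce in one step with that probability; this gives a mixing time of $4\e^{2\beta dL^d}$, which is converted into a spectral-gap bound through (\ref{prelim:eq:gap-lower}) -- that conversion is where the constant $\frac17$ comes from. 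You instead prove a uniform minorization $P_i(\sigma,\sigma')\ge \e^{-\beta M}/N$ (a Doeblin-type condition) and push it directly through the Dirichlet form, obtaining $\lambda(P_i)\ge\frac12\e^{-2\beta dL^d}$. Your route is more elementary in that it avoids the coupling-time-to-gap detour and even yields a slightly better constant; the paper's route has the advantage that (\ref{prelim:eq:gap-lower}) needs only reversibility of the factors, whereas your use of the variational formula (\ref{eq:sw:gap}) requires each single-cube factor to be positive semidefinite. On that point your citation is slightly off: Lemma \ref{lemma:sw:reversibility}(ii) as stated covers $\PITC{k}{\tau}$ itself, not its single-cube factors, and positive semidefiniteness of a tensor product does not formally transfer to the factors; however, each factor is again an isolated-vertices dynamics on a single cube with frozen exterior, so the same Edwards--Sokal lifting argument ($TQT^*$ with $Q$ a self-adjoint idempotent) applies verbatim, and you should say so (or fall back on (\ref{prelim:eq:gap-lower}) as the paper does). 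Two further cosmetic remarks: the number of boundary edges of a cube with $L^d$ vertices is $2dL^{d-1}$ rather than $dL^{d-1}$, but your bound $M\le 2dL^d$ still holds for $L\ge 2$; and cubes cut by $\boundary V$ may have fewer than $L^d$ vertices, which only improves your estimate since then $N=q^{|\Lambda_i|}\le q^{L^d}$.
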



\begin{proof}[Proof of Theorem \ref{thm:intro:sw}]
 By Lemmas \ref{lemma:sw:gap-ineq-1} and \ref{lemma:sw:gap-ineq-2},
 $$\lambda(\PSW) \ge \lambda(\PI) \ge \lambda(\PIT).$$
 $\PIT$ is a tiled block dynamics.
 Indeed, if $\tau$ is the configuration in $B_k^c$, then the configuration in $B_k$ is updated with a step of the ergodic Markov chain $\PITC{k}{\tau}$.
 By Lemma \ref{lemma:sw:reversibility}, $\PIT$ is reversible w.r.t.\ $\pi$ and positive semidefinite. 
 Lemma \ref{lemma:sw:reversibility} also implies that $\PITC{k}{\tau}$ is reversible w.r.t.\ $\pi(\cdot|\tau)$ and positive semidefinite, for all $k=1,\dots,m$ and $\tau \in \PC(B_k^c)$.
 Hence, by Lemma \ref{lemma:parallel-block}
 $$\lambda(\PIT) \ge \lambda_{{\rm min}} \lambda(\mathcal{B}_\mathcal{D}).$$
 By Lemma
 \ref{lemma:heat-bath-parallel-block},
 when $L$ is a sufficiently large constant (independent of $n$), SSM implies that
 $\lambda(\mathcal{B}_\mathcal{D}) \ge 1/8$. Moreover, by Lemma \ref{lemma:local-gap}, 
 $\lambda_{{\rm min}} \ge \frac{1}{7} {\e}^{-2 \beta dL^d}$. Then
 \[\lambda(\PSW) \ge \frac{1}{56} {\e}^{-2 \beta dL^d}, \]
 and the result follows from the fact that $L=O(1)$.
\end{proof}

The rest of this section is organized as follows.
The proofs of Lemmas \ref{lemma:sw:gap-ineq-1}, \ref{lemma:sw:gap-ineq-2} and \ref{lemma:sw:reversibility} use a common representation of 
the Markov chains $\PSW$, $\PI$ and $\PIT$ which
we introduce in Section \ref{subsubsection:sw:common}.
The actual proofs of these lemmas are provided in Section \ref{subsection:sw:comparison}.
The proof of Lemma \ref{lemma:local-gap} is provided in Section \ref{subsection:sw:local-gap} and
crucially uses
the fact that by design the $d$-dimensional cubes of side length $L-1$
in each tiling do not interact with each other.

\subsubsection{Common representation}
\label{subsubsection:sw:common}

We provide here a decomposition of the transition matrices
$\PSW$, $\PI$ and $\PIT$ as products of simpler matrices, which will be used in our proofs of Lemmas \ref{lemma:sw:gap-ineq-1}, \ref{lemma:sw:gap-ineq-2} and \ref{lemma:sw:reversibility}.   
We are able to do this because the steps of these chains
all include a ``lifting'' substep to a \textit{joint} configuration space $\JC \subset \PC \times 2^E$, where configurations consist of a spin assignment to the vertices together with a subset of the edges of $G$. 		
The joint Edwards-Sokal measure $\nu$ on $\JC$ is given by
\[\nu(A,\sigma) = p^{|A|}(1-p)^{|E \setminus A|} \1 (A \subseteq E(\sigma)),\]
where 
$p=1-{\e}^{-\beta}$,
$A \subset E$,
$\sigma \in \PC$ and
$E(\sigma)$ denotes the set of monochromatic edges of $E$ in $\sigma$ \cite{ES}. 	

Let $T$ be the $|\PC| \times |\JC|$ matrix indexed by Potts and joint configurations given by:
\[T(\sigma,(A,\tau)) = \1(\sigma=\tau)\1(A \subseteq E(\sigma)) p^{|A|} (1-p)^{|E(\sigma)\setminus A|},\]
where $\sigma \in \PC$ and $(A,\tau) \in \JC$.
The matrix $T$ corresponds to adding each monochromatic edge of $E$ in $\sigma$ independently with probability $p$, as in step~1 of the SW dynamics, and defines an operator from $L_2(\JC,\nu)$ to $L_2(\PC,\pi)$. It is straightforward to check that its adjoint operator $T^*:L_2(\PC,\pi) \rightarrow L_2(\JC,\nu)$ is given by the $|\JC| \times |\PC|$ matrix 
$$T^*((A,\tau),\sigma) = \1(\tau = \sigma),$$
with $(A,\tau) \in \JC$ and $\sigma \in \PC$.
$T^*$ corresponds to step 3 of the SW dynamics.
Finally, let $R$ be a $|\JC| \times |\JC|$ matrix indexed by joint configurations such that
\[R((A,\sigma),(B,\tau)) = \1(A = B)\1(A \subseteq E(\sigma) \cap  E(\tau)) \cdot q^{-c(A)},\]
where $c(A)$ is the number of connected components of $(V,A)$ and $(A,\sigma),(B,\tau) \in \JC$. The matrix $R$ corresponds to assigning a new spin from $\{1,\dots,q\}$ u.a.r.\ to each connected component of $(V,A)$ independently as in step 2 of the SW dynamics.
Hence,
we get $\PSW = TRT^*$.
This useful decomposition of the SW dynamics was discovered first
in \cite{Ullrich1,Ullrich2,Ullrich4}
and has already been used in other comparison arguments involving the SW dynamics (see, e.g., \cite{BSmf,GL}).
	
The following $|\JC| \times |\JC|$ matrices 
allow us to obtain similar decompositions for $\PI$ and $\PIT$. For $(A,\sigma),(B,\tau) \in \JC$, let
\begin{align}
Q((A,\sigma),(B,\tau)) &= \1(A = B)\1(A \subseteq E(\sigma) \cap E(\tau)) \1(\sigma(V\setminus\mathcal{I}(A))=\tau(V\setminus\mathcal{I}(A))) \cdot q^{-|\mathcal{I}(A)|}\notag\\
Q_{k}((A,\sigma),(B,\tau)) &= \1(A = B)\1(A \subseteq E(\sigma) \cap E(\tau)) \1(\sigma(V\setminus\mathcal{I}_{k}(A))=\tau(V\setminus\mathcal{I}_{k}(A))) \cdot q^{-|\mathcal{I}_{k}(A)|}\notag
\end{align}
where $\mathcal{I}(A)$, $\mathcal{I}_{k}(A)$ denote the sets of isolated vertices in $V$ and $B_k$, respectively. Then, the following facts follow straightforwardly from the definition of these matrices:
\begin{fact} \
	\begin{enumerate}
	\item[(i)] $\PI = TQT^*$;
	\item[(ii)] $\PIT = \frac{1}{m} \sum_{k=1}^m TQ_kT^*$.
	\end{enumerate}
\end{fact}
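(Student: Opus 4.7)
The plan is to prove both identities by direct matrix multiplication, exploiting the heavy indicator structure of $T$, $T^*$, $Q$ and $Q_k$, in exact parallel with the classical Edwards--Sokal decomposition $\PSW = TRT^*$. The intuition is that $T$ realizes step 1 (sampling each monochromatic edge of $E(\sigma)$ independently with probability $p$), $T^*$ realizes the final edge-removal step, and $Q$ (respectively $Q_k$) realizes the spin-resampling step, with the indicators in $Q$ encoding precisely the two ingredients of that step: the edge set $A$ and the spins of non-isolated vertices are kept frozen, while each isolated vertex of $(V,A)$ (respectively each isolated vertex of $(V,A)$ that lies in $B_k$) receives an independent uniform spin from $\{1,\dots,q\}$, contributing the factor $q^{-|\mathcal{I}(A)|}$ (respectively $q^{-|\mathcal{I}_k(A)|}$).

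For (i), I will expand, for $\sigma,\eta \in \PC$,
\begin{align*}
(TQT^*)(\sigma,\eta) = \sum_{(A,\sigma')\in \JC}\sum_{(B,\eta')\in \JC} T(\sigma,(A,\sigma'))\, Q((A,\sigma'),(B,\eta'))\, T^*((B,\eta'),\eta).
\end{align*}
The indicator $\1(\sigma = \sigma')$ in $T$ forces $\sigma' = \sigma$, the indicator $\1(\eta'=\eta)$ in $T^*$ forces $\eta'=\eta$, and the indicator $\1(A=B)$ in $Q$ forces $B=A$. What remains is a single sum over $A \subseteq E$ equal to
\begin{align*}
\sum_{A\subseteq E} \1(A\subseteq E(\sigma))\, p^{|A|}(1-p)^{|E(\sigma)\setminus A|}\, q^{-|\mathcal{I}(A)|}\, \1\bigl(\sigma(V\setminus \mathcal{I}(A)) = \eta(V\setminus \mathcal{I}(A))\bigr),
\end{align*}
which is exactly the one-step transition probability of $\PI$: the first two factors give the probability that step 1 produces the edge set $A$, and the remaining factors give the probability that step 2's independent uniform reassignments on $\mathcal{I}(A)$ yield the target configuration $\eta$ (non-isolated vertices are forced to match $\sigma$).

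For (ii), the identical calculation with $Q_k$ in place of $Q$ gives the transition probability of $\PIT$ conditional on the choice $k$ in step 2; averaging over $k \in \{1,\dots,m\}$ with weight $1/m$ produces $\PIT$, which is justified because the choice of $k$ is independent of the edge-sampling randomness and can therefore be commuted with it. The only subtlety worth flagging, rather than a genuine obstacle, is the internal consistency of the indicator $\1(A \subseteq E(\eta))$ in the definition of $Q$ (and $Q_k$): since every edge in $A$ has both endpoints outside $\mathcal{I}(A)$ (respectively $\mathcal{I}_k(A)$), and those vertices share their spins between $\sigma$ and $\eta$, the condition $A \subseteq E(\sigma)$ already forces $A \subseteq E(\eta)$, so this indicator is redundant after the sum collapses and does not affect the resulting formula. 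Since everything boils down to mechanical bookkeeping with indicators, I expect no substantive obstacle; the fact is essentially a change-of-representation result whose real purpose is to prepare the functional-analytic comparisons used to prove Lemmas \ref{lemma:sw:gap-ineq-1}, \ref{lemma:sw:gap-ineq-2} and \ref{lemma:sw:reversibility}.
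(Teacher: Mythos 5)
Your proposal is correct and is exactly the computation the paper has in mind: the paper states this fact as following straightforwardly from the definitions, and your expansion of $(TQT^*)(\sigma,\eta)$ (collapsing the sums via the indicators in $T$, $T^*$ and $Q$, identifying the remaining sum over $A\subseteq E(\sigma)$ with the two-stage transition probability of $\PI$, and averaging over $k$ for $\PIT$) is precisely that straightforward verification, including the correct observation that $\1(A\subseteq E(\eta))$ is automatically satisfied since endpoints of edges in $A$ lie in $V\setminus\mathcal{I}(A)\subseteq V\setminus\mathcal{I}_k(A)$. Nothing further is needed.
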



\subsubsection{Proofs of Lemmas \ref{lemma:sw:gap-ineq-1}, \ref{lemma:sw:gap-ineq-2} and \ref{lemma:sw:reversibility}}
\label{subsection:sw:comparison}

In this subsection we provide our proofs of Lemmas \ref{lemma:sw:gap-ineq-1}, \ref{lemma:sw:gap-ineq-2} and \ref{lemma:sw:reversibility}, all of which use the common representation of the transition matrices
$\PSW$, $\PI$ and $\PIT$ introduced in Section \ref{subsubsection:sw:common},
as well as the analytic tools briefly reviewed in Section \ref{subsection:prelim:analytic-tools}.

\begin{proof}[Proofs of Lemmas \ref{lemma:sw:gap-ineq-1} and \ref{lemma:sw:gap-ineq-2}]
The matrix $R$ is symmetric and $\nu(A,\sigma) = \nu(A,\tau)$ for all $A \subset E$ and $\sigma, \tau \in \PC$ compatible with $A$;
hence $R$ is reversible w.r.t.\ the joint measure $\nu$ and $R = R^*$. The same holds for $Q$ and $Q_k$ for all $k=1,\dots,m$.
Moreover, since the matrices $R$, $Q$ and $Q_k$ assign spins u.a.r.\ to components of a joint configuration,
we deduce the following. 
\begin{fact}\
\label{fact:sw:joint-space}
\begin{enumerate} 
	\item[(i)] $R$, $Q$ and $Q_k$ define self-adjoint idempotent operators from $L_2(\JC,\nu)$ to $L_2(\JC,\nu)$. 
	\item[(ii)] $R = QRQ$ and $Q = Q_kQQ_k$.
\end{enumerate}	
\end{fact}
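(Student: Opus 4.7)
The plan is to view $R$, $Q$, and $Q_k$ uniformly as orthogonal projection operators on $L_2(\JC,\nu)$. Each acts on a joint state $(A,\sigma)$ by keeping $A$ fixed and resampling a specified subset of spin coordinates uniformly, subject to the monochromaticity constraint $A \subseteq E(\tau)$: $R$ resamples one independent color per connected component of $(V,A)$, $Q$ resamples every isolated vertex of $(V,A)$, and $Q_k$ resamples every vertex of $B_k$ that is isolated in $(V,A)$. Self-adjointness has already been noted in the text immediately preceding the fact, so the work in part (i) reduces to verifying idempotency, and part (ii) will follow from a standard property of nested orthogonal projections.

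For idempotency in (i), the plan is a direct one-line computation. For $R$, composing the defining entries gives
\[
R^2((A,\sigma),(A,\tau)) \,=\, q^{-2c(A)} \cdot \bigl|\{\rho \in \PC : A \subseteq E(\rho)\}\bigr| \cdot \1\bigl(A \subseteq E(\sigma) \cap E(\tau)\bigr),
\]
and since the cardinality equals $q^{c(A)}$ (each of the $c(A)$ components of $(V,A)$ can be independently colored), the right-hand side collapses to $R((A,\sigma),(A,\tau))$. The analogous calculations for $Q$ and $Q_k$ replace $c(A)$ by $|\mathcal{I}(A)|$ and $|\mathcal{I}_k(A)|$ respectively, while the additional ``agreement on the non-resampled vertices'' indicator carries through both applications cleanly.

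For (ii), the cleanest route uses the elementary fact that two orthogonal projections $P_1, P_2$ with $\mathrm{Im}(P_1) \subseteq \mathrm{Im}(P_2)$ satisfy $P_1 P_2 = P_2 P_1 = P_1$. I will identify the images: $\mathrm{Im}(R)$ consists of functions $f(A,\sigma)$ depending only on $A$, $\mathrm{Im}(Q)$ consists of functions depending only on $(A,\sigma|_{V\setminus\mathcal{I}(A)})$, and $\mathrm{Im}(Q_k)$ consists of functions depending only on $(A,\sigma|_{V\setminus\mathcal{I}_k(A)})$. Since $\mathcal{I}_k(A) \subseteq \mathcal{I}(A)$, these characterizations give the chain of inclusions $\mathrm{Im}(R) \subseteq \mathrm{Im}(Q) \subseteq \mathrm{Im}(Q_k)$, and applying the nested-projection identity twice yields $QRQ = QR = R$ and $Q_kQQ_k = Q_kQ = Q$. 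The only bookkeeping point to watch is that any $\tau$ agreeing with $\sigma$ on the non-resampled vertices automatically satisfies the compatibility $A \subseteq E(\tau)$ — which holds because every edge of $A$ has both endpoints in the non-isolated part of $V$ and those spin values are inherited unchanged from $\sigma$ — so monochromaticity is preserved without any extra hypotheses, and the image characterizations go through directly.
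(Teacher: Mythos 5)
Your proposal is correct and takes essentially the approach the paper intends: the paper establishes self-adjointness in the sentence immediately preceding the fact (symmetry of the matrices together with the observation that $\nu(A,\cdot)$ is constant over configurations compatible with $A$), exactly as you assume, and then deduces idempotency and part (ii) in one line from the fact that $R$, $Q$ and $Q_k$ resample spins uniformly at random. Your entrywise counts (the $q^{c(A)}$ and $q^{|\mathcal{I}(A)|}$ compatible intermediate configurations) and the nested-projection identity applied to $\mathrm{Im}(R)\subseteq\mathrm{Im}(Q)\subseteq\mathrm{Im}(Q_k)$ are a faithful formalization of that one-line deduction, including the key bookkeeping point that edges of $A$ never meet isolated vertices, so compatibility with $A$ is automatic.
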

\noindent
Using this fact and the definition of the adjoint operator we get that for any $f\in\R^{|\PC|}$
\begin{align}
\inner{f}{\PSW f}{\pi} 
&= \inner{f}{T R T^* f}{\pi} 
= \inner{f}{T QRQ T^* f}{\pi} 
= \inner{QT^*f}{RQ T^* f}{\nu} \notag \\
&\le \inner{QT^*f}{Q T^* f}{\nu} 
= \inner{f}{T Q^2 T^* f}{\pi}
= \inner{f}{\PI f}{\pi}, \label{eq:sw:first-comp}
\end{align}
where the inequality follows from (\ref{eq:sw:contraction}). 
Similarly, for any $f\in\R^{|\PC|}$
\begin{align}
\inner{f}{\PI f}{\pi} 
&= \inner{f}{T Q T^* f}{\pi} 
= \inner{f}{T Q_{k} Q Q_{k} T^* f}{\pi} 
= \inner{Q_{k}T^*f}{Q Q_{k} T^* f}{\nu} \notag\\
&\le \inner{Q_{k}T^*f}{Q_{k} T^* f}{\nu} 
= \inner{f}{T Q_{k}^2 T^* f}{\pi} = \inner{f}{T Q_{k} T^* f}{\pi}. \notag
\end{align}
Since this holds for every $k$, we get
\begin{equation}
\label{eq:sw:second-comp}
\inner{f}{\PI f}{\pi} \le \frac{1}{m}\sum_{k=1}^m\inner{f}{T Q_k T^* f}{\pi} = \inner{f}{\PIT f}{\pi}.
\end{equation}
Putting (\ref{eq:sw:first-comp}) and  (\ref{eq:sw:second-comp}) together
we get 
$$\inner{f}{\PSW f}{\pi} \le \inner{f}{\PI f}{\pi} \le \inner{f}{\PIT f}{\pi}.$$ 
By Fact \ref{fact:sw:joint-space}, $R^2 = R = R^*$ and so $\inner{f}{\PSW f}{\pi} = \inner{RT^*f}{RT^* f}{\pi} \ge 0$. Hence, the matrices $\PSW$, $\PI$ and $\PIT$ are all positive semidefinite.
Then, from the definition of the Dirichlet form and (\ref{eq:sw:gap}), we get 		
$$\lambda(\PSW) \ge \lambda(\PI) \ge \lambda(\PIT),$$
as claimed.
\end{proof}

\begin{proof}[Proof of Lemma \ref{lemma:sw:reversibility}]
	Fact \ref{fact:sw:joint-space} implies that 
	$\boldsymbol{I}_{\textsc{sw}}^* = (TQT^*)^* =  \PI$ and
	$\boldsymbol{I}_{\mathcal{D}}^* = \frac{1}{m} \sum_{k=1}^m (TQ_kT^*)^*  = \PIT$.
    Hence $\PI, \PIT$ define self-adjoint operators
	from $L_2(\PC,\pi)$ to $L_2(\PC,\pi)$ and so $\PI, \PIT$ are reversible w.r.t.\ $\pi$. 
	Moreover,
	$Q^2 = Q = Q^*$ by Fact \ref{fact:sw:joint-space} and thus
	$\inner{f}{\PI f}{\pi} = \inner{QT^*f}{QT^* f}{\nu} \ge 0$. Therefore, $\PI$ is positive semidefinite.
	Similarly, we obtain that $\PIT$ is positive semidefinite, which concludes the proof of part (i) of the lemma.
	
	For part (ii), observe that
	by definition 
	$\PITC{k}{\tau}(\sigma,\sigma') = TQ_kT^*(\sigma \cup \tau,\sigma' \cup \tau)$ for all $\sigma,\sigma' \in \PC(B_k)$ and $\tau \in \PC(B_k^c)$. Since $TQ_kT^* = (TQ_kT^*)^*$ by Fact \ref{fact:sw:joint-space},  $TQ_kT^*$ is reversible w.r.t.\ $\pi$. Hence,
	\begin{align}
	\pi (\sigma \cup \tau) TQ_kT^*(\sigma \cup \tau,\sigma' \cup \tau) &= \pi (\sigma' \cup \tau) TQ_kT^*(\sigma' \cup \tau,\sigma \cup \tau) \notag\\
	\pi (\sigma \mid \tau) \PITC{k}{\tau}(\sigma,\sigma') &= \pi (\sigma' \mid \tau) \PITC{k}{\tau}(\sigma' ,\sigma) \notag
	\end{align}
    and $\PITC{k}{\tau}$ is reversible w.r.t.\ $\pi (\cdot | \tau)$.
	Finally, for $f \in \R^{|\PC(B_k)|}$ let $\hat{f} \in \R^{|\PC|}$ be such that
	$\hat{f}(\sigma \cup \tau) = f(\sigma)$ for all $\sigma \in \PC(B_k)$ and $\tau \in \PC(B_k^c)$.
	Then,
	\begin{align*}
		\inner{f}{\PITC{k}{\tau}f}{\pi(\cdot|\tau)}
		&=  \sum_{\sigma,\sigma' \in \PC(B_k)} f(\sigma) f(\sigma') \PITC{k}{\tau}(\sigma,\sigma') \pi(\sigma\mid\tau) \\
		&= \sum_{\tau \in \PC(B_k^c)}  \sum_{\sigma,\sigma' \in \PC(B_k)} \hat{f}(\sigma \cup \tau) \hat{f}(\sigma' \cup \tau)TQ_kT^*(\sigma \cup \tau,\sigma' \cup \tau) \pi(\sigma \cup \tau) \\
		&= \inner{\hat{f}}{TQ_kT^*\hat{f}}{\pi} = \inner{Q_kT^*\hat{f}}{Q_kT^*\hat{f}}{\pi} \ge 0,
	\end{align*}
	where in the last equality we used that $Q_k = Q_k^2 = Q_k^*$ which follows from Fact \ref{fact:sw:joint-space}.
    Thus, $\PITC{k}{\tau}$ is positive semidefinite for all $1 \le k \le m$ and $\tau \in \Omega(B_k^c)$.
\end{proof}

\subsubsection{Proof of Lemma \ref{lemma:local-gap}}
\label{subsection:sw:local-gap}

In this subsection we prove Lemma \ref{lemma:local-gap} by showing that
$\lambda(\PITC{k}{\tau}) \ge \frac{1}{7} {\e}^{-2\beta dL^d}$ for all $k=1,\dots,m$ and $\tau \in \PC(B_k^\tau)$.
As mentioned earlier, our proof uses
the fact in each tiling 
the small $d$-dimensional cubes do not interact with each other.
Hence, $\PITC{k}{\tau}$ is a \textit{product} Markov chain
where each component acts 
on exactly one of the $d$-dimensional cubes of the tiling $B_k$. 
The spectral gap
of $\PITC{k}{\tau}$ is then 
given by the smallest spectral gap of any component.
The spectral gap of any component can be bounded using a crude coupling argument, since each component
acts on a set of constant volume. We proceed to formalize these ideas.

The following linear algebra fact about the spectrum of a product Markov chain will be used in the proof of Lemma \ref{lemma:local-gap}.

\begin{lemma}
	\label{lemma:sw:prod}
	Let $S_1,\dots,S_t$ be a finite spaces, and call $\mathcal{C}=S_1\times\cdots\times S_t$ their cartesian product. For $i=1,\dots,t$ let $P_i$  be the transition matrix of an ergodic Markov chain acting on $S_i$ reversible w.r.t.\ a probability measure $\varphi_i$ on $S_i$. Let $P=\prod_{i=1}^tP_i$ be the matrix given by
	$$
	P(x,y) = \prod_{i=1}^tP_i(x_i,y_i),
	$$ 
	where $x=(x_1,\dots,x_t)\in\mathcal{C}$ and $y=(y_1,\dots,y_t)\in\mathcal{C}$, $x_i\in S_i$, and $y_i\in S_i$. Then,	
	$\lambda(P)= \min\limits_{i=1,\dots,t}\lambda(P_i)$.
\end{lemma}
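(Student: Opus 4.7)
The plan is to exploit the tensor-product structure of $P$. Since $P(x,y)=\prod_{i=1}^t P_i(x_i,y_i)$ with each $P_i$ reversible w.r.t.\ $\varphi_i$, the product chain $P$ is reversible w.r.t.\ the product measure $\varphi = \prod_i \varphi_i$ on $\mathcal{C}$. Identifying $L_2(\mathcal{C},\varphi)$ with the tensor product $\bigotimes_{i=1}^t L_2(S_i,\varphi_i)$ in the natural way, one checks directly that $P$ acts as the tensor product operator $P_1\otimes P_2\otimes\cdots\otimes P_t$; i.e., $P(f_1\otimes\cdots\otimes f_t)=(P_1 f_1)\otimes\cdots\otimes(P_t f_t)$ for product functions, and this extends by linearity to all of $L_2(\mathcal{C},\varphi)$.

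Next, I would diagonalize each $P_i$. Since $P_i$ is reversible (hence self-adjoint in $L_2(S_i,\varphi_i)$), it admits an orthonormal eigenbasis $\{v^{(i)}_{j_i}\}_{j_i=1}^{|S_i|}$ with real eigenvalues $1=\mu^{(i)}_1>\mu^{(i)}_2\ge\cdots\ge\mu^{(i)}_{|S_i|}\ge -1$. The tensor products $v_{j_1}^{(1)}\otimes\cdots\otimes v_{j_t}^{(t)}$ then form an orthonormal eigenbasis of $P$, with eigenvalues
\[
\mu_{(j_1,\dots,j_t)} \;=\; \prod_{i=1}^t \mu^{(i)}_{j_i}.
\]
In particular, the spectrum of $P$ is exactly the set of all such products.

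Finally, I would compute $\lambda^*(P)$, the maximum absolute value of a non-trivial eigenvalue, and then read off $\lambda(P)=1-\lambda^*(P)$. Set $\lambda^*_i := \max\{|\mu^{(i)}_2|,|\mu^{(i)}_{|S_i|}|\}\in[0,1]$, so that $\lambda(P_i)=1-\lambda^*_i$. For any tuple $(j_1,\dots,j_t)\neq(1,\dots,1)$, the absolute value of the corresponding eigenvalue is $\prod_{i:\,j_i\ne 1}|\mu^{(i)}_{j_i}|\le \prod_{i:\,j_i\ne 1}\lambda^*_i$. Because each $\lambda^*_i\le 1$, this product is at most $\max_i \lambda^*_i$. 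Conversely, taking the tuple that puts the extremal index in one coordinate $i_0$ and the trivial index elsewhere yields an eigenvalue of absolute value exactly $\lambda^*_{i_0}$, so the supremum $\lambda^*(P)=\max_i \lambda^*_i$ is attained. Hence
\[
\lambda(P) \;=\; 1-\lambda^*(P) \;=\; 1-\max_{i}\lambda^*_i \;=\; \min_{i} (1-\lambda^*_i) \;=\; \min_{i=1,\dots,t}\lambda(P_i).
\]

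The argument is essentially bookkeeping; the only delicate point is handling the \emph{absolute} spectral gap rather than the usual gap $1-\mu_2$, since one must verify that multiplying two subunit factors $|\mu^{(i)}_{j_i}|\le 1$ can never produce an eigenvalue strictly more extreme than the worst single-coordinate one. This is immediate from $0\le\lambda^*_i\le 1$, which is where reversibility of each $P_i$ and the bound $|\mu^{(i)}_{j_i}|\le 1$ are used. No further obstacle arises.
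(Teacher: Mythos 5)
Your proof is correct and follows essentially the same route as the paper: identify $P$ with the tensor product $P_1\otimes\cdots\otimes P_t$, observe that products of eigenfunctions form an eigenbasis with eigenvalues $\prod_i \mu^{(i)}_{j_i}$, and note that the extremal non-trivial eigenvalue is attained by a tuple that is trivial in all but one coordinate. Your treatment of the absolute spectral gap (checking that products of factors of modulus at most $1$ cannot beat the worst single-coordinate eigenvalue) is in fact slightly more careful than the paper's terse argument, but it is the same proof.
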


We provide next the proof of Lemma \ref{lemma:local-gap}.

\begin{proof}[Proof of Lemma \ref{lemma:local-gap}]
	Recall that 
	\[\lambda_{\textrm{min}} = \min_{k=1,\dots,m} \min_{\tau \in \Omega(B_k^c)} \lambda(\PITC{k}{\tau}).\]
	We claim that $\PITC{k}{\tau}$ is a product chain.
	Indeed,
	if $B_k^{(1)},\dots,B_k^{(l_k)}$ are the $d$-dimensional cubes that form the tiling $B_k$ and $\PITC{kj}{\tau}$ is the isolated vertices dynamics acting on $B_k^{(j)}$ 
	(with the boundary condition induced by $\tau$),
	then for $\sigma,\sigma' \in \PC(B_k)$,
	\[\PITC{k}{\tau}(\sigma,\sigma') = \prod_{j=1}^{l_k} \PITC{kj}{\tau} ( \sigma(B_k^{(j)}), \sigma'(B_k^{(j)})).\]
	Hence, by Lemma  \ref{lemma:sw:prod}
	\[\lambda(\PITC{k}{\tau}) = \min_{j=1,\dots,l_k} \lambda(\PITC{kj}{\tau}).\]
	
	We bound $\lambda(\PITC{kj}{\tau})$ via a crude coupling argument.
	Since $|B_k^{(j)}| \le L^d$,
	the probability that in the first step of $\PITC{kj}{\tau}$ every vertex is isolated is $(1-p)^{K}$, where $K \le 2 d L^d$ is the number of edges incident to $B_k^{(j)}$.
	Starting from two arbitrary configurations in $B_k^{(j)}$, if all vertices become isolated in both configurations, then we can couple them with probability $1$. Hence, we can couple two arbitrary configurations in one step with probability at least $(1-p)^{2 d L^d}$. Therefore, the probability that the two copies have not couple after $4(1-p)^{-2 d L^d}$ steps is at most $1/4$ by Markov's inequality.
	Then, the mixing time of $\PITC{kj}{\tau}$ 
	is at most $4(1-p)^{-2 d L^d} = 4 {\e}^{2 \beta d L^d}$
	for each $k=1,\dots, m$, $\tau \in \PC(B_k^c)$ and $j  = 1,\dots,l_k$. Consequently, 
	$\lambda(\PITC{k}{\tau}) \ge \frac{1}{7} {\e}^{-2 \beta d L^d}$
    by (\ref{prelim:eq:gap-lower}).
\end{proof}

For completeness, we also provide here a proof of Lemma \ref{lemma:sw:prod}.

\begin{proof}[Proof of Lemma \ref{lemma:sw:prod}]
	$P$ is reversible w.r.t.\ $\varphi=\otimes_{i=1}^n\varphi_i$.
	Moreover, if $\{f^{(i)}_j, l^{(i)}_j, j=1,\dots,|S_i|\}$ denote eigenfunctions and eigenvalues of $P_i$, respectively, then 
	$$
	F_k(x)=\prod_{i=1}^tf^{(i)}_{k_i}(x_i),\quad l_k=\prod_{i=1}^t l^{(i)}_{k_i},
	$$ 
	are the eigenfunctions and eigenvalues of $P$, where $k=(k_1,\dots,k_t)$, and $k_i=1,\dots,|S_i|$, for all $i=1,\dots,t$. 
	To see this, note that $\{F_k\}$, $k=(k_1,\dots,k_t)$, form an orthogonal basis in $L^2(\mathcal{C},\varphi)$, such that $PF_k = l_k F_k$. This implies that $F_k,l_k$ are the eigenfunctions and eigenvalues of $P$. 
	
	Now, suppose that $l^{(i)}_2$ is the eigenvalue $l^{(i)}_j\neq 1$ with maximal absolute value for all $i$, so that 
	$\lambda(P_i)=1-l^{(i)}_2$. Then, by taking all $l^{(i)}_{k_i}=1$ except for the one index $i_0$ and by setting $l_k=l^{(i_0)}_2$
	one has $\lambda(P) = 1-\max_{i}  l^{(i)}_2$. 
\end{proof}

\section{SSM and general block dynamics}
\label{section:partition}

In this section we
use our results for the tiled block dynamics in Section \ref{section:blocks} to deduce
a tight spectral gap bound for general heat-bath block dynamics.
Let $V \subset \Z^d$ be a $d$-dimensional cube of volume $n$,
$G=(V,E)$ the induced subgraph and $\psi$ a fixed boundary condition on $\boundary V$. 

Let $\mathcal{A} = \{A_1,\dots,A_r\}$
be a collection of blocks
such that $A_i \subset V$ and $V = \cup_i A_i$. Let $\mathcal{B}_\mathcal{A}$ be the transition matrix of 
the heat-bath block dynamics w.r.t.\ $\mathcal{A}$.
Recall that
given
a configuration $\sigma_t \in \Omega$ at time $t$
a step of the heat-bath block dynamics picks a block $A_i$ u.a.r.\ and updates the configuration in $A_i$ with a sample from $\mu^\psi(\cdot|\sigma_t(V\setminus A_i))$.
We prove here that $\lambda(\mathcal{B}_\mathcal{A}) = \Omega(r^{-1})$ whenever SSM holds. That is, we establish Theorem \ref{thm:intro:general:block} from the introduction.

In the proof of this theorem we 
relate the spectral gap of $\mathcal{B}_\mathcal{A}$ to that of the following block dynamics.
Let $V_{\rm e}$ and $V_{\rm o}$ be the set of all even and all odd vertices of $V$, respectively.
A vertex is even (resp., odd) if its coordinate sum in $\,\Z^d$ is even (resp., odd). 
Let $\mathcal{B}_{\rm eo}$ be the heat-bath block dynamics w.r.t.\ $\{V_{\rm e},V_{\rm o}\}$. 
A crucial part of the proof of Theorem \ref{thm:intro:general:block} is the following.
\begin{lemma}
	\label{lemma:block:even-odd}
	SSM implies that  $\lambda(\mathcal{B}_{\rm eo}) = \Omega(1)$.
\end{lemma}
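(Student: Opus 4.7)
\medskip

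\textbf{Proof plan for Lemma \ref{lemma:block:even-odd}.} The strategy is to sandwich $\mathcal{B}_{\rm eo}$ between the heat-bath tiled block dynamics $\mathcal{B}_\mathcal{D}$ (for which SSM already gives a constant spectral gap by Lemma \ref{lemma:heat-bath-parallel-block}) and an auxiliary tiled block dynamics $S_\mathcal{D}$ that mimics an even/odd update \emph{inside each tiling}. Concretely, for each $k=1,\dots,m$ and each $\tau\in\Omega(B_k^c)$, define
\[
S_k^\tau \,=\, \tfrac{1}{2}\bigl(K_{V_{\rm e}\cap B_k}^\tau + K_{V_{\rm o}\cap B_k}^\tau\bigr),
\]
where $K_A^\tau$ is the heat-bath update on $A$ with the configuration in $B_k^c$ frozen to $\tau$. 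Each $S_k^\tau$ is a convex combination of heat-bath updates, hence reversible with respect to $\mu(\cdot\mid\tau)$ and positive semidefinite by Fact \ref{fact:var-heat-bath-positive}; it is also ergodic, since the two updates together touch every vertex of $B_k$. Let $S_\mathcal{D}$ denote the associated tiled block dynamics defined as in Section \ref{subsection:comparing-tiled-block}.

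\medskip

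The first step is to show that $\lambda(S_k^\tau)=\Omega(1)$ uniformly in $k$ and $\tau$. By construction, $B_k$ is a disjoint union of $d$-dimensional cubes $\Lambda_1,\dots,\Lambda_{l_k}$ of volume at most $L^d$, pairwise at distance at least $4$. Because the spin system has only nearest-neighbor interactions, the conditional measure $\mu(\cdot\mid\tau)$ factorises as a product over these cubes, and so do both heat-bath operators in the definition of $S_k^\tau$. Consequently $S_k^\tau$ is a product Markov chain in the sense of Lemma \ref{lemma:sw:prod}, each factor being the even/odd block dynamics on a single cube $\Lambda_j$ with boundary condition induced by $\tau$. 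Each such factor acts on a state space of size at most $q^{L^d}$, so its spectral gap is bounded below by a constant $c_0>0$ depending only on $L,q,\beta,d$ (the number of boundary equivalence classes seen from a cube of constant volume is finite). By Lemma \ref{lemma:sw:prod}, $\lambda(S_k^\tau)\ge c_0$. Combining with Lemma \ref{lemma:parallel-block} and Lemma \ref{lemma:heat-bath-parallel-block} gives $\lambda(S_\mathcal{D}) \ge \lambda(\mathcal{B}_\mathcal{D})\,c_0 \ge c_0/8$.

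\medskip

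The key remaining step---and the part I expect to be the main obstacle---is a comparison of Dirichlet forms showing $\lambda(\mathcal{B}_{\rm eo})\ge\lambda(S_\mathcal{D})$. Using Fact \ref{fact:var-heat-bath}, the Dirichlet forms are
\[
\df{\mathcal{B}_{\rm eo}}{f} = \tfrac{1}{2}\bigl(\df{V_{\rm e}}{f}+\df{V_{\rm o}}{f}\bigr),\qquad
\df{S_\mathcal{D}}{f} = \tfrac{1}{2m}\sum_{k=1}^m\bigl(\df{V_{\rm e}\cap B_k}{f}+\df{V_{\rm o}\cap B_k}{f}\bigr).
\]
The crucial ingredient is monotonicity of the heat-bath Dirichlet form: if $A\subset A'$, then $\df{A}{f}\le\df{A'}{f}$. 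This follows from the law of total variance applied to Fact \ref{fact:var-heat-bath}, writing
$\var{A'}{\tau'}(f) \ge \sum_{\sigma}\mu(\sigma\mid\tau')\,\var{A}{\sigma\cup\tau'}(f)$
for $\tau'\in\Omega((A')^c)$ and $\sigma\in\Omega(A'\setminus A)$, and then summing against $\mu(\tau')$. Applying this to $V_{\rm e}\cap B_k\subset V_{\rm e}$ (and the odd analogue) and averaging over $k$ yields $\df{S_\mathcal{D}}{f}\le \df{\mathcal{B}_{\rm eo}}{f}$ for every $f$. Since both chains are convex combinations of heat-bath updates and hence positive semidefinite by Fact \ref{fact:var-heat-bath-positive}, the variational characterisation (\ref{eq:sw:gap}) with the common stationary measure $\mu$ gives
\[
\lambda(\mathcal{B}_{\rm eo}) \ge \lambda(S_\mathcal{D}) \ge c_0/8 = \Omega(1),
\]
completing the proof. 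The slight subtlety is that the even/odd blocks need not satisfy the geometric constraints of the tilings in $\mathcal{D}$; this is precisely why we pass through $S_\mathcal{D}$ and exploit monotonicity of the Dirichlet form rather than trying to realize $\mathcal{B}_{\rm eo}$ as a tiled block dynamics directly.
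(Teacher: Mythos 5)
Your overall route is the same as the paper's: you introduce exactly the tiled even/odd chain the paper uses (your $S_k^\tau=\tfrac12(K_{V_{\rm e}\cap B_k}^\tau+K_{V_{\rm o}\cap B_k}^\tau)$ is the paper's $\BT_j^\tau$), you compare $\mathcal{B}_{\rm eo}$ to it via monotonicity of the Dirichlet form, and you invoke Lemmas \ref{lemma:parallel-block} and \ref{lemma:heat-bath-parallel-block}. That part is sound. The genuine gap is in the step you treated as routine: the claim that $S_k^\tau$ ``is a product Markov chain in the sense of Lemma \ref{lemma:sw:prod}.'' It is not. Each of the two heat-bath operators $K_{V_{\rm e}\cap B_k}^\tau$ and $K_{V_{\rm o}\cap B_k}^\tau$ does factorize over the non-interacting cubes $\Lambda_1,\dots,\Lambda_{l_k}$, but their average does not: in $S_k^\tau$ a \emph{single} even/odd coin is shared by all cubes, so
$S_k^\tau=\tfrac12\prod_j K_{\rm e}^{(j)}+\tfrac12\prod_j K_{\rm o}^{(j)}$, whereas the product chain of Lemma \ref{lemma:sw:prod} would be $\prod_j\bigl(\tfrac12 K_{\rm e}^{(j)}+\tfrac12 K_{\rm o}^{(j)}\bigr)$; these transition matrices differ (the cross terms are missing), so the product-spectrum lemma does not apply. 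This is precisely the obstacle the paper flags before its proof: the restriction of this tiled dynamics to a tiling is \emph{not} a product chain, which is why Lemma \ref{lemma:sw:prod} cannot be used directly. Note also that you cannot fall back on a crude coupling bound for $S_k^\tau$ on all of $B_k$, since $|B_k|=\Theta(n)$ and such a bound degrades exponentially in the volume.

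The gap is repairable, and the paper's fix is the missing ingredient: pass to the systematic scan $P_{\rm eoe}=P_{\rm e}P_{\rm o}P_{\rm e}$ inside the tiling, which \emph{does} factorize over the cubes because $P_{\rm e}$ and $P_{\rm o}$ each do. One expands $(S_k^\tau)^3$ as an average of eight words in $P_{\rm e},P_{\rm o}$, bounds seven of them by $\inner{f}{f}{\mu(\cdot|\tau)}$ via (\ref{eq:sw:contraction}) to get $\inner{f}{(S_k^\tau)^3 f}{\mu(\cdot|\tau)}\le\inner{f}{P^{\textsc{l}}_{\rm eoe}f}{\mu(\cdot|\tau)}$ for the lazy version $P^{\textsc{l}}_{\rm eoe}=\tfrac18 P_{\rm eoe}+\tfrac78 I$, and then uses $3\lambda(S_k^\tau)\ge\lambda((S_k^\tau)^3)$ to conclude $\lambda(S_k^\tau)\ge\tfrac1{24}\lambda(P_{\rm eoe})$. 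Only at that point does Lemma \ref{lemma:sw:prod} enter, applied to $P_{\rm eoe}$, with each per-cube factor handled by the kind of constant-volume coupling bound you describe. Without this (or an equivalent argument decoupling the shared coin), your first step does not establish $\lambda(S_k^\tau)=\Omega(1)$, and the proof as written is incomplete.
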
	
\noindent
The other key ingredients in the proof of Theorem \ref{thm:intro:general:block} are two properties of the
variance functional: monotonicity and tensorization. (Recall that for $A \subseteq V$, $K_A$ denotes the matrix that corresponds to the heat-bath update in $A$ and that we use $\mathcal{E}_A$ for the Dirichlet form of $K_A$.)

\begin{fact} 
	\label{fact:var-monotonicity}
	Let $A \subseteq B \subseteq V$. Then, for any $f \in \R^{|\Omega|}$,
	$\df{A}{f} \le \df{B}{f}.$
\end{fact}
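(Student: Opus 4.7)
The plan is to derive the monotonicity of the heat-bath Dirichlet form $\mathcal{E}_A$ in $A$ from the tower (or law of total) variance identity, combined with the variance decomposition of Fact \ref{fact:var-heat-bath}. The intuition is immediate: since $A\subseteq B$ implies $B^c\subseteq A^c$, conditioning on a configuration in $A^c$ fixes strictly more spins than conditioning on a configuration in $B^c$, and additional conditioning can only reduce variance.

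Concretely, I would start by writing both Dirichlet forms in the variance form given by Fact \ref{fact:var-heat-bath}:
\[
\df{A}{f} = \sum_{\tau\in\Omega(A^c)} \var{A}{\tau}(f)\,\mu(\tau),\qquad
\df{B}{f} = \sum_{\eta\in\Omega(B^c)} \var{B}{\eta}(f)\,\mu(\eta).
\]
Every $\tau\in\Omega(A^c)$ decomposes uniquely as $\tau = \eta\cup\tau'$, with $\eta\in\Omega(B^c)$ the restriction of $\tau$ to $B^c$ and $\tau'$ the restriction to $B\setminus A$. Under the measure $\mu(\cdot\mid\eta)$ on configurations on $B$, sampling the whole configuration is equivalent to first sampling $\tau'$ on $B\setminus A$ from its marginal and then sampling on $A$ from $\mu(\cdot\mid\eta\cup\tau')$. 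Applying the conditional variance identity to $f$ viewed as a function on $\Omega^\eta(B)$ with the inner randomness on $A$ and the outer randomness on $B\setminus A$ gives
\[
\var{B}{\eta}(f) \;=\; \E{B}{\eta}\bigl[\var{A}{\eta\cup\tau'}(f)\bigr] + \var{\mu(\cdot\mid\eta)}{}\bigl(\E{A}{\eta\cup\tau'}[f]\bigr) \;\ge\; \E{B}{\eta}\bigl[\var{A}{\eta\cup\tau'}(f)\bigr],
\]
since the second term is nonnegative.

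Finally, I would multiply by $\mu(\eta)$, sum over $\eta\in\Omega(B^c)$, and use $\mu(\tau')\mid\eta)\,\mu(\eta) = \mu(\eta\cup\tau')$ (consistency of marginals) to collapse the double sum into a single sum over $\tau\in\Omega(A^c)$, giving
\[
\df{B}{f} \;\ge\; \sum_{\eta}\mu(\eta)\sum_{\tau'}\mu(\tau'\mid\eta)\,\var{A}{\eta\cup\tau'}(f) \;=\; \sum_{\tau\in\Omega(A^c)}\var{A}{\tau}(f)\,\mu(\tau) \;=\; \df{A}{f}.
\]
There is no real obstacle here: the whole argument is a one-line application of the tower property once the Dirichlet form is rewritten via Fact \ref{fact:var-heat-bath}. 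The only thing to be careful about in the permissive setting is that the decomposition $\tau=\eta\cup\tau'$ ranges only over valid configurations, but this is automatic from working with the conditional measures $\mu(\cdot\mid\eta)$.
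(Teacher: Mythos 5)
Your proof is correct, but it follows a genuinely different route from the one in the paper. The paper's argument is purely operator-theoretic: since $A \subseteq B$, the heat-bath kernels satisfy $K_B = K_A K_B K_A$ (the operator form of the tower property for conditional expectations onto nested sets of unfixed spins), whence
$$\inner{f}{K_Bf}{\mu} = \inner{K_Af}{K_BK_Af}{\mu} \le \inner{K_Af}{K_Af}{\mu} = \inner{f}{K_Af}{\mu},$$
using the contraction bound (\ref{eq:sw:contraction}) together with the self-adjointness and idempotence of $K_A$; subtracting both sides from $\inner{f}{f}{\mu}$ gives $\df{A}{f}\le\df{B}{f}$ in two lines, with no decomposition of configurations. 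You instead rewrite both Dirichlet forms via Fact~\ref{fact:var-heat-bath} and apply the law of total variance within each block, splitting $\tau\in\Omega(A^c)$ as $\eta\cup\tau'$ with $\eta$ on $B^c$ and $\tau'$ on $B\setminus A$, and then recombine using the consistency identity $\mu(\eta)\mu(\tau'\mid\eta)=\mu(\eta\cup\tau')$ (your written version of this identity has a stray parenthesis, but that is only a typo); this is sound, the conditional-variance inequality and the consistency of conditional measures are exactly what is needed, and your caveat about valid configurations in the permissive setting is the right one. Both arguments rest on the same underlying fact---conditioning on more spins cannot increase variance---but the paper's version buys brevity and matches the Hilbert-space style of its other comparison steps (e.g., Fact~\ref{fact:var-tensorization} and Lemma~\ref{lemma:general:eo}), while yours makes the probabilistic mechanism explicit at the cost of some bookkeeping over $\Omega(B^c)$ and $\Omega(B\setminus A)$.
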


\begin{fact} 
	\label{fact:var-tensorization}
	Let $U = \cup U_i \subseteq V$
	such that
	$K_{U_i} K_{U_j} = K_{U_j} K_{U_i}$ for all $i \neq j$.
	Then, for any $f \in \R^{|\Omega|}$
	$$\df{U}{f} \le \sum_{i} \df{U_i}{f}.$$
\end{fact}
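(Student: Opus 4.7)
\textbf{Proof plan for Fact \ref{fact:var-tensorization}.} My plan is to exploit the Hilbert-space structure of $L_2(\mu)$. The heat-bath matrix $K_A$ is self-adjoint (since $\mu(\sigma)K_A(\sigma,\sigma')=\mu(\sigma')K_A(\sigma',\sigma)$) and idempotent ($K_A^2=K_A$, as a second resample from the same conditional distribution has the same law), hence an \emph{orthogonal} projection in $L_2(\mu)$; its range $R_A$ is the subspace of functions depending on $\sigma$ only through $\sigma(A^c)$. For any orthogonal projection $P$ one has $\mathcal{E}_P(f,f)=\inner{f}{(I-P)f}{\mu}=\|(I-P)f\|_\mu^2$, so the inequality reduces to the Pythagoras-type bound
\[
\|(I-K_U)f\|_\mu^2\;\le\;\sum_i \|(I-K_{U_i})f\|_\mu^2.
\]

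The first step identifies $K_U$ with the product $\prod_i K_{U_i}$. A standard fact about commuting orthogonal projections is that their product (in any order) is itself an orthogonal projection, onto the intersection of the individual ranges. In our case, a function belongs to $R_{U_i}$ for every $i$ iff it depends on $\sigma$ only through $\sigma(U_i^c)$ for every $i$, iff it depends only on $\sigma(U^c)$ with $U=\cup_iU_i$; hence $\bigcap_i R_{U_i}=R_U$, and therefore $\prod_i K_{U_i}=K_U$.

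The heart of the argument is then the Pythagorean estimate for two commuting orthogonal projections $P_1,P_2$. Writing $I-P_1P_2=(I-P_1)+P_1(I-P_2)$ and expanding,
\[
\|(I-P_1P_2)f\|_\mu^2=\|(I-P_1)f\|_\mu^2+2\inner{(I-P_1)f}{P_1(I-P_2)f}{\mu}+\|P_1(I-P_2)f\|_\mu^2,
\]
the cross term vanishes thanks to $(I-P_1)^{*}P_1=(I-P_1)P_1=0$, and the last summand is at most $\|(I-P_2)f\|_\mu^2$ because $P_1$ is a contraction; hence $\|(I-P_1P_2)f\|_\mu^2\le\|(I-P_1)f\|_\mu^2+\|(I-P_2)f\|_\mu^2$. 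For the general case, let $\Pi_k=\prod_{j\le k}K_{U_j}$; pairwise commutation yields $\Pi_kK_{U_{k+1}}=K_{U_{k+1}}\Pi_k$, so applying the two-projection bound to $(\Pi_k,K_{U_{k+1}})$ and iterating closes the induction. The only genuinely nontrivial input is the vanishing of the cross term, which is the algebraic reason Pythagoras holds and which relies crucially on idempotency; everything else is bookkeeping and a one-line induction.
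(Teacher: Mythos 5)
Your proof is correct and takes essentially the same route as the paper: both reduce to the two-set inequality $\df{U_i\cup U_j}{f}\le \df{U_i}{f}+\df{U_j}{f}$ (yours via the decomposition $I-P_1P_2=(I-P_1)+P_1(I-P_2)$ with the vanishing cross term, the paper via positivity of $\inner{f}{(I-K_{U_i})(I-K_{U_j})f}{\mu}$, which are equivalent projection computations) and then iterate using pairwise commutativity. A minor bonus of your write-up is that you justify the identity $\prod_i K_{U_i}=K_U$ through the intersection-of-ranges characterization, a step the paper simply asserts.
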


We are now ready to prove Theorem \ref{thm:intro:general:block}.

\begin{proof}[Proof of Theorem \ref{thm:intro:general:block}]
	For any $f \in \R^{|\Omega|}$, we have
	$\df{\mathcal{B}_\mathcal{A}}{f} = \frac{1}{r} \sum_{i=1}^r \df{A_i}{f}$.
    By Fact \ref{fact:var-monotonicity}, if $A_i' \subset A_i$, then $\df{A_i}{f} \ge \df{A_i'}{f}$.
    Thus, we may assume without loss of generality that $\mathcal{A}$ is a partition of $V$. 
    Fact \ref{fact:var-monotonicity} also implies
 	\begin{align*}
 	\df{A_i}{f} \ge \frac{\df{A_i \cap V_{\rm e}}{f} + \df{A_i \cap V_{\rm o}}{f}}{2} 
	\end{align*}
	Hence,
	$$
	\df{\mathcal{B}_\mathcal{A}}{f}  \ge  \frac{1}{r} \sum_{i=1}^r \frac{\df{A_i \cap V_{\rm e}}{f} + \df{A_i \cap V_{\rm o}}{f}}{2}.
	$$
	For $i \neq j$, $\dist(A_i \cap V_{\rm e} , A_j \cap V_{\rm e}) \ge 2$, since by assumption $A_i \cap A_j = \emptyset$. Then, 
	$$K_{A_i\cap V_{\rm e}}K_{A_j\cap V_{\rm e}} = K_{A_j\cap V_{\rm e}}K_{A_i\cap V_{\rm e}}$$
	and 
	\begin{align*}
	\sum_{i=1}^r \df{A_i \cap V_{\rm e}}{f} 
	\ge \df{V_{\rm e}}{f}
	\end{align*}
	by Fact \ref{fact:var-tensorization}.
	Similarly, we get $
	\sum_{i=1}^r \df{A_i \cap V_{\rm o}}{f} \ge \df{V_{\rm o}}{f}$. Hence,
	$$\df{\mathcal{B}_\mathcal{A}}{f} \ge \frac{\df{V_{\rm e}}{f} +\df{V_{\rm o}}{f} }{2r} = \frac{1}{r}\df{\mathcal{B}_{\rm eo}}{f}.$$
	Since $\mathcal{B}_\mathcal{A}$ and $\mathcal{B}_{\rm eo}$ are both positive semidefinite we get
	$\lambda(\mathcal{B}_\mathcal{A}) \ge \frac{1}{r}\lambda(\mathcal{B}_{\rm eo})$ by (\ref{eq:sw:gap}). 
	The result follows from Lemma~\ref{lemma:block:even-odd}.
\end{proof}

To prove Lemma \ref{lemma:block:even-odd} 
we use
our results for tiled block dynamics from Section \ref{section:blocks}. 
In particular,
we consider the tiled block dynamics that picks one tiling $B_i$ from $\mathcal{D} = \{B_1,\dots,B_m\}$ u.a.r.\ and with probability $1/2$ performs a heat-bath update in $B_i \cap V_{\rm e}$ and otherwise updates $B_i \cap V_{\rm o}$.
The restriction of this tiled block dynamics to each $B_i$ is not a product Markov chain, as it was the case in the previous application of our technology to the SW dynamics in Section \ref{section:sw}.
Hence, we cannot hope to use Lemma \ref{lemma:sw:prod} for product Markov chains directly. To work around this difficulty we consider systematic scan variants of the restricted chains.

\begin{proof}[Proof of Lemma \ref{lemma:block:even-odd}]
	For ease of notation let $\mu = \mu^\psi$.
	Let $\BT$ be the transition matrix of 
	the tiled variant of $\mathcal{B}_{\rm eo}$ that given a configuration $\sigma_t$ proceeds as follows:
	\begin{enumerate}
		\item Pick $j \in \{1,...,m\}$ u.a.r.;
		
		\item With probability $1/2$ update the spins of $V_{\rm e} \cap B_j$ with a sample from 
		$\mu(\cdot|\sigma_t(V \setminus (V_{\rm e} \cap B_j)))$;
		
		\item Otherwise, update the configuration in $V_{\rm o} \cap B_j$ with a sample from $\mu(\cdot|\sigma_t(V \setminus (V_{\rm o} \cap B_j)))$.
	\end{enumerate}
	This chain is reversible w.r.t.\ $\mu$ and ergodic; the latter follows directly from the assumption that the heat-bath Glauber dynamics is ergodic (see Section \ref{subsection:mc}). 
	
	By Fact \ref{fact:var-monotonicity},  
	$\df{V_{\rm e}}{f} \ge \df{V_{\rm e} \cap B_j}{f}$ and
	$\df{V_{\rm o}}{f} \ge \df{V_{\rm o} \cap B_j}{f}$
	for any $f \in \R^{|\Omega|}$. Thus,
	\begin{align}
	\df{\mathcal{B}_{\rm eo}}{f} = \frac{\df{V_{\rm e}}{f}+\df{V_{\rm o}}{f}}{2} &\ge \frac{\df{V_{\rm e} \cap B_j}{f}+\df{V_{\rm o} \cap B_j}{f}}{2}  \notag\\
	&\ge \frac{1}{m}\sum_{j=1}^m \frac{\df{V_{\rm e} \cap B_j}{f}+\df{V_{\rm o} \cap B_j}{f}}{2} = \df{\BT}{f}.
	\end{align}	
	Since both $\BT$ and $\mathcal{B}_{\rm eo}$ are averages of positive semidefinite matrices (see Fact \ref{fact:var-heat-bath-positive}), they are also positive semidefinite
	and so 
	$$\lambda(\mathcal{B}_{\rm eo}) \ge \lambda(\BT).$$
	
	We bound next $\lambda(\BT)$. For each $j=1,\dots,m$ and each configuration $\tau \in \Omega(B_j^c)$, 
	we consider the Markov chain with transition matrix $\BT_j^\tau$ 
	whose state space 
	is the set $\Omega^\tau(B_j)$ of valid configurations in $B_j$ given that $\tau$ is the configuration in $B_j^c$.
	Given a configuration $\sigma_t$, this chain obtains $\sigma_{t+1}$ as follows:
	\begin{enumerate}
		\item With probability $1/2$ update the spins of $V_{\rm e} \cap B_j$ with a sample from $\mu(\cdot|\sigma_t(B_j \setminus (V_{\rm e} \cap B_j)),\tau)$;
		\item Otherwise, update the configuration in $V_{\rm o} \cap B_j$ with a sample from $\mu(\cdot|\sigma_t(B_j \setminus (V_{\rm o} \cap B_j)),\tau)$.
	\end{enumerate}
	It is straightforward to check that this chain is ergodic and reversible w.r.t.\ $\varphi = \mu(\cdot | \tau)$. 
	Moreover, $\BT_j^\tau$ is positive semidefinite since it is an average of heat-bath updates (see Fact \ref{fact:var-heat-bath-positive}). (Observe that the Markov chains $P_j^\tau$'s correspond to the $S_j^\tau$'s from Secion \ref{section:blocks}.)
	
	Let 
	$$\lambda_{\rm min} = \min_{j=1,\dots,m} \min_{\tau \in \Omega(B_j^c)} \lambda(\BT_j^\tau).$$
	By Lemma \ref{lemma:parallel-block},
	$\lambda(\BT) \ge \lambda_{\rm min}  \lambda(\mathcal{B}_\mathcal{D})$
	and, by Lemma \ref{lemma:heat-bath-parallel-block},
	$ \lambda(\mathcal{B}_\mathcal{D}) \ge \frac{1}{8}$,
	provided $L$ is a large enough constant independent of $n$ and that there is SSM. Hence,
	\begin{equation}
	\label{eq:h:min}
	\lambda(\mathcal{B}_{\rm eo}) \ge \frac{\lambda_{\rm min}}{8}.
	\end{equation}
	
	We show next that
	$\lambda_{\rm min} = \Omega(1)$
	by bounding $\lambda(\BT_j^\tau)$ for each $j$ and $\tau$.
	Fix $j$ and $\tau$ and
	let $P_{\rm e}$ (resp., $P_{\rm o}$) be the transition matrix that corresponds 
	to 
	updating the configuration in $V_{\rm e} \cap B_j$ (resp., $V_{\rm o} \cap B_j$)
	with a new configuration distributed according to the conditional measure given the configuration in $B_j \setminus (V_{\rm e} \cap B_j)$ (resp., $B_j \setminus (V_{\rm o} \cap B_j)$) and $\tau$. 
	$P_{\rm e}$ and $P_{\rm o}$ are reversible w.r.t.\ $\varphi$ and $\BT_j^\tau = \frac{P_{\rm e}+ P_{\rm o}}{2}$.
	
	Let
	$P_\textrm{eoe} = P_\textrm{e}P_\textrm{o}P_\textrm{e}$ 
	be a systematic scan variant of $\BT_j^\tau$
	and let
	$P_\textrm{eoe}^{\textsc{l}}$
	be the ``lazy'' version of $P_\textrm{eoe}$ that with probability $7/8$ stays put and
	with probability $1/8$ proceeds like $P_{\textrm{eoe}}$; that is, $P_\textrm{eoe}^{\textsc{l}} = \frac{P_{\textrm{eoe}}+7I}{8}$.
	We show that three steps of the chain $\BT_j^\tau$ are as fast as one of $P_\textrm{eoe}^{\textsc{l}}$. For this, note that		
	\begin{equation}
	\label{eq:pj-cube}
	(\BT_j^\tau)^3 = \frac{1}{8}( P_\textrm{e}P_\textrm{o}P_\textrm{e} + P_\textrm{e}^3 + P_\textrm{o}^3 + P_\textrm{e}^2P_\textrm{o}+P_\textrm{o}P_\textrm{e}^2+P_\textrm{o}^2P_\textrm{e}+P_\textrm{e}P_\textrm{o}^2+P_\textrm{o}P_\textrm{e}P_\textrm{o}).
	\end{equation}
	Each of the terms in the right hand side of (\ref{eq:pj-cube}) is at most $\inner{f}{f}{\varphi}$ by (\ref{eq:sw:contraction}).
	Thus,
	\[\inner{f}{{(\BT_j^\tau)^3} f}{\varphi} 
	\le \frac{1}{8}\inner{f}{P_\textrm{e}P_\textrm{o}P_\textrm{e}f}{\varphi}+\frac{7}{8}\inner{f}{f}{\varphi}
	=\inner{f}{P_\textrm{eoe}^{\textsc{l}}f}{\varphi}.\]
	By Fact \ref{fact:var-heat-bath-positive} the matrices $P_\textrm{e}$ and $P_\textrm{o}$ are positive semidefinite, and thus $\BT_j^\tau$, $(\BT_j^\tau)^3$, $P_\textrm{eoe}$ and $P_\textrm{eoe}^{\textsc{l}}$ are also positive semidefinite. Then,
	$$\lambda({(\BT_j^\tau)^3}) \ge \lambda(P_\textrm{eoe}^{\textsc{l}}).$$
	
	Since $x^3-3x+2 \ge 0$ for $|x| \le 1$, we have
	$3\lambda(\BT_j^\tau) \ge \lambda((\BT_j^\tau)^3)$. 
	Moreover, 
	$$\mathcal{E}_{P_\textrm{eoe}^{\textsc{l}}}(f,f) = \inner{f}{(I-P_\textrm{eoe}^{\textsc{l}})f}{\varphi} = \frac{1}{8} \mathcal{E}_{P_\textrm{eoe}} (f,f),$$
	 and so $\lambda(P_\textrm{eoe}^{\textsc{l}}) = \frac{1}{8} \lambda(P_\textrm{eoe})$. Hence,
	\begin{equation}
	\label{eq:lazy-bound}
	\lambda({\BT_j^\tau}) \ge \frac{1}{24} \lambda(P_\textrm{eoe}).
	\end{equation}
	
	We bound next $\lambda(P_\textrm{eoe})$.
	Let $B_{j}^{(1)},B_{j}^{(2)},\dots,B_{j}^{(l)}$ be the $d$-dimensional cubes of volume at most $L^d$ that form the tiling $B_j$.
	
	For $k=1,\dots,l$
	let $P_{\rm e}^{(k)}$ and $P_{\rm o}^{(k)}$
	be the
	$|\Omega^\tau(B_j^{(k)})| \times |\Omega^\tau(B_j^{(k)})|$
	transition matrices 
	that correspond to a heat-bath update on
	$V_{\rm e} \cap B_j^{(k)}$ and $V_{\rm o} \cap B_j^{(k)}$, respectively. 
	Let $P_\textrm{eoe}^{(k)} = P_{\rm e}^{(k)}P_{\rm o}^{(k)}P_{\rm e}^{(k)}$.
	For $\sigma,\sigma' \in \Omega^{\tau}(B_j)$, we have
	$$
	P_{\textrm{eoe}}(\sigma,\sigma') = \prod_{k=1}^{l} 	P^{(k)}_{\textrm{eoe}}(\sigma(B_j^{(k)}),\sigma'(B_j^{(k)})).
	$$
	Moreover, $P^{(k)}_{\textrm{eoe}}$ ergodic and reversible w.r.t.\ the probability measure induced in $B_j^{(k)}$ by $\varphi$.
	The former follows from the fact that by assumption the heat-bath dynamics on $B_j^{(k)}$ is ergodic; see Section \ref{subsection:prelim:markov-chains}.
	Thus, Lemma \ref{lemma:sw:prod} implies 
	\begin{equation}
	\label{eq:s:min}
	\lambda(P_{\textrm{eoe}}) = \min_{k=1,\dots,l} \lambda(P_{\textrm{eoe}}^{(k)}).
	\end{equation}
	We bound $\lambda(P_{\textrm{eoe}}^{(k)})$ for each $k$ with a crude coupling argument. This is sufficient because each $B_j^{(k)}$ has volume at most $L^d = O(1)$.
	For any $U \subseteq B_j^{(k)}$ and any spin configuration $\eta$ on $B_j^{(k)} \setminus U$,
	the probability of each valid configuration on $U$ given $\eta$ and $\tau$ can be crudely bounded from below by ${(q \e)}^{-\Omega( L^d)}$.
	Since $P_{\textrm{eoe}}^{(k)}$ is irreducible,
	for any pair of configurations $\sigma_0,\sigma_0'$ of $B_j^{(k)}$,
	we can go from $\sigma_0$ to $\sigma_0'$ in at most $T = q^{L^d}$ steps. 
	Therefore,
	the probability that a realization of $P_{\textrm{eoe}}^{(k)}$
	follows this sequence of updates is then at least ${(q \e)}^{- \Omega(T L^d)}$. 
	Moreover, the probability that an instance of $P_{\textrm{eoe}}^{(k)}$ that starts in $\sigma_0'$ remains at $\sigma_0'$ after $T$ steps is also at least ${(q \e)}^{- \Omega(T L^d)}$.
	Thus, there exists a coupling for the steps of $P_{\textrm{eoe}}^{(k)}$ that starting from an arbitrary pair of configurations couples in $O(1)$ steps with probability $\Omega(1)$. Consequently, $\lambda(P_{\textrm{eoe}}^{(k)}) = \Omega(1)$ for all $k$.
	This bound together with (\ref{eq:s:min}) and (\ref{eq:lazy-bound}) imply that
	$\lambda(\BT_{j}^\tau) = \Omega(1)$,
	and so
	$\lambda_{\rm min}  = \Omega(1)$.
	The result follows from (\ref{eq:h:min}).
\end{proof}

We conclude this section with the proofs of Facts \ref{fact:var-monotonicity} and \ref{fact:var-tensorization}.

\begin{proof}[Proof of Fact \ref{fact:var-monotonicity}]
	Since $A \subseteq B$, $K_B = K_AK_BK_A$. Then, for any $f \in \R^{|\Omega|}$
	$$\inner{f}{K_Bf}{\mu} = \inner{f}{K_AK_BK_Af}{\mu} = \inner{K_Af}{K_BK_Af}{\mu} \le \inner{K_Af}{K_Af}{\mu} = \inner{f}{K_Af}{\mu} ,$$
	where the inequality follows from (\ref{eq:sw:contraction}). 
	Then,  we get $\df{B}{f} \ge \df{A}{f}$.
\end{proof}

\begin{proof}[Proof of Fact \ref{fact:var-tensorization}]
	To simplify the notation, let $K_i = K_{U_i}$ and $K_{ij} = K_{U_i \cup U_j}$.
	By assumption $K_{ij} = K_i K_j = K_j K_i$; also, 
	$K_i^2 = K_i$.
	Then,
	$I-K_{i} = (I-K_{i})^2$, $(I-K_{i})(I-K_{j}) = (I-K_{j})(I-K_{i})$ and
	\begin{align*}
	\inner{f}{(I-K_{i} - K_{j} + K_{ij})f}{\mu} 
	&= \inner{f}{(I-K_{i})(I-K_{j})f}{\mu} \\
	&= \inner{(I-K_{i})(I-K_{j})f}{(I-K_{i})(I-K_{j})f}{\mu} \\
	&\ge 0.
	\end{align*}
	Hence, $\df{K_{i}}{f} + \df{K_{j}}{f} \ge \df{K_{ij}}{f}$. Applying this to $U_1$ and $U_2$ first, and then iterating we get the result. 
\end{proof}

\section{SSM and the system scan dynamics}
\label{section:ss}

Let $V \subset \Z^d$ be a finite $d$-dimensional cube of volume $n$.
Let $G=(V,E)$ be the induced subgraph and let $\psi$ be a fixed boundary condition on $\boundary V$. For ease of notation we use  $\mu$ for $\mu^\psi$.

We consider in this section the class of \textit{systematic scan} Markov chains on $G$. In a systematic scan chain there is a fixed ordering $\Ord$ of the vertices of $G$ and \textit{one} step of the chain consists of updating every $v \in V$ according to the conditional distribution at $v$ given the configuration of its neighbors and the boundary condition $\psi$, in the order specified by $\Ord$. 
We use $\GD(\Ord)$ to denote the systematic scan dynamics w.r.t.\ the ordering $\Ord$ and $S(\Ord)$ to denote its transition matrix.
Hence, if $\Ord = \{v_1,\dots,v_n\}$
$$S(\Ord) = K_{v_1}\dots K_{v_n}$$
(Recall that $K_{v_i}$ is
the transition matrix corresponding to a heat-bath update in $v_i$.)
Since each $K_{v_i}$ leaves $\mu$ invariant then $\mu$ is the equilibrium distribution of $S(\Ord)$.
In general $S(\Ord)$ is non-reversible, but one can obtain a reversible matrix by multiplicative symmetrization (see, e.g., \cite{Fill,MT}): 
\begin{equation*}\label{sysc2}
S(\Ord)S(\Ord)^*=K_{v_1}\dots K_{v_{n-1}}K_{v_n}K_{v_{n-1}}\dots K_{v_1},
\end{equation*}
which corresponds to 
the systematic scan dynamics $\GD(\Ord')$
with  $\Ord'=\{v_1,\dots,v_n,\dots,v_1\}$. 

In this section we prove three results related to the 
speed of convergence to equilibrium of systematic scan dynamics.
These results correspond to Theorems \ref{thm:intro:ss-monotone} and \ref{thm:intro:ss:eo} from the introduction.

The first of our results concerns the \textit{alternating scan dynamics},
which corresponds to the systematic scan dynamics
whose ordering consists of first all the even vertices and then all the odd ones.
In fact, we consider the multiplicative reversiblization of this dynamics as above.
More formally, 
let $EO$ be an ordering of the vertices of $V$ that first contains all even vertices and then all the odd ones.
Similarly define the ordering $EOE$, that contains all even vertices, then all the odd ones, and finally all the even ones again.
The alternating scan dynamics on $G$ correponds to the systmatic scan dynamics $\GD(EO)$.
The relaxation time of the non-reversible chain $\GD(EO)$ is given by 
\begin{equation}
\label{eq:relax:non-rev}
\tau_{\rm rel}(\GD(EO)) = \frac{1}{1-\sqrt{1-\lambda(S(EOE))}};
\end{equation}
see, e.g., \cite{Fill,MT}.
Thus, we may restrict our attention to estimating the spectral gap of the reversible Markov chain $\GD(EOE)$. 
Let $V_{\rm e}$ (resp., $V_{\rm o}$) be the set of the even (resp., the odd) vertices of $G$. Then, $S(EOE) = K_{V_{\rm e}}K_{V_{\rm o}}K_{V_{\rm e}}$.
We prove the following.

\begin{thm}
	\label{thm:ss:eo}
	SSM implies that $\lambda(S(EOE)) \ge \Omega(1)$.
\end{thm}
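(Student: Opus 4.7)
The plan is to compare the Dirichlet form of $S(EOE)$ to that of $\mathcal{B}_{\rm eo}$ and then invoke Lemma~\ref{lemma:block:even-odd}, which gives $\lambda(\mathcal{B}_{\rm eo})=\Omega(1)$ under SSM. The crucial algebraic input is that $K_{V_{\rm e}}$ and $K_{V_{\rm o}}$ are heat-bath updates, hence self-adjoint and idempotent on $L_2(\mu)$. Exploiting $K_{V_{\rm e}}^2=K_{V_{\rm e}}$ one verifies the identity
\[
I-K_{V_{\rm e}}K_{V_{\rm o}}K_{V_{\rm e}} \;=\; (I-K_{V_{\rm e}})\;+\;K_{V_{\rm e}}(I-K_{V_{\rm o}})K_{V_{\rm e}},
\]
which together with the self-adjointness of $K_{V_{\rm e}}$ produces the clean Dirichlet-form decomposition
$\mathcal{E}_{S(EOE)}(f,f)=\mathcal{E}_{K_{V_{\rm e}}}(f,f)+\mathcal{E}_{K_{V_{\rm o}}}(K_{V_{\rm e}}f,K_{V_{\rm e}}f)$. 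I would also check that $S(EOE)$ is positive semidefinite: it is self-adjoint by construction, and $\inner{f}{S(EOE)f}{\mu}=\inner{K_{V_{\rm e}}f}{K_{V_{\rm o}}K_{V_{\rm e}}f}{\mu}\ge 0$ since $K_{V_{\rm o}}$ is PSD by Fact~\ref{fact:var-heat-bath-positive}; this lets me apply the variational formula~(\ref{eq:sw:gap}).

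Next I would lower bound the cross term $\mathcal{E}_{K_{V_{\rm o}}}(K_{V_{\rm e}}f,K_{V_{\rm e}}f)$ in terms of $\mathcal{E}_{K_{V_{\rm o}}}(f,f)$ and $\mathcal{E}_{K_{V_{\rm e}}}(f,f)$. Setting $h=(I-K_{V_{\rm e}})f$, idempotence and self-adjointness of $K_{V_{\rm e}}$ give the identity $\|h\|_\mu^2=\mathcal{E}_{K_{V_{\rm e}}}(f,f)$, and a direct computation shows $(I-K_{V_{\rm o}})K_{V_{\rm e}}f=(I-K_{V_{\rm o}})f-(I-K_{V_{\rm o}})h$. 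Applying the weighted AM-GM inequality $\|u-v\|_\mu^2\ge \tfrac12\|u\|_\mu^2-\|v\|_\mu^2$, together with the contractivity $\|I-K_{V_{\rm o}}\|_{L_2(\mu)}\le 1$ (which follows from $K_{V_{\rm o}}$ being PSD with spectrum in $[0,1]$), I obtain
\[
\mathcal{E}_{K_{V_{\rm o}}}(K_{V_{\rm e}}f,K_{V_{\rm e}}f)\;\ge\;\tfrac12\,\mathcal{E}_{K_{V_{\rm o}}}(f,f)-\mathcal{E}_{K_{V_{\rm e}}}(f,f).
\]
Plugging this into the decomposition above and combining with the trivial inequality $\mathcal{E}_{S(EOE)}(f,f)\ge \mathcal{E}_{K_{V_{\rm e}}}(f,f)$ via $\max\{a,b\}\ge(a+b)/2$, I reach $\mathcal{E}_{S(EOE)}(f,f)\ge \tfrac12\mathcal{E}_{K_{V_{\rm e}}}(f,f)+\tfrac14\mathcal{E}_{K_{V_{\rm o}}}(f,f)\ge \tfrac12\mathcal{E}_{\mathcal{B}_{\rm eo}}(f,f)$. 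Since both chains are PSD, (\ref{eq:sw:gap}) then yields $\lambda(S(EOE))\ge \tfrac12\lambda(\mathcal{B}_{\rm eo})$, which is $\Omega(1)$ by Lemma~\ref{lemma:block:even-odd}.

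The one delicate point is the bound on $\mathcal{E}_{K_{V_{\rm o}}}(K_{V_{\rm e}}f,K_{V_{\rm e}}f)$: a naive application of Cauchy-Schwarz or of the contraction $\|I-K_{V_{\rm o}}\|\le 1$ to the difference $(I-K_{V_{\rm o}})f-(I-K_{V_{\rm o}})h$ would lose a factor that is not controllable by $\mathcal{E}_{\mathcal{B}_{\rm eo}}(f,f)$. The weighted inequality $\|u-v\|^2\ge \|u\|^2/2-\|v\|^2$ is what makes things work: it produces a leftover $-\mathcal{E}_{K_{V_{\rm e}}}(f,f)$ that is cancelled \emph{exactly} by the $+\mathcal{E}_{K_{V_{\rm e}}}(f,f)$ already sitting in the decomposition of $\mathcal{E}_{S(EOE)}(f,f)$, leaving a clean Dirichlet-form domination with an absolute constant.
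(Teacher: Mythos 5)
Your proposal is correct, and it reaches the same target comparison ($S(EOE)$ versus $\mathcal{B}_{\rm eo}$, then Lemma~\ref{lemma:block:even-odd}) by a genuinely different route than the paper. The paper isolates a general spectral comparison (Lemma~\ref{lemma:general:eo}): for $S$ idempotent PSD and $T$ merely PSD, $\lambda(STS)\ge\lambda(aS+(1-a)T)$, proved via the variational principle for $\lambda_2$ by testing $Q=aS+(1-a)T$ against $Sg$, where $g$ is an eigenfunction of $P=STS$ at $\lambda_2(P)$, and using $S^3=S$ together with the contraction bound $\inner{Sf}{TSf}{\mu}\le\inner{f}{Sf}{\mu}$. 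You instead work directly with Dirichlet forms: the projection identity $I-K_{V_{\rm e}}K_{V_{\rm o}}K_{V_{\rm e}}=(I-K_{V_{\rm e}})+K_{V_{\rm e}}(I-K_{V_{\rm o}})K_{V_{\rm e}}$, the observation that $\|(I-K_{V_{\rm o}})g\|_\mu^2=\mathcal{E}_{K_{V_{\rm o}}}(g,g)$ and $\|(I-K_{V_{\rm e}})f\|_\mu^2=\mathcal{E}_{K_{V_{\rm e}}}(f,f)$ (both using idempotence), and the weighted bound $\|u-v\|^2\ge\frac12\|u\|^2-\|v\|^2$, which together give the pointwise domination $\mathcal{E}_{S(EOE)}(f,f)\ge\frac12\mathcal{E}_{\mathcal{B}_{\rm eo}}(f,f)$ and hence $\lambda(S(EOE))\ge\frac12\lambda(\mathcal{B}_{\rm eo})$. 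I checked the algebra (the identity, the PSD claims, the averaging of your two lower bounds) and it all goes through. The trade-offs: your argument yields a functional (Dirichlet-form) inequality valid for every $f$, which is in some ways more robust, but it loses a factor $2$ and uses idempotence of \emph{both} $K_{V_{\rm e}}$ and $K_{V_{\rm o}}$ (true here, since heat-bath updates are orthogonal projections in $L_2(\mu)$), whereas the paper's lemma needs only $S$ idempotent and $T$ PSD, holds for every mixture weight $a\in[0,1]$, and gives the comparison with constant $1$. For the purpose of Theorem~\ref{thm:ss:eo} the factor $2$ is immaterial, so your proof is a valid alternative.
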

\noindent
We observe that Theorem \ref{thm:ss:eo} and (\ref{eq:relax:non-rev}) imply Theorem \ref{thm:intro:ss:eo} from the introduction.

For the special case of {\it monotone\/} spin systems we show that SSM implies rapid mixing of \textit{any} systematic scan dynamics.
In a monotone system for each vertex $v \in V$ there is
a linear ordering $\succeq_v$ of the spins.  
These linear orderings induce a partial order $\succeq$ over the state space.
The spin system is monotone w.r.t.\ this partial order if for every $B \subset V$ and every pair of boundary conditions $\xi_1 \succeq \xi_2$ on $\boundary B$, 
$\mu_B^{\xi_1}$ stochastically dominates $\mu_B^{\xi_2}$.
From this definition it follows that a monotone system has unique maximal and minimal configurations in the partial order $\succeq$, a fact that will be crucially used in our proofs.
Several well-known spin systems, including the Ising model and the hard-core model, are monotone systems.

For monotone systems we establish the following two theorems which together imply Theorem \ref{thm:intro:ss-monotone} from the introduction.

\begin{thm}
	\label{thm:ss:monotone}
	Let $\Ord$ be an ordering of the vertices in $V$.
	In a monotone system
	SSM implies that 
	the mixing time of $\GD(\Ord)$
	is
	$O(\log n (\log \log n)^2)$.
\end{thm}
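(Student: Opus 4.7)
The strategy is to adapt the monotone coupling / path coupling analysis of \cite{DSVW}, which gave $O(n\log n)$ mixing of the Glauber dynamics for monotone systems under SSM, to the setting where every step of the dynamics is a full systematic scan rather than a single random vertex update. I would work throughout with the grand monotone coupling of two copies $\{X_t\}$ and $\{Y_t\}$ of $\GD(\Ord)$ started respectively from the maximum and minimum configurations $\top,\bot$ of the partial order; by monotonicity $X_t\succeq Y_t$ for every $t$, so the expected number of disagreements $D_t=\sum_v \1\{X_t(v)\ne Y_t(v)\}$ controls the total-variation distance to equilibrium.

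First I would fix a spatial scale $L=\Theta(\log\log n)$, chosen large enough so that the SSM estimate $b\,e^{-aL}$ beats any inverse polynomial in $L$ that the subsequent analysis will pay, and expose the scan as the composition $S(\Ord)=K_{v_1}\cdots K_{v_n}$ of single-site heat-bath updates in the order prescribed by $\Ord$. At the moment $v_s$ is updated, the monotone coupling gives
\begin{equation*}
\Pr[X_s(v_s)\ne Y_s(v_s)]=\bigl\|\mu(\cdot\mid X_{s-1}(N(v_s)))-\mu(\cdot\mid Y_{s-1}(N(v_s)))\bigr\|_{\textsc{tv}},
\end{equation*}
which, combined with SSM applied step by step, yields a linear recursion for the expected disagreement profile through one scan in terms of the profile at the start of the scan.

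The main step, and the main obstacle, is to prove that one scan contracts the expected disagreement count by a factor $1-\eta$ with $\eta=\Omega\bigl(1/(\log\log n)^2\bigr)$, \emph{uniformly in the ordering} $\Ord$. The source of the $(\log\log n)^2=L^2$ loss is the ``cone of influence'' of a single disagreement through one scan: SSM annihilates influence that travels more than $L$ steps in the lattice, so a single disagreement seeded at the start of a scan can spawn at most $O(L^d)$ disagreements by the end of the scan, but a pathological ordering may lay these out along an internal path of length up to $L^d$ and cause cascade amplification. Following \cite{DSVW} in spirit, I would run a path-coupling argument over pairs of configurations differing at a single vertex $v^\ast$ and use SSM with radius $L$ to bound, seed by seed, the total expected disagreement mass after one scan by $1-c/L^2$ for a positive constant $c$. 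The second half of Theorem~\ref{thm:intro:ss-monotone}, giving $O(\log n)$ mixing when $\mathcal{L}(\Ord)=O(1)$, is the special case where this cone has constant diameter and no $L^2$ overhead is incurred.

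Iterating this single-scan contraction over $T=O(\log n\,(\log\log n)^2)$ scans gives $\mathbb{E}[D_T]\le n(1-\eta)^T<1/4$, and the standard coupling bound $\taumix(\GD(\Ord))\le T$ then completes the proof. The technical heart of the argument, and what I expect to be the most delicate step, is the cone-of-influence bookkeeping for arbitrary orderings: tracking how a single seed disagreement percolates through the sequential updates of a full scan without inflating the union bounds beyond polylogarithmic factors, and making sure the SSM input is invoked at a radius large enough to absorb those factors but small enough to keep the contraction constant $\eta$ at least $1/(\log\log n)^2$.
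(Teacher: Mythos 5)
There is a genuine gap at the step you yourself flag as the ``technical heart'': the claim that one full scan contracts the expected number of disagreements by $1-\Omega(1/(\log\log n)^2)$ uniformly over orderings is not established, and SSM does not obviously yield it. Two specific problems. First, SSM does not confine influence to a cone of radius $L$ within a scan: within a single pass a disagreement can propagate along any subsequence of $\Ord$ that is a path in $G$, i.e.\ up to distance $\mathcal{L}(\Ord)$, which for an adversarial ordering is $\Theta(n)$; SSM only gives exponential decay for the \emph{stationary conditional measure} of a region when a boundary spin is flipped, and a single heat-bath update at a site adjacent to a disagreement can disagree with probability $\Theta(1)$. Controlling the expected offspring of a seed disagreement step by step through a scan is exactly a Dobrushin-type (one-step influence) argument, which is a strictly stronger hypothesis than SSM --- this is precisely why systematic scan under SSM alone has resisted such direct path-coupling analyses. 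Second, the target rate $1-c/L^2$ with $L=\Theta(\log\log n)$ is reverse-engineered from the desired answer: no mechanism in your sketch produces that particular contraction, and the path-coupling bookkeeping you defer is the entire difficulty, not a routine union bound. (Note also that the block-dynamics path coupling of Lemma \ref{lemma:heat-bath-parallel-block} works because a single step is a heat-bath update of a whole region with the disagreement on its boundary, so SSM applies directly; a scan is a composition of single-site updates and is not a heat-bath update of any block, so that argument does not transfer.)

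The paper's proof avoids any per-scan contraction for general orderings. It runs the monotone coupling from the extremal configurations and, for each vertex $v$, compares $X_t,Y_t$ to censored copies $W_t,Z_t$ (and stationary-start copies $W_t^\mu,Z_t^\mu$) that ignore all updates outside a box $B_v$ of radius $r=2a^{-1}\log n$; monotonicity gives the sandwich $W_t\succeq X_t\succeq Y_t\succeq Z_t$, SSM is applied only to compare the two \emph{stationary} measures $\phi_{\textsc{w}},\phi_{\textsc{z}}$ at $v$ (distance $r$ from the disagreeing boundary, giving error $1/\mathrm{poly}(n)$), and the time for the censored scan dynamics on the polylog-size box to couple is bounded by \emph{induction on the volume}. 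The $(\log\log n)^2$ factor comes from this recursion ($T(n)\approx T((c\log n)^d)\cdot\Theta(\log n)$), not from any single-scan contraction rate. If you want to salvage your outline, you would need either to replace SSM by a Dobrushin-type condition (changing the theorem) or to adopt some version of this multiscale censoring argument to neutralize cascade amplification within a scan.
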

\noindent
We emphasize that Theorem \ref{thm:ss:monotone} holds for any ordering $\Ord$
and any boundary condition $\psi$ on $\boundary V$.

Let $\mathcal{L}(\Ord)$ be the length of the longest subsequence of $\Ord$ that is a path in $G$. 
With the additional assumption that $\mathcal{L}(\Ord) = O(1)$ we can prove a slightly better bound for the mixing time of the systematic scan dynamics.

\begin{thm}
	\label{thm:ss:monotone:no-propagation}
	Let $\Ord$ be an ordering of the vertices in $V$
	such that $\mathcal{L}(\Ord) = O(1)$.
	In a monotone system SSM implies that
	the mixing time of $\GD(\Ord)$ is $O(\log n)$
	and that the spectral gap of $\GD(\Ord)$ is $\Omega(1)$. 
\end{thm}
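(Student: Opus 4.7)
The plan is to collapse one scan of $\GD(\Ord)$ into a composition of $C:=\mathcal{L}(\Ord)=O(1)$ commuting heat-bath block updates on independent sets, and then to run a monotone path-coupling argument sharpened by SSM. First, I would apply Mirsky's theorem to the partial order $u\prec v$ iff $u$ precedes $v$ in $\Ord$ and $(u,v)\in E$; the longest $\prec$-chain has length exactly $\mathcal{L}(\Ord)=C$, so $V$ decomposes into $C$ antichains $I_1,\ldots,I_C$, each of which is an independent set in $G$ since two adjacent vertices are always $\prec$-comparable. Whenever two vertices from different Mirsky levels appear ``out of level order'' in $\Ord$ (i.e., $v_i\in I_a$, $v_j\in I_b$ with $i<j$ but $a>b$), they are forced to be non-adjacent, for otherwise $\mathrm{level}(v_j)\ge \mathrm{level}(v_i)+1=a+1$ would contradict $v_j\in I_b$. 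Using the commutation $K_uK_v=K_vK_u$ for non-adjacent $u,v$ and bubble-sorting via swaps of non-adjacent consecutive pairs, I obtain the factorization
\[
S(\Ord)=K_{I_C}K_{I_{C-1}}\cdots K_{I_1},
\]
with each $K_{I_j}$ a reversible, positive-semidefinite heat-bath block update on the independent set $I_j$.

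Next, I run the monotone grand coupling on the processes started at the maximum and minimum configurations and appeal to path coupling: it suffices to show that for a pair differing at a single vertex $v$, the expected disagreement count after one application of $S(\Ord)$ is at most $1-\epsilon$ for some constant $\epsilon>0$. Thanks to the factorization, any disagreement created during the scan must lie on a $\prec$-chain starting at $v$, so the new disagreement set is supported in the constant-radius ball $B(v,C-1)$. Under the monotone coupling, each per-vertex disagreement probability at the moment of its update is a single-site total variation distance, which monotonicity rewrites as a weighted sum over disagreeing neighbors; iterating through the $C$ Mirsky layers and using SSM applied to small cubes around each updated vertex to control the propagation weights yields the required contraction, whence path coupling gives $\taumix(\GD(\Ord))=O(\log n)$. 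For the spectral gap, observe that $W := S(\Ord)S(\Ord)^* = K_{I_C}\cdots K_{I_2}K_{I_1}K_{I_2}\cdots K_{I_C}$ (using $K_{I_1}^2=K_{I_1}$) is reversible and positive-semidefinite, and $\lambda(\GD(\Ord))=\lambda(W)$ by definition. Since $\mathcal{L}(\Ord^R)=\mathcal{L}(\Ord)=C$, the same Mirsky/coupling argument applies to $S(\Ord)^*$, so by composition the grand coupling contracts Hamming distance by a factor $\alpha^2<1$ per step of $W$; by the standard path-coupling-to-spectral-gap (Chen) bound for reversible chains, $\lambda(W)\ge 1-\alpha^2=\Omega(1)$.

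The main obstacle is the SSM-based per-scan contraction bound. SSM is a static exponential decay of boundary influence, whereas inside a single scan disagreements are generated and propagated sequentially through the $C$ Mirsky layers, with each layer's contribution depending on the configuration produced by the previous layers. Single-site TV distances are not themselves controlled by SSM at distance one, so the estimate must be organised around ``medium-scale'' balls: for each vertex $w\in B(v,C-1)$ and each $\prec$-chain from $v$ to $w$, I embed the propagation in a constant-size cube and invoke SSM on that cube to bound the cumulative influence. Monotonicity is essential in order to decompose the single-site TV distances into weighted sums over disagreeing neighbors and to iterate the bound across the $C$ layers; verifying that the resulting total influence depends only on $C$ and the SSM constants $a,b$ and is strictly less than $1$ is the crux of the argument.
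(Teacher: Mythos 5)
Your Mirsky-decomposition factorization $S(\Ord)=K_{I_C}\cdots K_{I_1}$ is correct and is a nice structural observation, but the heart of your argument --- a one-scan path-coupling contraction for a pair of configurations differing at a single vertex, extracted from SSM --- is a genuine gap, and it is exactly the step that cannot be carried out from SSM alone. What you need there is a bound strictly below $1$ on the total influence that a single-site discrepancy exerts over one scan, i.e.\ a Dobrushin/Dyer--Goldberg--Jerrum-type condition; the paper points out in the introduction that such conditions are strictly stronger than SSM. SSM only gives the static estimate $b\,e^{-a\cdot\dist}$, whose constants are useless at the constant distances relevant inside $B(v,C-1)$ (the prefactor $b$ may be enormous, and even modest per-neighbor influences summed over $2d$ neighbors and $C$ layers can exceed $1$). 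Moreover, the quantities you must control are \emph{dynamic} propagation probabilities of the monotone coupling within a scan, not discrepancies between two conditional Gibbs measures on a cube, so ``embedding the propagation in a constant-size cube and invoking SSM on that cube'' does not bound them; your own closing paragraph concedes this is the crux, and no mechanism is given that would make the cumulative influence $<1$. The same unproven contraction is then what your spectral-gap step (via $W=S(\Ord)S(\Ord)^*$ and the coupling-to-gap bound) relies on, so that part is also unsupported.

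The paper's proof deliberately avoids any per-scan contraction. It works with $\rho(t)=\max_v\Pr[X_t(v)\neq Y_t(v)]$ for the monotone coupling started from the extremal configurations and uses three ingredients: (i) finite speed of propagation --- in $t$ scans a disagreement travels at most distance $t\,\mathcal{L}(\Ord)$, which is the only place $\mathcal{L}(\Ord)=O(1)$ enters (your factorization is a repackaging of this fact); (ii) a ``burn-in'': comparing with auxiliary copies frozen outside a constant-size box $B_v$ and applying the already-proved Theorem on general scan for monotone systems (coupling time $c\log|V'|(\log\log|V'|)^2$ on boxes of constant volume $n_0$), one gets $\rho(t_0)\le 1/n_0$ after a constant number $t_0$ of scans; (iii) a doubling recurrence $\rho(2t)\le (4t\,\mathcal{L}(\Ord))^d\rho(t)^2$, again from finite propagation speed plus a union bound, which bootstraps the small constant down to $\rho(T)\le 1/n^2$ with $T=O(\log n)$. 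A union bound then gives $\taumix(\GD(\Ord),1/n)=O(\log n)$, and the $\Omega(1)$ spectral gap follows from the standard inequality relating mixing time at accuracy $1/n$ to the relaxation time, with no separate coupling-to-gap argument needed. If you want to salvage your outline, you would have to replace the one-scan contraction by this kind of multi-scale/bootstrapping use of SSM (where the exponential decay is invoked at distances growing like $\log n$, not at distance $O(1)$), together with the already-established scan result for monotone systems on small boxes.
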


We proceed to give proofs to these three theorems.
We start with the proof of
Theorem \ref{thm:ss:eo}, which is deduced straightforwardly from the following more general fact.

\begin{lemma}
	\label{lemma:general:eo}
	Let $S,T$ be positive semidefinite stochastic matrices, reversible w.r.t.\ $\mu$. Assume that $S$ is also idempotent. Then, for all $a\in[0,1]$
	\begin{equation}
	\label{eq:ineq-general}
	\lambda(S\,TS)\geq \lambda(aS+(1-a)T)
	\end{equation}
\end{lemma}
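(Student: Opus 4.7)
The plan is to apply the variational formula (\ref{eq:sw:gap}) for the spectral gap of a reversible positive semidefinite chain to both sides of (\ref{eq:ineq-general}). First I would verify that both $STS$ and $U := aS+(1-a)T$ fit the hypotheses of (\ref{eq:sw:gap}): positive semidefiniteness of $STS$ follows from $\inner{f}{STSf}{\mu} = \inner{Sf}{T(Sf)}{\mu}\ge 0$, self-adjointness from $(STS)^{*}= S^{*}T^{*}S^{*} = STS$, and stochasticity from $STS\mathbf{1} = \mathbf{1}$, and the analogous checks for $U$ are immediate. Since adding a constant to $f$ affects neither $\df{STS}{f}$ nor $\Var_{\mu}(f)$, I would reduce to the case $\mu(f)=0$ and aim to prove
\[
\df{STS}{f} \;\ge\; \lambda(U)\,\Var_{\mu}(f).
\]

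The main step, and the key observation on which the whole argument rests, is the decomposition
\[
\df{STS}{f} \;=\; \df{T}{Sf} \;+\; \|(I-S)f\|_{\mu}^{2},
\]
which I would derive from the operator identity $I - STS = (I-S) + S(I-T)S$ (using $S^{2}=S$ to rewrite $S-STS = S^{2}-STS = S(I-T)S$) together with self-adjointness of $S$. Because $S$ is a self-adjoint idempotent it is an orthogonal projection in $L_{2}(\mu)$, so $\df{S}{f}=\|(I-S)f\|_{\mu}^{2}$ and $f$ splits orthogonally as $\Var_{\mu}(f) = \|Sf\|_{\mu}^{2} + \|(I-S)f\|_{\mu}^{2}$ (using $\mu(Sf)=\mu(f)=0$, by reversibility of $S$).

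The final step is to apply the variational formula for $\lambda(U)$ to the test function $g:=Sf$. The clean simplification here is that $\df{S}{Sf} = \inner{Sf}{(I-S)Sf}{\mu}=0$, forced again by $S^{2}=S$, so
\[
\lambda(U)\,\|Sf\|_{\mu}^{2} \;\le\; \df{U}{Sf} \;=\; (1-a)\,\df{T}{Sf},
\]
which for $a\in[0,1)$ gives $\df{T}{Sf}\ge \lambda(U)\|Sf\|_{\mu}^{2}$. Substituting into the decomposition above and using $\lambda(U)\le 1$ would yield
\[
\df{STS}{f} \;\ge\; \lambda(U)\|Sf\|_{\mu}^{2} + \|(I-S)f\|_{\mu}^{2} \;\ge\; \lambda(U)\bigl(\|Sf\|_{\mu}^{2} + \|(I-S)f\|_{\mu}^{2}\bigr) \;=\; \lambda(U)\,\Var_{\mu}(f),
\]
which is the desired bound. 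I do not anticipate any serious obstacle beyond spotting the orthogonal-projection structure of $S$; the boundary case $a=1$ will need a separate (trivial) line, since the estimate on $\df{T}{Sf}$ degenerates there, but in that case $U=S$ is a projection with $\lambda(S)\in\{0,1\}$ and both possibilities can be checked directly (when $\lambda(S)=1$, $S$ projects onto constants and $STS=S$).
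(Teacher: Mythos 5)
Your proof is correct, but it is organized differently from the paper's. You work on the Dirichlet-form side: the identity $I-STS=(I-S)+S(I-T)S$ (valid since $S^2=S$) gives the decomposition $\mathcal{E}_{STS}(f,f)=\mathcal{E}_T(Sf,Sf)+\|(I-S)f\|_\mu^2$, you split $\Var_\mu(f)=\|Sf\|_\mu^2+\|(I-S)f\|_\mu^2$ using that $S$ is an orthogonal projection in $L_2(\mu)$, and you bound $\mathcal{E}_T(Sf,Sf)$ by testing the variational formula for $\lambda(aS+(1-a)T)$ with $Sf$, where $\mathcal{E}_S(Sf,Sf)=0$ kills the $S$-part; all of these steps check out (including the trivial case $Sf=0$ and the fact $\lambda(U)\le 1$ for a PSD stochastic $U$). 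The paper instead argues on the eigenvalue side: it takes $g$ a normalized eigenfunction of $P=STS$ achieving $\lambda_2(P)$, uses $Sg$ as a test function in the variational characterization $\lambda_2(Q)\ge\inner{g}{SQSg}{\mu}$ for $Q=aS+(1-a)T$, and concludes via the single contraction estimate $\inner{g}{Pg}{\mu}=\inner{Sg}{TSg}{\mu}\le\inner{g}{Sg}{\mu}$; the same two ingredients (idempotence of $S$ and testing with $Sg$) appear, but there is no Dirichlet-form decomposition and no case split, since the convex-combination step $a\inner{g}{Sg}{\mu}+(1-a)\inner{g}{Pg}{\mu}\ge\inner{g}{Pg}{\mu}$ handles $a=1$ uniformly. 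What your route buys is a slightly stronger quantitative statement: your chain of inequalities actually yields $\lambda(STS)\ge\min\bigl\{1,\tfrac{\lambda(aS+(1-a)T)}{1-a}\bigr\}$ for $a<1$, which at $a=1/2$ (the case used for Theorem \ref{thm:ss:eo}, with $U=\mathcal{B}_{\rm eo}$) improves the constant by a factor of $2$; the price is the separate, if trivial, treatment of $a=1$, which you dispose of correctly.
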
 

\begin{proof}[Proof of Theorem \ref{thm:ss:eo}]
	Since, by Fact \ref{fact:var-heat-bath-positive},
	$K_{V_{\rm e}}$ and $K_{V_{\rm o}}$ are positive semidefinite matrices,
	and $K_{V_{\rm e}}$ is idempotent, it follows from Lemma \ref{lemma:general:eo} that
	$$\lambda(S(EOE)) \ge \lambda\left(\frac{K_{V_{\rm e}}+K_{V_{\rm o}}}{2}\right) = \lambda(\mathcal{B}_{\rm eo}),$$
	where $\mathcal{B}_{\rm eo}$ is the block dynamics considered in Section \ref{section:partition}.
	From Lemma \ref{lemma:block:even-odd} we know that $\lambda(\mathcal{B}_{\rm eo} )= \Omega(1)$ whenever SSM holds, and thus the result follows.
\end{proof}	 

\begin{proof}[Proof of Lemma \ref{lemma:general:eo}]
	Let $P=STS$. For any $f \in \R^{|\Omega|}$
	$$\inner{f}{Pf}{\mu} = \inner{f}{STSf}{\mu} = \inner{Sf}{TSf}{\mu} \ge 0,$$
	since by assumption $T$ is positive semidefinite. Hence, $P$ is positive semidefinite and $\lambda(P) = 1- \lambda_2(P)$, where $\lambda_2(P)$ is the maximal eigenvalue of $P$ different from $1$.
	By the variational principle (see (\ref{eq:sw:gap}))
	$$
	\lambda_2(P) = \max_{f:\; \mu(f)=0,\, \|f\|\leq 1}\inner{f}{Pf}{\mu},
	$$
	where $\mu(f)=\sum_{\sigma \in \Omega} f(\sigma) \mu(\sigma)$ and $\|f\|^2=\inner{f}{f}{\mu}$.
	
	Let $Q = aS+(1-a)T$ and
	let $g \in \R^{|\Omega|}$ be such that $\mu(g)=0$ and $\|g\|=1$. 
	Then
	\begin{equation}\label{boqa}
	\lambda_2(Q) =\max_{f:\; \mu(f)=0, \|f\|\leq 1}\inner{f}{Qf}{\mu}\geq  
	\inner{Sg}{QSg}{\mu} =\inner{g}{SQSg}{\mu},
	\end{equation}
	where 
	the inequality follows from the fact 
	that any $g$ with $\mu(g)=0$ and $\|g\| = 1$ satisfies 
	\begin{align*}
	&\mu(Sg)=\inner{\vec{1}}{Sg}{\mu}=\inner{S\cdot\vec{1}}{g}{\mu}=\mu(g)=0,~\textrm{and}\\
	 & \|Sg\|^2=\inner{g}{S^2g}{\mu}\leq 1
	  \end{align*}
	  by (\ref{eq:sw:contraction}).
	On the other hand, 
	\begin{equation*}
	\label{boq}
	\inner{f}{Pf}{\mu} = \inner{Sf}{TSf}{\mu}\leq \inner{Sf}{Sf}{\mu} =  \inner{f}{Sf}{\mu},
	\end{equation*}
	where the inequality follows from (\ref{eq:sw:contraction}).
	Hence, 
	$$
	\inner{g}{SQSg}{\mu} = \inner{g}{aS^3+(1-a)Pg}{\mu} =
	a\inner{g}{Sg}{\mu} + (1-a) \inner{g}{Pg}{\mu} \geq   \inner{g}{Pg}{\mu}. 
	$$
	Therefore, taking $g$ as a normalized eigenfunction corresponding to $\lambda_2(P)$ we get
	$$
	\lambda_2(Q) \geq \lambda_2(P).
	$$ 
	This proves $\lambda(STS)\geq\lambda(aS+(1-a)T)$, as desired. 
\end{proof}

\begin{remark}
	A reverse inequality for (\ref{eq:ineq-general}) also holds. 
	Indeed, using the same argument as in the proof of the estimate (\ref{eq:lazy-bound}) one has, for all $a\in(0,1)$:
	$$\lambda(STS)\leq \frac{3}{a^2(1-a)} \lambda(aS + (1-a)T).$$
\end{remark}

We provide next the proofs of
Theorems \ref{thm:ss:monotone} and \ref{thm:ss:monotone:no-propagation}.

\begin{proof}[Proof of Theorem \ref{thm:ss:monotone}]
	
	Let $\sigma^{(1)}, \dots, \sigma^{(k)}$ be spin configurations such that $\sigma^{(1)} \succeq \dots \succeq \sigma^{(k)}$.
	For any $v \in V$,
	the monotonicity of the system implies 
	that there exists a monotone coupling for updating $v$ simultaneously in $\sigma^{(1)}, \dots, \sigma^{(k)}$
	such that the resulting configurations, denoted $\sigma^{(1)}_v, \dots, \sigma^{(k)}_v$, satisfy $\sigma^{(1)}_v \succeq \dots \succeq \sigma^{(k)}_v$. 
	These local couplings
	can be straightforwardly extended to a monotone coupling for the steps of any number of copies $\GD(\Ord)$.
	Indeed,
	let $\{X_t^{(1)}\},\dots, \{X_t^{(k)}\}$ be $k$ copies of $\GD(\Ord)$ and suppose $X_t^{(1)} \succeq \dots \succeq X_t^{(k)}$.	
	If the local monotone couplings are used to update each vertex $v \in V$ in $X_t^{(1)},\dots,X_t^{(k)}$, sequentially in the order specified by $\Ord$, 
	then $X_{t+1}^{(1)} \succeq \dots \succeq X_{t+1}^{(k)}$.
	
    We bound the coupling time of the monotone coupling for two instances $\{X_t\}$ and $\{Y_t\}$ of $\GD(\Ord)$.
	Since in monotone systems there are unique maximal and minimal configurations in the partial order, it is sufficient to analyze the coupling time starting from these extremal configurations. Thus,
	suppose that $X_0$ and $Y_0$ are the maximal and minimal configurations, respectively, and let
	$\Tcoup$
	be the coupling time of the monotone coupling starting from these two configurations.
	
	We show that $\Tcoup \le T=c \log n (\log \log n)^2$ for a suitable constant $c>0$. 
	This implies that the mixing time of $\GD(\Ord)$ is $O(\log n (\log \log n)^2)$, as claimed.
	The proof is inductive. For the base case of the induction, observe that if $|V| \le n_0$, where $n_0 \ge 0$ is a large constant we choose later, then  
	we can choose $c = c(n_0)$ large enough such that for any boundary condition on $\boundary V$
	the coupling time bound holds. 
	This is a consequence of the irreducibility of $\GD(\Ord)$ which follows from the assumption that the Glauber dynamics is irreducible; see Section \ref{subsection:prelim:markov-chains}.
	
	Let us assume now inductively that
	for all $d$-dimensional cubes $V' \subset \Z^d$ such that $|V'| \le (4a^{-1}\log n)^d$ (where $a$ is the constant in the definition of SSM),
	any boundary condition on $\boundary V'$ 
	and any ordering $\Ord'$ of the vertices of $V'$ 
	we have that the coupling time of the monotone coupling
	in the subgraph induced by $V'$
	(w.r.t.\ ordering $\Ord'$)
	is at most $c \log |V'| (\log \log |V'|)^2$.
	
	We show that, for all $v \in V$, after $T = c \log n (\log \log n)^2$ steps of the monotone coupling, we have 
	\begin{equation}
	\label{eq:local-bound}
	\Pr[X_T (v) \neq Y_T(v)] \le \frac{1}{4n}.
	\end{equation}
	A union bound over the vertices implies that $\Tcoup \le T$.
	We introduce some notation first.
	
	For $v \in V$ and $\ell > 0$, 
	let $B_v(\ell) \subset V$ 	
	be the intersection of $V$ with the $d$-dimensional cube of $\Z^d$ of side length $2\ell+1$ centered at $v$.
	Let $r = 2a^{-1}\log n$, $B_v = B_v(r)$ and $B^c_v =  V \setminus B_v$.
	
	For each $v\in V$ we consider
	four additional copies of $\GD(\Ord)$: $\{W_t\}$, $\{W_t^\mu\}$, $\{Z_t\}$ and $\{Z_t^\mu\}$.
	These four chains ignore all the updates outside of $B_v$ and their steps are coupled with those of $\{X_t\}$ and $\{Y_t\}$.
	More precisely, for each $u \in V$ (in the order specified by $\Ord$),
	if $u \in B_v$ then the local monotone coupling is used to update the configurations in  $W_t(u)$, $W_t^\mu(u)$, $Z_t(u)$, $Z_t^\mu(u)$, $X_t(u)$ and $Y_t(u)$. Otherwise, if $u \not\in B_v$, the local monotone coupling is used only to update $X_t(u)$ and $Y_t(u)$ and $W_t(u)$, $W_t^\mu(u)$, $Z_t(u)$ and $Z_t^\mu(u)$ are not updated.
	
	We specify next the initial configuration of these chains. 
	We set
	$W_0 = X_0$, 
	$Z_0 = Y_0$,
	$W_0^\mu(B_v^c) = X_0(B_v^c)$ and
	$Z_0^\mu(B_v^c) = Y_0(B_v^c)$.
	To define the configurations of $W_0^\mu$ and $Z_0^\mu$ in $B_v$,
	let $\phi_{\textsc{w}}$ and $\phi_{\textsc{z}}$ be the stationary measures of $\{W_t\}$ and $\{Z_t\}$, respectively.
	These are the distributions induced in $B_v$
	by the configurations in $W_0^\mu(B_v^c)$ and $Z_0^\mu(B_v^c)$, respectively, and possibly the boundary condition $\psi$ on $\boundary V$.
	The configurations in $W_0^\mu(B_v)$ and $Z_0^\mu(B_v)$ are sampled independently from $\phi_{\textsc{w}}$ and $\phi_{\textsc{z}}$, respectively.
	
	Our choice of initial configurations
	and the monotonicity of the coupling imply that
	$W_t \succeq X_t \succeq Y_t \succeq Z_t$
	for all $t \ge 0$. Hence, 
	\begin{align}
	\Pr[X_T(v) \neq Y_T(v)] 
	&\le 
	\Pr[W_T(v) \neq Z_T(v)] \notag\\
	&\le \Pr[W_T(v) \neq W_T^\mu(v)]
	+\Pr[W_T^\mu(v) \neq Z_T^\mu(v)]
	+\Pr[Z_T^\mu(v) \neq Z_T(v)] \label{eq:union-monotone},
	\end{align}
	where the second inequality follows from a union bound. We bound the first and third terms in the right-hand side of (\ref{eq:union-monotone}) using the inductive hypothesis.
	The bound for the inner term follows from SSM.
	
	The chains $\{W_t\}$, $\{Z_t\}$, $\{W_t^\mu\}$ and $\{Z_t^\mu\}$
	are systematic scan dynamics on $B_v$ w.r.t.\ the ordering $\Ord(B_v)$ that $\Ord$ induces on the vertices of $B_v$.
	Since $|B_v| \le (2r)^d = (4a^{-1}(\log n))^d$, for $n_0$ sufficiently large and $n \ge n_0$ 
	\[c \log n (\log \log n)^2 \ge c \log |B_v| (\log \log |B_v|)^2  \log_4 (12 n).\]
	So, the inductive hypothesis and (\ref{eq:prelim:coup-boost}) imply that $\Pr[W_T(v) \neq W_T^\mu(v)] \le 1/(12 n)$. The same bound for $\Pr[Z_T^\mu(v) \neq Z_T(v)]$ can be deduced analogously.
	
	To bound the probability that $W_T^\mu(v) \neq Z_T^\mu(v)$, i.e., the inner term of (\ref{eq:union-monotone}),
	let us assume without of generality that the linear ordering
	on the spins is $q \succeq q-1 \succeq \dots \succeq 1$.
	Since,
	$W_t^\mu(v) \succeq Z_t^\mu(v)$ for all $t \ge 0$, then
	$W_t^\mu(v) \ge Z_t^\mu(v)$.	
	Moreover, the configurations in $W_t^\mu(B_v)$ and $Z_t^\mu(B_v)$ are distribtued according to $\phi_{\textsc{w}}$ and $\phi_{\textsc{z}}$, respectively, for all $t \ge 0$.
	Therefore,
	\begin{align}
	\Pr[W_T^\mu(v) \neq Z_T^\mu(v)] 
	&\le \Exp[W_T^\mu(v) - Z_T^\mu(v)] 
	\le (q-1) {\|\phi_{\textsc{w},v}-\phi_{\textsc{z},v}\|}_{\textsc{tv}} \notag,
	\end{align}
	where $\phi_{\textsc{w},v}$ and $\phi_{\textsc{z},v}$ are
	the distributions induced in $\{v\}$ by $\phi_{\textsc{w}}$ and $\phi_{\textsc{z}}$, respectively. Hence, SSM and a union bound over the boundary of $B_v$ imply that 
	\[\Pr[W_T^\mu(v) \neq Z_T^\mu(v)] \le (q-1)b |\boundary B_v| \exp(-a r) 
	\le \frac{2(q-1)bd(4a^{-1}\log n)^{d-1}}{n^2} 
	\le \frac{1}{12 n},\]
	where in the second inequality we used that $|\boundary B_v| \le 2d(2r)^{d-1}$ and $r=2a^{-1}\log n$ and the last one
	holds for all $n \ge n_0$ and $n_0$ large enough.	
	Putting all these bounds together we get (\ref{eq:local-bound}).
	A union bound over the vertices implies that $\Pr[X_T \neq Y_T] \le 1/4$. Consequently, $\Tcoup \le T = c \log n (\log \log n)^2$ and the mixing time of $\GD(\Ord)$ is $O(\log n (\log \log n)^2)$.
\end{proof}

\begin{proof}[Proof of Theorem \ref{thm:ss:monotone:no-propagation}]
	
	Let $\{X_t\}$, $\{Y_t\}$ be two copies of $\GD(\Ord)$ such that $X_0$ and $Y_0$ are the unique maximal and minimal configurations of the partial order, respectively. 
	We couple these two realizations of $\GD(\Ord)$ with the monotone coupling
	described at the beginning of the proof of Theorem \ref{thm:ss:monotone},
	where we established that the coupling time of this monotone coupling 
	in a $d$-dimensional cube $V$ with an arbitrary boundary condition $\psi$
	is at most $c \log |V| (\log \log |V|)^2$.
	
	Let 
	\[\rho(t) = \max_{v \in V} \Pr[X_t(v) \neq Y_t(v)].\]
	We show that $\rho(T) \le 1/n^2$ for some $T=O(\log n)$. A union bound over the vertices then implies that $\Pr[X_T \neq Y_T] \le 1/n$, and thus 
	$\taumix(\GD(\Ord),1/n) = O(\log n)$.
	Consequently, 
	the
	mixing time of $\GD(\Ord)$ is at most $T=O(\log n)$
	and its
	relaxation time  is $O(1)$
    by (\ref{prelim:eq:gap-lower}).
	
	To bound $\rho$ we establish a recurrence relation.
	Prior to this,
	we show that after $t_0 = \lceil (\log n_0)^2 \log_4 n_0 \rceil$ steps $\rho(t_0) \le 1/n_0$, where $n_0$ is sufficiently large constant. 
	This will provide a stopping point for our recurrence for $\rho$.
	
	As before,
	for $v \in V$ and $\ell > 0$, 
	let $B_v(\ell) \subset V$ 	
	be the intersection of the $d$-dimensional cube of side length $2\ell+1$ centered at $v$ with $V$.
	Let $r = \lfloor n_0^{1/d}/2 \rfloor$ and let $B_v=B_v(r)$. Let $\{W_t\}$ and $\{Z_t\}$ 
	be two auxiliary copies of $\GD(\Ord)$ such that $X_0(B_v) = W_0(B_v)$ and $Y_0(B_v) = Z_0(B_v)$.
	In $B_v^c = V \setminus B_v$,
	$W_0$ and $Z_0$
	have the same fixed configuration; this configuration can be any valid configuration
	provided $W_0(B_v^c) = Z_0(B_v^c)$. 
	
	These four copies of the chain are coupled with the monotone coupling, but $\{W_t\}$ and $\{Z_t\}$ ignore all the updates outside of~$B_v$. 
	That is, for each $u \in V$ (in the order specified by $\Ord$),
	if $u \in B_v$ then the local monotone coupling is used to update the spins of  $W_t(u)$, $Z_t(u)$, $X_t(u)$ and $Y_t(u)$. 
	Otherwise, if $u \not\in B_v$, the local monotone coupling is used only to update $X_t(u)$ and $Y_t(u)$ and $W_t(u)$, $Z_t(u)$ are not updated.
	A union bound implies
	\begin{align}
	\Pr[X_{t_0}(v) \neq Y_{t_0}(v)] ~\le~ 
	\Pr[X_{t_0}(v) \neq W_{t_0}(v)]
	+\Pr[W_{t_0}(v) \neq Z_{t_0}(v)] 
	+\Pr[Z_{t_0}(v) \neq Y_{t_0}(v)]. \notag
	\end{align}
	Let $l = \mathcal{L}(\Ord)$ and
	observe that $r = \lfloor n_0^{1/d}/2 \rfloor >  t_0 l$
	for sufficiently large $n_0$. Thus, $\Pr[X_{t_0}(v) \neq W_{t_0}(v)] = 0$ and $\Pr[Z_{t_0}(v) \neq Y_{t_0}(v)] = 0$, since it is impossible for disagreements to propagate from $\boundary B$ to $v$. 
	Hence, 
	\begin{align}
	\Pr[X_{t_0}(v) \neq Y_{t_0}(v)] ~\le~ 
     \Pr[W_{t_0}(v) \neq Z_{t_0}(v)]. \notag
	\end{align}
	Now,
	let $\Ord(B_v)$ be 
	the ordering induced on $B_v$ by $\Ord$.
	Since 
	$|B_v| \le (2r)^d \le n_0$, the coupling time of the monotone coupling
	for the systematic scan chain on $B_v$ (w.r.t.\ $\Ord(B_v)$)
	is at most $(\log n_0)^2$, provided $n_0$ is sufficiently large (see proof of Theorem \ref{thm:ss:monotone}).
	Hence, since $t_0 = \lceil (\log n_0)^2 \log_4 n_0 \rceil$, (\ref{eq:prelim:coup-boost}) implies that
	$\Pr[W_{t_0}(v) \neq Z_{t_0}(v)] \le 1/n_0$
	and so 
	\begin{equation}
	\label{eq:sp}
	\rho(t_0) \le \frac{1}{n_0}.
	\end{equation}
	
	We establish next our recurrence for $\rho$. We prove that 
	\begin{equation}
	\label{eq:ss:recurrence}
	\rho(2t) \le (4tl)^d \rho(t)^2
	\end{equation}
	for all $t = o((\log n)^2)$. Let $\mathcal{A}$ be the event that $X_t(B_v(2tl)) \neq Y_t(B_v(2tl))$. (The restriction that $t = o((\log n)^2)$ is to ensure that $2tl \ll n$ and avoid unnecessary complications.) Then,
	\[\Pr[X_{2t}(v) \neq Y_{2t}(v)] \le \Pr[X_{2t}(v) \neq Y_{2t}(v) | \mathcal{A}]\Pr[\mathcal{A}] + \Pr[X_{2t}(v) \neq Y_{2t}(v) | \neg \mathcal{A}].\]
	Observe that $\Pr[X_{2t}(v) \neq Y_{2t}(v) | \mathcal{A}] \le \rho(t)$,
	since $\rho(t)$ is the maximum probability of disagreement at any vertex
	assuming the worst possible  pair of staring configurations. Moreover,
	$\Pr[\mathcal{A}] \le |B_v(2tl)| \rho(t)$ by a union bound and $\Pr[X_{2t}(v) \neq Y_{2t}(v) | \neg \mathcal{A}] = 0$ since disagreements can only propagate a distance of at most $tl$ in $t$ steps. 
	Hence, for all $v \in V$,
	\[\Pr[X_{2t}(v) \neq Y_{2t}(v)] \le (4tl)^d \rho(t)^2,\]
	and (\ref{eq:ss:recurrence}) follows.
	
	Finally, we use this recurrence together with the stopping point in (\ref{eq:sp}) to show that $\rho(T) \le 1/n^2$ for some $T = O(\log n)$.
	Let $\phi(t) = (8tl)^d \rho(t)$. Then,
	$\phi(2t) \le \phi(t)^2$, and so for $T = 2^\alpha t_0$ we get
	\[\rho(T) \le \phi(T) \le \phi(t_0)^{T/t_0}.\]
	Since $\phi(t_0) = (8t_0l)^d \rho(t_0) \le (8t_0l)^d/n_0$ and $t_0 = \lceil (\log n_0)^2 \log_4 n_0 \rceil$, for large enough $n_0$ we have $\phi(t_0) \le 1/{\e}$ and thus $\rho(T) \le {\e}^{-T/t_0}$. Taking $T=O(\log n)$ (i.e., $\alpha = O(\log \log n)$), we get $\rho(T) \le 1/n^2$ as desired.
\end{proof}

\bibliographystyle{plain}
\bibliography{ssm_sw}

\pagebreak
\appendix

\end{document}